\documentclass[11pt]{article}
\RequirePackage{algorithm,algorithmic,graphicx,amssymb,epsf,epic,url,complexity}
\RequirePackage{fullpage}
\usepackage{mdframed}
\usepackage{microtype, complexity}
\usepackage[usenames,dvipsnames]{xcolor}
\usepackage{amsthm}
\usepackage{amsmath}
\usepackage{upref}
\usepackage{amsfonts}
\usepackage{graphicx}
\usepackage{latexsym}
\usepackage{amssymb}
\usepackage{amscd}
\usepackage{mdframed}
\usepackage[all]{xy}
\usepackage{mathrsfs}
\usepackage{amsfonts}
\usepackage{epsfig}
\usepackage{epstopdf}
\usepackage{tcolorbox}
\newcommand {\ignore} [1] {}
\def\eps{\varepsilon}
\newcommand{\e}{{\varepsilon}}

\def\cM{{\cal M}}

\def\cS{{\cal S}}
\def\cC{{\cal C}}

\newcommand{\F}{{\mathcal{F}}}
%\def\MathP{\hbox{\rm I\kern-2pt P}}
%\newcommand{\Prob}{\MathP}
%\def\nin{{~\not \in~}}
%\ignore{

\def\cA{{\cal A}}
\def\cG{{\cal G}}
\def\cB{{\cal B}}

\newcommand{\NULL}{\textsc{null}}

%%%%%%%%%%%%%%%%%%%%%%%%%%%%%%%%%%%%%%%%%%%%%%%%%%%%%%%%%%%%
%
%   PAGE LENGTH PARAMETERS
%
%%%%%%%%%%%%%%%%%%%%%%%%%%%%%%%%%%%%%%%%%%%%%%%%%%%%%%%%%%%%
%
% Springer Verlag format: depends on font size. For 12pt:
% 9.2 x 6.3 inches, which is roughly 23 x 15.8 cm
% Latest - in cm, and slightly changed: 23.8 x 15.4cm (narrower).
%
% MENTION: REDUCE TO 83%
%
%Margins adjusted properly

%%%%%%%%%%%%%%%%%%%%%%%%%%%%%%%%%%%%%%
% Springer Verlag format: depends on font size. For 10pt:
%
% PROBLEM - wisdom does not have 10pt.sty
%
% For 10pt they wanted pages of 5.3 x 7.9 inches.
% But apparently they changed that: Now, they specify things in cm,
% and require 12.8 x 19.8 cm (which is 5.1 x 7.9 inches) (narrower).
%Margins not adjusted properly

%%%%%%%%%%%%%%%%%%%%%%%%%%%%%%%%%%%%%%
% Dense format
\def\denseformat{
\setlength{\textheight}{9.2in}
\setlength{\textwidth}{6.9in}
\setlength{\evensidemargin}{-0.2in}
\setlength{\oddsidemargin}{-0.2in}
\setlength{\headsep}{10pt}
\setlength{\topmargin}{-0.3in}
\setlength{\columnsep}{0.375in}
\setlength{\itemsep}{0pt}
}

%\def\denseformat{
%\setlength{\textheight}{9.5in}
%\setlength{\textwidth}{6.9in}
%\setlength{\evensidemargin}{-0.3in}
%\setlength{\oddsidemargin}{-0.3in}
%\setlength{\headsep}{10pt}
%\setlength{\topmargin}{-0.44in}
%\setlength{\columnsep}{0.375in}
%\setlength{\itemsep}{0pt}
%}
%%%%%%%%%%%%%%%%%%%%%%%%%%%%%%%%%%%%%%
% New Dense format

%%%%%%%%%%%%%%%%%%%%%%%%%%%%%%%%%%%%%%
% mid format

%%%%%%%%%%%%%%%%%%%%%%%%%%%%%%%%%%%%%%
% mid-spacy format

%%%%%%%%%%%%%%%%%%%%%%%%%%%%%%%%%%%%%%
% Spacy format

%%%%%%%%%%%%%%%%%%%%%%%%%%%%%%%%%%%%%%
% Super spacy format

%%%%%%%%%%%%%%%%%%%%%%%%%%%%%%%%%%%%%%%%%%%%%%%%%%%%%%%%%%%%
%
% DEFINING THEOREM-LIKE ENVIRONMENTS
%
%%%%%%%%%%%%%%%%%%%%%%%%%%%%%%%%%%%%%%%%%%%%%%%%%%%%%%%%%%%%
%
% Examine sometime.
% Here: the first, theorem, is defined [section],
%       and the rest are defined [theorem].
%
% Possible: define all as [section].
%
% Better, if I copy just a few lines.
%
\newtheorem{theorem}{Theorem}[section]

\newtheorem{lemma}[theorem]{Lemma}

\newtheorem{remark}[theorem]{Remark}

%%%%%%%%%%%%%%%%%%%%%%%%%%%%%%%%%%%%%%%%%%%%%%%%%%%%%%%%%%%

%%%%%%%%%%%%%%%%%%%%%%%%%%%%%%%%%%%%%%%%%%%%%%%%%%%%%%%%%%%%
%
% MY MACROS
%
%%%%%%%%%%%%%%%%%%%%%%%%%%%%%%%%%%%%%%%%%%%%%%%%%%%%%%%%%%%%

\def\boldhead#1:{\par\vskip 7pt\noindent{\bf #1:}\hskip 10pt}
\def\ithead#1:{\par\vskip 7pt\noindent{\it #1:}\hskip 10pt}

\def\inline#1:{\par\vskip 7pt\noindent{\bf #1:}\hskip 10pt}
\def\midinline#1:{\par\noindent{\bf #1:}\hskip 10pt}
\def\dnsinline#1:{\par\vskip -7pt\noindent{\bf #1:}\hskip 10pt}
\def\ddnsinline#1:{\newline{\bf #1:}\hskip 10pt}
\def\largeinline#1:{\par\vskip 7pt\noindent{\large\bf #1:}\hskip 10pt}
%
%The next command is essentially equivalent to \section*,
%except smaller font

%%%%%%%%%%%%%%%%%%%%%%%%%%%%%%%%%%
%\long\def\comment #1\commentend{}
\long\def\commhide #1\commhideend{}
%\long\def\commfull #1\commend{#1}
%\long\def\commabs #1\commenda{}
\long\def\commtim #1\commendt{#1}
\long\def\commb #1\commbend{}
%
%% FOR LN:
%
\long\def\commedit #1\commeditend{} % Editing comments, marked also by $>>>$

\long\def\commB #1\commBend{}       % Omit in 1996 (both TR and Siena)
                                    % Leave for phase B.

\long\def\commex #1\commexend{}     % LN home exercise (hide solutions)

\long\def\commsiena #1\commsienaend{}  % omit in Siena, show in TR

\long\def\commBI #1\commBIend{}  % omit in Bar-Ilan

%%%%%%%%%%%%%%%%%%%%%%%%%%%%%%%%%%

\long\def\CProof #1\CQED{}

\def\blackslug{\hbox{\hskip 1pt \vrule width 4pt height 8pt
    depth 1.5pt \hskip 1pt}}
\def\QED{\quad\blackslug\lower 8.5pt\null\par}
% In-line QED, for ending a proof with a $$ formula

%\def\Proof{\noindent{\bf Proof:~}}

\long\def\PPP#1{\noindent{\bf Proof:}{ #1}{\quad\blackslug\lower 8.5pt\null}}

\long\def\denspar #1\densend
{#1}
%{{\renewcommand{\baselinestretch}{0.8}\small #1\par\medskip}}

%%%%%%%%%%%%%%%%%%%%%%%%%%%%%%%%%%%%%%%%%%%%%%%%%%%
%
%  DELIMITER PAIRS AND MATHEMATICAL FUNCTIONS
%
%%%%%%%%%%%%%%%%%%%%%%%%%%%%%%%%%%%%%%%%%%%%%%%%%%%
% \newcommand{\attention}[1]{{\bf \textcolor{red}{\# #1 \#}} \\}

%\newcommand{\prob}[1]{\Pr\left\{ #1 \right\}}

%%%%%%%%%%%%%%%%%%%%%%%%%%%%%%%%%%%%%%%%%%%%%%%%%%%%%%%
%
%  MARGINAL NOTES FOR COMMUNICATING WITH COAUTHORS
%
%%%%%%%%%%%%%%%%%%%%%%%%%%%%%%%%%%%%%%%%%%%%%%%%%%%%%%%

\setlength{\marginparwidth}{1in}
\setlength{\marginparpush}{-5ex}
\newif\ifnotesw\noteswtrue% T to show box & marginal notes; F supresses.
   {\ifnotesw\marginpar[\hfill\(\top\)]{\(\top\)}\fi}%
   {\ifnotesw\marginpar[\hfill\(\bot\)]{\(\bot\)}\fi}

\newcommand{\mnote}[1]%
    {\ifnotesw\marginpar%
        [{\scriptsize\it\begin{minipage}[t]{\marginparwidth}
        \raggedleft#1%
                        \end{minipage}}]%
        {\scriptsize\it\begin{minipage}[t]{\marginparwidth}
        \raggedright#1%
                        \end{minipage}}%
    \fi}

%%%%%%%%%%%%%%%%%%%%%%%%%%%%%%%%%%%%%%%%%%%%%%%%%%%%%%%%%%%%
%
% SPECIAL LETTERS
%
%%%%%%%%%%%%%%%%%%%%%%%%%%%%%%%%%%%%%%%%%%%%%%%%%%%%%%%%%%%%
\def\cA{{\cal A}}
\def\cB{{\cal B}}
\def\cC{{\cal C}}

\def\cF{{\cal F}}
\def\cG{{\cal G}}

\def\cM{{\cal M}}

\def\cS{{\cal S}}

%%%%%%%%%%%%%%%%%%

%\def\wmax{{\hat \omega}}
%%%%%%%%%%%%%%%%%%

      % seems that \bf is occupied

%%%%%%%%%%%%%%%%%%

%%%%%%%%%%%%%%%%%%%%%%%%%%%%%%%%%%%%%%%%%%%%%%%%%%%%%%%

%%%%%%%%%%%%%%%%%%%%%%%%%%%%%%%%%%%%%%%%%%%%%%%%%%%%%%%
%
% DOUBLING LETTERS (e.g. IE as E).
%
%%%%%%%%%%%%%%%%%%%%%%%%%%%%%%%%%%%%%%%%%%%%%%%%%%%%%%%
%
% For the Z I changes to san-serif, to avoid strange tags,
% and decreased the spacing (from -3 to -4).
%
\def\MathF{\hbox{\rm I\kern-2pt F}}
\def\MathP{\hbox{\rm I\kern-2pt P}}
\def\MathR{\hbox{\rm I\kern-2pt R}}
\def\MathZ{\hbox{\sf Z\kern-4pt Z}}
\def\MathN{\hbox{\rm I\kern-2pt I\kern-3.1pt N}}
\def\MathC{\hbox{\rm \kern0.7pt\raise0.8pt\hbox{\footnotesize I}
\kern-4.2pt C}}
\def\MathQ{\hbox{\rm I\kern-6pt Q}}

%% Fails:
%% \def\MathC{\hbox{\rm \kern0.7pt\raise0.8pt\hbox{\specialeightrm I}
%% \kern-4.2pt C}}

%%%%%%%%%%%%%%%%%%%%%%%%%%%%%%%%%%%%%%%%%%%%%%%%%%%%%%%%%%%%
% ANOTHER NICE FONT TO USE - for doubled letters

%%%%%%%%%%%%%%%%%%%%%%%%%%%%%%%%%%%%%%%%%%%%%%%%%%%%%%%%%%%%

%%%%%%%%%%%%%%%%%%%%%%%%%%%%%%%%%%%%%%%%%%%%%
%
% MATHEMATICAL NOTATION
%
%%%%%%%%%%%%%%%%%%%%%%%%%%%%%%%%%%%%%%%%%%%%%

% \tends  -  for limit notation (f(x)--->0)
% Avishai's version:::

% Malki's version:::
\newsavebox{\ttop}\newsavebox{\bbot}

%

% nice AMS TeX empty set

\def\eps{\epsilon}

\def\polylog{\mbox{polylog}}

%\def\polylog{\mbox{polylog}\,}
%\def\polylog{\mbox{\it polylog}\,}

%%%%%%%%%%%%%%%%%%%%%%%%%%%%%%%%%%%%%%%%%%%%%
%
% PROBABILITY TERMINOLOGY
%
%%%%%%%%%%%%%%%%%%%%%%%%%%%%%%%%%%%%%%%%%%%%%
%\newcommand{\Prob}[1]{\Pr\left( #1 \right)}

%\newcommand{\Set}[1]{\left\{ #1 \right\}}
%\def\Prob{\mbox{\tt Prob}}

%\def\Expect{\mbox{\bf E}}

%%%###########################################################

\denseformat
\newtheorem{mdresult}[theorem]{Theorem}
\newenvironment{Theorem}{\begin{mdframed}[backgroundcolor=lightgray!40,topline=false,rightline=false,leftline=false,bottomline=false,innertopmargin=2pt]\begin{mdresult}}{\end{mdresult}\end{mdframed}}

\newtheorem{question}{Question}
\newtheorem{mdresult2}[question]{Question}
\newenvironment{Question}{\begin{mdframed}[backgroundcolor=lightgray!40,topline=false,rightline=false,leftline=false,bottomline=false,innertopmargin=2pt]\begin{mdresult2}}{\end{mdresult2}\end{mdframed}}

\title{A Generalized Matching Reconfiguration Problem} %TODO Please add
\author{
Noam Solomon \thanks{Center for Mathematical Sciences and Applications, Harvard University, Cambridge, Massachusetts, USA. Email: \texttt{noam.solom@gmail.com}.}
\and Shay Solomon\thanks{School of Electrical Engineering, Tel Aviv University. Email: {\small \texttt{shayso@post.tau.ac.il.}}}}

\date{}
\begin{document}

\maketitle
\thispagestyle{empty}
%TODO mandatory: add short abstract of the document
\begin{abstract}
The goal in {\em reconfiguration problems} is to compute a {\em gradual transformation} between two feasible solutions of a problem such that all intermediate solutions are also feasible.
In the {\em Matching Reconfiguration Problem} (MRP), proposed in a pioneering work by Ito et al.\ from 2008, 
we are given a graph $G$   
and two matchings $M$ and $M'$,    
and we are asked whether there is a sequence of matchings in $G$ starting with $M$ and ending at $M'$, 
each resulting from the previous one by either adding or deleting a single edge in $G$, 
without ever going through a matching of size $< \min\{|M|,|M'|\}-1$.
Ito et al.\ gave a polynomial time algorithm for the problem, which uses the Edmonds-Gallai decomposition.

In this paper we introduce a natural generalization of the MRP that depends on   
an integer parameter $\Delta \ge 1$:   
here we are allowed to make $\Delta$ changes to the current solution rather than 1 at each step of the {transformation procedure}.
There is always a valid sequence of matchings transforming $M$ to $M'$ if $\Delta$ is sufficiently large, and naturally we would like to minimize $\Delta$.
We first devise an optimal transformation procedure for unweighted matching with $\Delta = 3$, 
and then extend it to weighted matchings to achieve asymptotically optimal guarantees.
The running time of these procedures is linear.   

We further demonstrate the applicability of this generalized problem to dynamic graph matchings.
In this area, the number of changes to the maintained matching per update step (the \emph{recourse bound}) is an important quality measure.
Nevertheless, the \emph{worst-case} recourse bounds of almost all known dynamic matching algorithms are prohibitively large, much larger than the corresponding update times.
We fill in this gap via a surprisingly simple black-box reduction:
Any dynamic algorithm for maintaining a $\beta$-approximate maximum cardinality matching
with update time $T$, for any $\beta \ge 1$, $T$ and $\eps > 0$,
can be \emph{transformed} into an algorithm for maintaining a $(\beta(1 +\eps))$-approximate maximum cardinality matching with update time $T + O(1/\eps)$ and worst-case recourse bound $O(1/\eps)$.
This result generalizes for approximate maximum weight matching, where the update time and worst-case recourse bound grow from
$T + O(1/\eps)$ and $O(1/\eps)$ to $T + O(\psi/\eps)$ and $O(\psi/\eps)$, respectively; $\psi$ is the graph {\em aspect-ratio}.
We complement this positive result by showing that, for $\beta = 1+\eps$, the worst-case recourse bound of any algorithm produced by our reduction is optimal.
As a corollary, several key dynamic approximate matching algorithms--- with 
poor worst-case recourse bounds---
are strengthened to achieve near-optimal worst-case recourse bounds with no loss in update time.
\end{abstract}

\clearpage
\setcounter{page}{1}
\section{Introduction}

The study of graph algorithms is mostly concerned with the measure of \emph{(static) runtime}.
Given a graph optimization problem, the standard objective is to design a fast (possibly approximation) algorithm, and ideally complement it with a matching lower bound on the runtime of any (approximation) algorithm
for solving the problem.
As an example, computing (from scratch) a 2-approximate minimum vertex cover (VC) can be done trivially in linear time,
%running time (hardness as well as algorithms) of \emph{static} computation,
whereas a better-than-2 approximation for the minimum VC cannot be computed in polynomial time under the unique games conjecture \cite{KR08}.

The current paper is motivated by a natural need arising in networks that are prone to temporary or permanent changes.
Such changes are sometimes part of the normal behavior of the network, as in {\em dynamic  networks},
but changes could also be the result of unpredictable failures of nodes and edges, particularly in \emph{faulty networks}.
Consider a large-scale network $G = (V,E,w)$ for which we need to solve, perhaps approximately, some graph optimization problem,
and the underlying solution (e.g., a maximum matching) is being used for some practical purpose (e.g., scheduling in packet switches) throughout a long time span. 
If the network changes over time, the quality of the used solution may degrade   until it is too poor to be used in practice
and it may even become infeasible. 
%(For matching, one can guarantee feasibility by simply deleting from the matching any edge that is deleted from the network.)

Instead of the standard objectives of optimization, the questions that arise here concern \emph{reoptimization}:
Can we ``efficiently'' transform one given solution (the \emph{source}) to another one  (the \emph{target}) under ``real-life constraints''?
%The efficiency of the transformation may be measured in terms of running time, but also in terms of the number of changes to the underlying solution.
The efficiency of the {\em transformation procedure} could be measured in terms of running time, but in some applications making even small changes to the currently used solution may incur huge costs,
possibly much higher than the runtime cost of computing (from scratch) a better solution;
we shall use ``procedure'' and ``process'' interchangeably.
%Imagine, e.g., a road network, where creating a new road or closing an old one is very costly.
In particular, this is often the case whenever the edges of the currently used solution are ``hard-wired'' in some physical sense, as in road networks.
%it might be more costly to change the used solution (even slightly) than it is to recompute a better one (even from scratch).
%For example, maybe the edges of the underlying solution have been ``paid for'', and each change requires paying
Various real-life constraints or objectives may be studied; the one we focus on in this work 
%coincides with that of {\em reconfiguration problems}
is that at any step (or every few steps) throughout the transformation process the current solution should be both feasible and of quality no worse (by much) than that of either the source or   target solutions.
This constraint is natural as it might be prohibitively expensive or even impossible
to carry out the transformation process \emph{instantaneously}.
%hence if we are to gradually simulate it over time, it's crucial that the transformed solution
%at essentially any point in time is a high quality solution.
Instead, the transformation can be broken into \emph{phases} each performing $\le \Delta$ changes to the transformed solution, where $\Delta \ge 1$ is some parameter,
%and we want
so that the solution obtained at the end of each phase--- to be used instead of the source solution--- is both feasible and of quality no (much) worse than either the source or target.
The transformed solution is to eventually coincide with the target solution.
%}  %, whereas previous work was mostly devoted to the complexity of computing.}
%The above constraint does not concern running time issues.
%(Such a transformation may not exist for some problems, as demonstrated in the sequel, even regardless of running time issues.)
%One may further require that the overall running time of the transformation process \emph{given the source and target solutions} will be low, ideally near-linear in the size of %the underlying graph and perhaps even in the (usually much smaller) size of the source and target solutions. This requirement

%Notice that for questions on reoptimization and reapproximation, all the known hardness results are interesting since it is unclear if the standard hardness results
The arising \emph{reoptimization} meta-problem generalizes the well-studied framework of {\emph reconfiguration problems}, which we discuss in Section 1.1. 
It is interesting from both practical and theoretical perspectives,
since %none of the known algorithms and hardness results apply to it directly, hence 
even the most basic and well-understood optimization problems become open in this setting.
E.g., for the VC problem,
%we mentioned
%However,
{\em given} a better-than-2 approximate target VC,
can we transform to it from any source VC subject to the above constraints?
This is an example for a problem that is computationally hard in the standard sense but might be easy from a reoptimization perspective.
In contrast, perhaps computationally easy problems, such as approximate maximum matching, are hard from a reoptimization perspective?
%A-priori it is possible that the answer to this question is positive (this isn't the case; see Section \ref{discuss}).  %, since the knows hardness results are inapplicable.

This meta-problem captures tension between (1) the {\em global} objective of transforming one global solution to another,
and (2) the  {\em local} objective of transforming \emph{gradually} while having a feasible and high quality solution throughout the process.
%,Although transforming between the source solution to the target solution is a global goal, the constraint imposed on the transformation process requires the usage of local %considerations. In other words, this reoptimization meta-problem
A similar tension is captured by various models of computation that involve locality,
including dynamic graph algorithms, distributed computing, property testing and local computation algorithms (LCA).
The study of the meta-problem presented here could borrow from these related research fields, but,   more importantly, 
we anticipate that it will also contribute to them; indeed, we present here an application of this meta-problem to dynamic graph algorithms.  %, and we anticipate that many more applications will be found.
%We thus believe that one could borrow from these related areas of research
%We thus anticipate that a further study of the meta-problem presented in this work would borrow from and also advance these related areas of research. %In light of the above, we anticipate that a thorough investigation of this meta-problem would be of both practical and theoretical importance.
%that ideas from these related computation models could be used for solving
%a systematic study of this meta-problem would lead to important findings
%If the source vertex cover is a 2-approximation, it is not immediately clear if transforming into say a 1.5-approximate cover $VC'$ can be done efficiently.
\vspace{7pt}
\\
\noindent
{\bf 1.1~ Graph Reconfiguration.~}
The framework of {\em reconfiguration problems} has been subject to growing interest in recent years.
The term {\em reconfiguration} was coined in the work of Ito et al.\ 
\cite{DBLP:journals/tcs/ItoDHPSUU11}, which unified earlier related problems and terminology 
(see, e.g.,  \cite{DBLP:conf/icalp/HearnD02,gopalan2009connectivity,bonsma2009finding})
into a single framework. 
The general goal is to compute a {\em transformation} between two feasible solutions of a problem such that all intermediate solutions are also feasible, where each pair of consecutive solutions need to be {\em adjacent} under a fixed polynomially testable symmetric adjacency relation on the set of feasible solutions.
Such a transformation arises naturally in many contexts, such as solving puzzles, motion planning, 
questions of evolvability (can genotype evolve into another one via individual ``adjacent'' mutations?), 
and similarity of DNA sequences in computational genomics and particularly gene editing, which is among the hottest scientific topics these days; 
 %(https://en.wikipedia.org/wiki/CRISPR_gene_editing), 
see the surveys of \cite{DBLP:books/cu/p/Heuvel13,DBLP:journals/algorithms/Nishimura18} for further details.
In most previous work, two solutions are called adjacent if their symmetric difference has size 1.
The most well-studied problem under this framework is graph matching.  %, which is the focus of our work. 
For brevity, we shall only discuss here papers on graph matching; see the surveys  \cite{DBLP:books/cu/p/Heuvel13,DBLP:journals/algorithms/Nishimura18} for discussions on other problems.

In the {\em Matching Reconfiguration Problem} (MRP), proposed in \cite{DBLP:journals/tcs/ItoDHPSUU11}, 
we are given a graph $G$ %a threshold $k$, 
and two matchings $M$ and $M'$,   % both of size at least $k$,
and we are asked whether there is a sequence of matchings in $G$ starting with $M$ and ending at $M'$, 
%we are given a graph $G$, a threshold $k$, and two matchings $M$ and $M'$ both of size at least $k$,
%and we are asked whether there is a sequence of matchings in $G$ starting with $M$ and ending at $M'$, 
each resulting from the previous one by either adding or deleting a {\em single edge} in $G$, 
without ever going through a matching of size $< \min\{|M|,|M'|\}-1$.
%\ignore{
Ito et al.\ gave a polynomial time algorithm for the problem, 
which uses the Edmonds-Gallai decomposition. In particular, in some cases such a transformation does not exist,
% and in others it exists, 
and much of the difficulty  is in the decision problem (decide if exists or not).  %: decide if it exists for the current instance or not.
% Edmondsׇallai decomposition,
The problem of generalizing this algorithm for weighted matchings was proposed as an open problem in \cite{DBLP:journals/tcs/ItoDHPSUU11}, and remained open to date, partially
since the algorithm of \cite{DBLP:journals/tcs/ItoDHPSUU11} already for unweighted matchings relies on a rather intricate decomposition.  % (but also because the problem was not formulated acc). 
The work of  \cite{DBLP:journals/tcs/ItoDHPSUU11} triggered interesting followups on MRP
\cite{DBLP:journals/tcs/KaminskiMM12,DBLP:journals/jco/ItoKKKO19,gupta2019complexity,DBLP:conf/esa/ItoKK0O19,DBLP:conf/mfcs/BonamyBHIKMMW19,DBLP:conf/wg/BousquetHIM19}. In all these followups, 
%none of these followups proposed generalize the problem by allowing 
the symmetric difference between two adjacent matchings is rather strict: it is fixed by either 1 or 2 in 
\cite{DBLP:journals/tcs/KaminskiMM12,DBLP:journals/jco/ItoKKKO19,gupta2019complexity,DBLP:conf/wg/BousquetHIM19}, 
whereas in the context of perfect matchings the symmmetric difference is an alternating cycle of length 4
\cite{DBLP:conf/mfcs/BonamyBHIKMMW19,DBLP:conf/esa/ItoKK0O19}. 
Perhaps since the symmetric difference in all the previous work is so strict, the goal was polynomial-time algorithms and hardness for the problem.
The natural generalization of parameterizing the symmetric difference by an arbitrary $\Delta, \Delta \ge 1$--- as in our reoptizmiation meta-problem--- was not studied in prior work.  
%, and this is exactly the motivation behind.
%}
\vspace{7pt}
\\
\noindent
{\bf 1.2~ Our contribution.~} 
We study two fundamental graph matching problems under the aforementioned meta-problem:
 (approximate) maximum cardinality matching (MCM) and maximum weight matching (MWM).
% and minimum spanning forest (MSF) under this framework.
%, as summarized in the following theorems.
Our meta-problem is, in fact, inherently different than the original MRP.
We are not interested in the decision version of the problem--- we take $\Delta$ to be large enough so that a  transformation %between any pair of matchings 
is   {\em guaranteed} to exist. Thus we shift the focus from per-instance optimization to {\em existential optimization}, and our goal is to optimize $\Delta$ so that any source matching can be transformed to any target matching
%``better'' (either in terms of size or weight) target matching 
by performing at most $\Delta$ changes per step, while never reaching a {\em much worse} matching than 
either the source or the target along the way.
By ``worse'' we mean either in terms of size or weight, and we must indeed do a bit worse in some cases even for large $\Delta$; the original MRP formulation for unweighted matchings allows to go down by 1 unit of size, and this slack is required also for large $\Delta$.
For weighted graphs, naturally, a bigger slack is required.
For both unweighted and weighted matchings, we provide transformation procedures
with near-optimal guarantees and linear running time.  
Our results are summarized next; the transformation for approximate MWM (Theorem~\ref{th:main}) is the most technically challenging part of this work.

%1 units for unweighted graphs was in the original  
%(Of course, if the target matching is worse than the original one, then the goal should be to never reach a worse matching than the target, but this variant makes less sense in terms of practical applications.)
%transforming to a better matching makes much more sense in terms of practical applications, hence we shall mostly focus on %it. 

% than that for MCM (Theorem~\ref{th:MCM}).
%\vspace{-8pt}
\begin {Theorem} [MCM]
\label{th:MCM}
For any source and target matchings $\cM$ and $\cM'$,
one can transform $\cM$ into (a possibly superset of) $\cM'$ via a sequence of phases consisting of $\le$ 3 operations each (i.e., $\Delta = 3$), such that the matching   at the end of each phase
throughout this transformation is a valid matching for $G$ of size $\ge \min\{|\cM|,|\cM'|-1\}$.
The runtime of this transformation procedure is  $O(|\cM| + |\cM'|)$.
\end {Theorem}

%[[S: we assume that $w(\cM) \le w(\cM')$; note 

%\vspace{-8pt}
\begin {Theorem} [MWM]
\label{th:main}
For any source and target matchings $\cM$ and $\cM'$ with $w(\cM') > w(\cM)$, and   any $\eps >0$, one can   transform $\cM$ into (a possibly superset of) $\cM'$ via a sequence of phases consisting of $O(\frac 1 \eps)$ operations each
(i.e., $\Delta = O(\frac 1 \eps)$), such that the matching obtained at the end of each phase
throughout this transformation   is a valid matching for $G$ of weight $\ge \max\{w(\cM)-W,(1-\eps)w(\cM)\}$, where $W = \max_{e \in \cM} w(e)$. 
The runtime of this transformation procedure is $O(|\cM| + |\cM'|)$.
\end {Theorem}
{\bf Remark.}
Theorem \ref{th:main} assumes  that $w(\cM') > w(\cM)$.
This assumption is made without loss of generality, since, if $w(\cM') \le w(\cM)$, we can apply a reversed transformation, so that the matching will always be of weight $\ge
\max\{w(\cM')-W',(1-\eps)w(\cM')\}$, where $W' = \max_{e \in \cM'} w(e)$.
\ignore{
\vspace{-8pt}
\begin {Theorem} [MSF]
\label{forest}
For any source and target spanning forests $\cF$ and $\cF'$,
one can gradually transform $\cF$ into $\cF'$ via a sequence of constant-time operations,
grouped into phases   of two operations each, such that the spanning forest obtained at the end of each phase
throughout this transformation process is a valid spanning forest for $G$ of weight at most $\max\{w(\cF),w(\cF')\}$.
%Moreover, the runtime of this transformation
Moreover, the runtime is  $O\bigl((|\F|+|\F'|)\log(|\F|+|\F'|)\bigr)$.
\end {Theorem}
}

In App.\ \ref{tightness}, we show that the guarantees provided by Theorems \ref{th:MCM} 
and \ref{th:main} are tight and asymptotically tight, respectively.
Although our   results may lead to the impression that there exists an efficient gradual transformation process to any  graph optimization problem,
we briefly discuss in App.\ \ref{discuss} two trivial hardness results for the minimum VC and maximum independent set problems. 
%\subsection
\vspace{7pt}
\\
\noindent
{\bf 1.2.1~ Application: A worst-case recourse bound for dynamic matching algorithms.}
In the standard \emph{fully dynamic} setting we start from an empty graph  $G_0$ on $n$ fixed vertices,
and at each time step $i$ a single edge $(u,v)$ is either inserted to the graph $G_{i-1}$ or deleted from it, resulting in graph $G_i$.
%While
In the {\em vertex update} setting we have vertex updates instead of   edge updates;
 %i.e., at each update step $i$ a single vertex $v$ is either inserted to the graph or deleted from it, along with all its adjacent edges;
this setting was mostly studied for bipartite graphs  \cite{BLSZ14,BLSZ15,BHR17}.
%\footnote{We shall focus on the standard setting in this paper, but our results extend seamlessly to the generalized setting where there are both edge and vertex insertions and deletions.}

%\noindent \vspace{10pt} \\
%In what follows we discuss previous work, put Theorem \ref{main} in context with it, and discuss some of its implications.
%\subsection{
%{\bf 1.1.1~ Dynamic approximate maximum cardinality matching.}
%In what follows we focus mostly on dynamic algorithms for approximate maximum cardinality matching, so as to highlight an application of Theorem \ref{th:MCM} to dynamic graphs.
%(Theorems \ref{th:main} and \ref{forest} also find applications to dynamic graphs, and they are summarized below.)
%the maximum weight matching and minimum spanning forest problems,
The problem of maintaining a large matching in fully dynamic graphs was subject to intensive interest recently \cite{OR10,BGS11,NS13,GP13, PS16,Sol16,BHN17,BHR17,CS18,ACCSW18,GLSSS19,BFH19}.
%with a boom of papers appearing in the top TCS conferences
%(STOC/FOCS/SODA/ICALP/PODC/$\ldots$).
The basic goal is to devise an algorithm for maintaining a large matching while keeping a  tab on the \emph{update time},
i.e., the time required to update the matching at each step.
One may try to optimize the \emph{amortized} (average) update time of the algorithm or its \emph{worst-case} (maximum) update time,
but both measures are defined with respect to a \emph{worst-case} sequence of graphs.

%But what does it mean to ``maintain'' a matching, or any graph structure for that matter?
 ``Maintaining'' a matching with update time $u_T$ translates into maintaining a data structure with update time $u_T$,
which answers queries regarding the matching with a low, ideally constant, \emph{query time} $q_T$.
For a queried vertex $v$ the answer is the only matched edge incident on $v$, or $\NULL$ if $v$ is free,
while for a queried edge $e$ the answer is whether edge $e$ is matched or not.
All queries made following the same update step $i$ should be answered \emph{consistently} with respect to the same matching, hereafter the \emph{output matching (at step $i$)},
but queries made in the next update step $i+1$ may be answered with respect to a completely different matching.
%we refer to that matching as the \emph{output matching}.
Thus
%of all the known data structures in the area of dynamic matchings is near-constant (and usually constant).
%For any time step $i$, all the queries for time $i$ should answer \emph{consistently} with respect to the same matching;
% and denote it by $\cM_i$.
even if the worst-case update time is   low,
%\begin{obs}
the output matching
%of an algorithm for maintaining a large matching with low worst-case update time
may change significantly from one update step to the next;  %; it is even possible that $\cM_i \cap \M_{i-1} = \emptyset$.
some natural scenarios where the output matching changes significantly per update step are discussed in App.\ \ref{12}.

%Consider an arbitrary dynamic algorithm, Algorithm $\cA$, for maintaining a $\beta$-approximate maximum matching with an update time of $T$, for any $\beta  \ge 1$ and $T$. The %matching maintained by Algorithm $\cA$, denoted by $\cM^A = \cM^A_i$, for $i = 1,2,\ldots$, may change drastically following a single update step.
%All that is guaranteed by Algorithm $\cA$ is that it can update the matching following every update step within a time bound of $T$, either in the worst-case sense or in the %amortized sense, following which queries regarding the matching can be answered in (nearly) constant time.

%some natural scenarios when this happens are discussed in Section \ref{examples}.
%\end{obs}
%One reason as to why the output matching may change significantly from time to time arises from the use of randomization.

The number of changes (or replacements) to the output matching per update step is an important measure of quality,
sometimes referred to as the \emph{recourse bound},
%In fact, some works in the area view this as
and the problem of optimizing it has received growing attention recently \cite{GKKV95,CDKL09,GKS14,BLSZ15,BLZS17,BHR17,BKPPS17,ADJ18,MSV18}.
 In applications such as job scheduling, web hosting, streaming content delivery, data storage and hashing, a replacement of a matched edge by another one may be   costly, possibly much more than the runtime of computing these replacements.
 Moreover, when the recourse bound is low, one can efficiently \emph{output} all the changes to the matching following every update step, which could be important in practical scenarios. In particular, a low recourse bound is important when the matching algorithm is used as a black-box subroutine inside a larger data structure or algorithm \cite{BS16,ADKKP16}; see App.\ \ref{123} for more details. We remark that the recourse bound (generally defined as the number of changes to some underlying structure per update step) has been well studied  in the areas of dynamic and online algorithms for a plethora of optimization problems besides graph matching, such as MIS, set cover, Steiner tree, flow and scheduling; see \cite{GGK13,GK14,GKS14,BGKPSS15,MSVW16,AOSS18,CHK16,GKKP17,SSTT18},
 and the references therein.

%While a low worst-case bound   yields the same or better amortized bound, , the converse is not true,
There is a strong separation between the state-of-the-art amortized versus worst-case bounds
for dynamic matching algorithms, in terms of both the time and the recourse bounds.
A similar separation exists for numerous other problems, such as dynamic 
minimum spanning forest. %spanning tree, minimum spanning tree and two-edge connectivity.
In various practical scenarios, particularly in systems designed to provide real-time responses,
a strict tab on the \emph{worst-case update time} or on the \emph{worst-case recourse bound} is crucial, thus an algorithm with a low amortized guarantee but a high worst-case guarantee is  useless.
%Similarly, sometimes a strict tab on the worst-case recourse bound is crucial.

%We shall focus on algorithms for \emph{maintaining} large matchings with \emph{worst-case} bounds.
% which sometimes leads to ``chains'' of dynamic algorithms
%where the output performed by one algorithm might result in updates to the input of the next algorithm.
Despite the importance of the recourse bound measure,  all known algorithms but one in the area of dynamic matchings (described in App.\ \ref{111})
provide no nontrivial worst-case recourse bounds whatsoever! The sole exception is an algorithm 
for maintaining a maximal matching with a worst-case update time $O(\sqrt{m})$ and a constant recourse bound \cite{NS13}.
In this paper we fill in this gap via a surprisingly simple %observation:
%The main contribution of this work is in making the following surprisingly simple
yet powerful black-box reduction (throughout \emph{$\beta$-MCM} is a shortcut for $\beta$-approximate MCM):
%- + -
%+ - + -
%- + - + -
%\begin{mdframed}
\begin{Theorem} \label{main}
Any dynamic algorithm  maintaining a $\beta$-MCM
with update time $T$,\footnote{Besides
%Recall that the output matching at step $i$ is the one with respect to which all queries following update step $i$ are answered.
answering queries, we naturally assume that at any update step %and for any integer $\ell \ge 0$, $\ell$ arbitrary edges of the output matching
%at step $i$ can be output within time (nearly) linear in $\ell$;
%in particular,
the entire matching  can be output within time (nearly) linear in its size.
All known algorithms satisfy this assumption.}
%Moreover, this assumption trivially holds when the maximum matching size is (nearly) linear in $n$, which is the case, e.g., in  near-regular graphs and sufficiently dense %graphs.}
 for any $\beta \ge 1$, $T$ and $\eps > 0$,
can be \emph{transformed} into an algorithm   maintaining a $(\beta(1 +\eps))$-MCM with  update time  $T + O(1/\eps)$ and
worst-case recourse bound  $O(1/\eps)$. If the original time bound $T$ is amortized/worst-case, so is the resulting time bound of $T + O(1/\eps)$,
while the recourse bound $O(1/\eps)$ always holds in the worst-case.
This applies to the fully dynamic setting  under edge and/or vertex updates.  % and thus also to the incremental and decremental settings.
%Any algorithm for maintaining a $t$-approximate maximum matching with an update time of $T$, for any $t \ge 1$, can be \emph{transformed} into an algorithm for maintaining a %$t(1+\eps)$-approximate maximum matching with a \emph{recourse}
%bound of $O(T/\eps)$, for any $\eps > 0$. If the original time bound $T$ is amortized/worst-case, so is the resulting recourse bound.
\end{Theorem}
%\end{mdframed}

The proof of Theorem \ref{main} is carried out in two steps.
First we prove Theorem \ref{th:MCM} by showing a simple transformation process for any two matchings $\cM$ and $\cM'$ of the same \emph{static} graph.
% such that
%If $\cM'$ is larger than $\cM$,
%we can \emph{transform} $\cM$ into $\cM'$ by a sequence of edge replacements, so that the size of the transformed matching throughout the process essentially only goes up.
%then the size of the transformed matching may go down, but never below the original size $|\cM|$, and eventually it goes up to $|\cM'|$.
%We henceforth refer to those replacements as \emph{local improvements}, and to the entire process as a \emph{local improvements transformation process}.
%Since the underlying graph changes dynamically, we need to consider changes to the matching caused by edge or vertex insertions and deletions throughout the transformation %process.
The second step of the proof, which is   the key insight behind it, is that the gradual transformation process can be used \emph{essentially as is} in fully dynamic graphs, %both under edge updates and under vertex updates,
while incurring a negligible loss to the size and approximation guarantee of the transformed matching.  %, and moreover, to its approximation guarantee.

In Section \ref{recourse} we complement the positive result provided by Theorem \ref{main} by proving that the recourse bound $O(1/\eps)$ is optimal (up to a constant factor) in the regime $\beta = 1+\eps$.
In fact, the  lower bound $\Omega(1/\eps)$ on the recourse bound holds even in the amortized sense and even in the incremental (insertion only) and decremental (deletion only) settings.
For larger values of $\beta$,   taking $\eps$ to be a sufficiently small constant gives rise to an approximation guarantee arbitrarily close to $\beta$ with a
constant recourse bound.
%\vspace{6pt}
%\noindent
%{\bf A natural assumption.~}
%\footnote{
%In general, the runtime needed for outputing the matching at any step may exceed the matching size by a factor of $\gamma$, for an arbitrary parameter $\gamma$.
%Then the time bound of the transformed algorithm from Theorem \ref{main} would grow from $T + O(1/\eps)$  to $\beta T + O(1/\eps)$.}
%the
\vspace{7pt} \\
%In what follows we discuss previous work, put Theorem \ref{main} in context with it, and discuss some of its implications.
%\subsection{
{\bf A corollary of Theorem \ref{main}.~}
%In this section we demonstrate that the scheme of \cite{GP13} can be strengthened to guarantee that the number of matching replacements is the same as the update time (up to a %small constant), also in the worst-case. The resulting stronger version of the scheme of \cite{GP13}, applied in the particular case of constant degree graphs, was already used in Section \ref{sec:almost} to derive a polylogarithmic worst-case bound on the number of replacements to the matching guaranteed by Theorem \ref{maint} (refer to the second remark following Theorem \ref{maint}).
%(Aiming for conciseness, we provide here  only a sketch of the argument; refer to \cite{Sol17} for a more elaborate argument.)
As a  corollary of Theorem \ref{main}, all previous algorithms \cite{GP13,BS15,PS16,CS18,ACCSW18,bernstein2019deamortization,DBLP:journals/corr/abs-1911-05545} with low worst-case update time are strengthened
to achieve a worst-case recourse bound of $O(1/\eps)$ with only an additive overhead of $O(1/\eps)$ to the update time.
(Some of these results were already strengthened in this way by using a previous version of the current work, which was posted to arXiv in 2018.)
Since the update time of all these algorithms is larger than $O(1/\eps)$,  we get a recourse bound of $O(1/\eps)$ with no loss whatsoever in the update time!
Moreover, all known algorithms with low amortized update time can be strengthened in the same way;
e.g., 
%in FOCS'14 Bosek et al.\ \cite{BLSZ14} showed that one can maintain a $(1+\eps)$-MCM in the incremental vertex update setting
%with a total of $O(m / \eps)$ time and $O(n / \eps)$ matching replacements, where $m$ and $n$ denote the number of edges and vertices of the final graph, respectively;
in SODA'19 \cite{GLSSS19} (cf.\ \cite{BLSZ14}) it was shown that one can maintain a $(1+\eps)$-MCM in the incremental edge update setting 
with a constant (depending exponentially on $\eps$) amortized update time.
%this result was extended to the incremental edge update setting, achieving a constant (depending exponentially on $\eps$) amortized update time.
While this algorithm yields a constant amortized recourse  bound, no nontrivial (i.e., $o(n)$) worst-case recourse bound was known for this problem.
Theorem \ref{main} strengthens the result of \cite{GLSSS19} to maintain a $(1+\eps)$-MCM with a constant amortized update time and the optimal worst-case recourse bound of $O(1 / \eps)$. %which for the edge update setting improves even the amortized recourse bound (which is exponential in $\eps$).
%so that their worst-case recourse bound matches their worst=case update time.
Since the recourse bound is an important measure of quality, this provides a significant contribution to the area of dynamic matching algorithms.
\noindent \vspace{6pt} \\
%\vspace{6pt}
%\noindent
{\bf Weighted matchings.~}
%\vspace{4pt}
%\noindent
%{\bf Remark.}
The result of Theorem \ref{main} can be generalized for approximate MWM in graphs with bounded aspect ratio $\psi$, 
by using the much more intricate transformation provided by Theorem \ref{th:main} (compared to Theorem \ref{th:MCM}), as summarized in the next theorem.
(The \emph{aspect ratio} $\psi = \psi(G)$ of a weighted graph $G=(V,E,w)$ is defined as $\psi = \frac{\max_{e \in E} w(e)} {\min_{e \in E} w(e)}$.)
%but the update time of the resulting algorithm grows from $T + O(1/\eps)$ to $T + O(W/\eps)$ and the worst-case recourse bound grows from $O(1/\eps)$ to $O(W/\eps)$.
%We summarize this result in the following theorem.
\begin{Theorem} \label{main2}
Any dynamic algorithm for maintaining a $\beta$-approximate MWM (shortly, \emph{$\beta$-MWM})
with update time $T$ in a dynamic graph with aspect ratio always bounded by $\psi$, for any $\beta \ge 1$, $T, \eps > 0$ and $\psi$,
can be \emph{transformed} into an algorithm for maintaining a $(\beta(1 +\eps))$-MWM with update time $T + O(\psi/\eps)$ and
worst-case recourse bound $O(\psi/\eps)$. If the original time bound $T$ is amortized/worst-case, so is the resulting time bound of $T + O(\psi/\eps)$,
while the recourse bound $O(\psi/\eps)$ always holds in the worst-case.
This applies to the fully dynamic setting under   edge and/or vertex updates.
%Any algorithm for maintaining a $t$-approximate maximum matching with an update time of $T$, for any $t \ge 1$, can be \emph{transformed} into an algorithm for maintaining a %$t(1+\eps)$-approximate maximum matching with a \emph{recourse}
%bound of $O(T/\eps)$, for any $\eps > 0$. If the original time bound $T$ is amortized/worst-case, so is the resulting recourse bound.
\end{Theorem}
\vspace{3pt}  \noindent 
{\bf Scenarios with high recourse bounds.~}
There are various scenarios where %dynamic matching algorithms admit
high recourse bounds may naturally arise. In   such scenarios our reductions (Theorems \ref{main} and \ref{main2}) can come into play to achieve low worst-case recourse bounds.
Furthermore, although a direct application of our reductions may only hurt the update time, we demonstrate the usefulness of these reductions in achieving low update time bounds in some natural settings (where we might not care at all about recourse bounds); this, we believe, provides another strong motivation for our reductions.
The details are deferred to App.\ \ref{12}.
\vspace{8pt}\\
\noindent
{\bf 1.3~ Related work.~}
We discussed in Section 1.1 prior work on graph reconfiguration problems.
%see the surveys of \cite{DBLP:books/cu/p/Heuvel13,DBLP:journals/algorithms/Nishimura18} for more detail.
Other than this line of work, there are also inherently different lines of work on ``reoptimiziation'', which indeed can be interpreted broadly--- there is an extensive and diverse body of research devoted to various notions of reoptimization; see \cite{thiongane2006,AusielloEMP09,BoriaP10,BiloBKKMSZ11,ausiello2011complexity,bender2015reallocation,bender2017cost,SSTT18,Bilo18}, and the references therein. The common goal in all previous work on reoptimization (besides the one discussed in Section 1.1 on reconfiguration) is to (efficiently) compute an exact or approximate solution to a new problem instance by using the solution for the old instance,
where typically the solution for the new instance should be close to the original one under certain distance measure. % and the same metric measures the distance between the two solutions.
%but this isn't the case in our work by the very definition of the transformation process.
%Common to most previous work is the requirement to compute a solution to a new problem instance that is derived from a previous \emph{close} instance.
%In particular,
Our work is inherently different than all such previous work, since our starting point is that \emph{some solution to the new problem instance is given}, and the goal is to compute a \emph{gradual transformation process} (subject to some   constraints) between the two given solutions. Also, our work is inherently different than previous work on reconfiguration, as explained in Section 1.2.
\vspace{7pt}\\
\noindent
{\bf 1.4~ Organization.~}
%[[S: manually check subsection names, these have changed]]
This extended abstract (Sections \ref{changes}--\ref{recourse}) 
focuses on providing a tight worst-case recourse bound for dynamic approximate maximum matching.
We start (Section \ref{changes}) by describing a basic
scheme for dynamic approximate matchings that was introduced in \cite{GP13}.
%but without any nontrivial guarantee on the worst-case recourse bound.
%The proofs of Theorems \ref{main} and \ref{main2} are in Section \ref{proofmain};
In Section 3.1 we present a simple transformation process for MCM in static graphs, thus proving Theorem \ref{th:MCM}.
This result is generalized for MWM via a much more intricate transformation process that proves Theorem \ref{th:main},
which is deferred to App.\ \ref{sec:wema} due to space constraints.  % [[S: better to give some of it in main text!]].
These transformations, which  apply to static graphs, are adapted to the fully dynamic setting in Sections 3.2 and 3.3, thus proving Theorems \ref{main} and \ref{main2}, respectively.
%this adaptation employs new ideas on top of a scheme from \cite{GP13} for dynamic approximate matchings described  in App.\ \ref{changes}.
Our lower bound of $\Omega(1/\eps)$ on the recourse bound of $(1+\eps)$-MCMs is provided in Section \ref{recourse}.
%The transformation process for MSF that proves Theorem \ref{forest} is deferred to App.\ \ref{sec:trees}.
A discussion is deferred to App.\ \ref{discuss}.  % is deferred to due to space constraints.
%Finally,
%[[S: remind me nono]]
%Various scenarios where
%dynamic matching algorithms admit
%high recourse bounds may naturally  arise are discussed in Appendix \ref{12}. %In such scenarios our black-box reduction may come into play. 
%\input{shufflingGamev3}
%\section{The random argument}
%\input{almost-maximality}
%\input{simplifying-assumptions}

%[[S: till here]]
\section{The scheme of \cite{GP13}} \label{changes}
%The main goal of this section is to provide
%We achieve this goal, we provide by strengthening a scheme
This section provides a short overview of a basic scheme for dynamic approximate matchings from \cite{GP13}.
Although such an overview is not required for proving Theorems \ref{main} and \ref{main2}, it is instructive to provide it,
as it shows that the scheme of \cite{GP13} is insufficient for providing any nontrivial worst-case recourse bound.
Also, the scheme of \cite{GP13} exploits a basic \emph{stability} property of matchings, which we use for proving Theorems \ref{main} and \ref{main2},
%we
thus an overview of this scheme may facilitate the  understanding of our proof.
%, and of the way the stability property is being used therein,
\noindent \vspace{10pt} \\
%\subsection
{\bf 2.1~ The amortization scheme of \cite{GP13}.~} %\label{amor}
The \emph{stability} property of matchings used in \cite{GP13} is that the maximum matching size changes by at most 1 following each update step.
Thus if we have a $\beta$-MCM, for any $\beta \ge 1$,
%If the size $|\cM^*|$ of the maximum matching is large and we have a matching $\cM$ whose size is close to $|\cM^*|$,
 the approximation guarantee of the matching will remain close to $\beta$ throughout a long update sequence.
 Formally, the following lemma is a simple adaptation of Lemma 3.1 from \cite{GP13}; its proof is given in Appendix \ref{app:lazy} for completeness.
  (Lemma 3.1 of \cite{GP13} is stated for approximation guarantee $1+\eps$ and for edge updates, whereas Lemma \ref{lazylemma2} here holds for any approximation guarantee and also for vertex updates.)
  \ignore{
  \begin{lemma}  \label{lazylemma}
%\vspace{-5pt} [Proof in Appendix \ref{app:lazy}]}
Let $\eps,\eps' \le 1/2$. Suppose that $\cM_i$ is a $(1+\eps)$-MCM for $G_i$.
For $j = i,i+1,\ldots, i+\lfloor \eps'\cdot |\cM_i| \rfloor$, let $\cM^{(j)}_i$ denote the matching $\cM_i$ after removing from it all edges that got deleted during the updates $i+1,\ldots,j$.
Then $\cM^{(j)}_i$ is a $(1+2\eps+2\eps')$-MCM for the graph $G_j$.
\end{lemma}
}

\ignore{
\subsection{An Adjustment to Lemma \ref{lazylemma}}
Lemma \ref{lazylemma} applies to approximation guarantees close to 1 and to the standard edge update setting.
However, as we show next, it is straightforward to extend it to any approximation guarantee $\beta$, in both the edge update and the vertex update settings.
}

%For technical reasons that will become clear in the sequel, it is convenient to state this lemma in a specific way,
%where we bound the approximation guarantee of the matching after removing from it an arbitrary edge.
\begin{lemma} \label{lazylemma2}
%\vspace{-5pt} [Proof in Appendix \ref{app:lazy}]}
Let $\eps' \le 1/2$. Suppose  $\cM_t$ is a $\beta$-MCM for   $G_t$, for any $\beta \ge 1$.
For $i = t,t+1,\ldots, t+\lfloor \eps'\cdot |\cM_t| \rfloor$, let $\cM^{(i)}_t$ denote the matching $\cM_t$ after removing from it all edges that got deleted during updates $t+1,\ldots,i$.
Then $\cM^{(i)}_t$ is a $(\beta(1 +2\eps'))$-MCM for $G_i$.
\end{lemma}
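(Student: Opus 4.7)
The plan is to bound $|\cM^{(i)}_t|$ from below and $|\mathrm{OPT}(G_i)|$ from above, each in terms of $|\cM_t|$, relying on a single stability property that holds uniformly for edge and vertex updates: any update step changes the maximum matching size by at most $1$, and any deletion (of an edge, or of a vertex together with its incident edges) removes at most one edge from any fixed matching. The validity of $\cM^{(i)}_t$ as a matching in $G_i$ is immediate by construction, since we discarded exactly those edges of $\cM_t$ that are no longer present in $G_i$, and the remaining edge set is still a matching because $\cM_t$ was.

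Next, I would set $k := i - t \le \lfloor \eps'|\cM_t| \rfloor$ and split the $k$ intervening updates into $a$ insertions and $b = k - a$ deletions. Since deletions are the only updates that shrink $\cM^{(i)}_t$ and insertions are the only updates that can enlarge $\mathrm{OPT}$, one obtains
$$|\cM^{(i)}_t| \;\ge\; |\cM_t| - b, \qquad |\mathrm{OPT}(G_i)| \;\le\; |\mathrm{OPT}(G_t)| + a \;\le\; \beta|\cM_t| + a.$$
Hence the approximation ratio satisfies $|\mathrm{OPT}(G_i)|/|\cM^{(i)}_t| \le (\beta|\cM_t| + a)/(|\cM_t| - b)$. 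Substituting $b = k - a$ and differentiating in $a$, one checks that for $\beta \ge 1$ this ratio is non-increasing in $a$, so the worst case is $a = 0$, $b = k$, which yields the bound $\beta|\cM_t|/(|\cM_t| - k) \le \beta/(1 - \eps')$.

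The final step is the elementary inequality $1/(1-\eps') \le 1 + 2\eps'$ for $\eps' \le 1/2$, which rearranges to $\eps'(1-2\eps') \ge 0$; plugging this in yields $|\mathrm{OPT}(G_i)|/|\cM^{(i)}_t| \le \beta(1 + 2\eps')$, as required. I do not anticipate any substantial obstacle in this proof; the only point requiring real care is to verify that the stability property holds uniformly in both update models, since a single vertex update, just like a single edge update, can change both the maximum matching size and the number of edges of $\cM_t$ still present by at most one.
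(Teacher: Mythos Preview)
Your argument is correct and follows the same outline as the paper's proof in Appendix~\ref{app:lazy}: bound $|\cM^{(i)}_t|$ below and $|\cM^{OPT}_i|$ above via the one-step stability property, then divide. The only substantive difference is that the paper crudely bounds both $k_{ins}$ and $k_{del}$ by $k$, obtaining the ratio $\beta(1+\eps')/(1-\eps')$, whereas you keep the constraint $a+b=k$ and optimize over $a$ to get the tighter ratio $\beta/(1-\eps')$. This refinement is not cosmetic: the paper's final inequality $(1+\eps')/(1-\eps') \le 1+2\eps'$ is in fact false for every $\eps'>0$ (e.g.\ at $\eps'=1/2$ it reads $3\le 2$), so your optimization step is what actually makes the stated bound $\beta(1+2\eps')$ go through.
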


For concreteness, we shall focus on the regime of approximation guarantee $1+\eps$,
and sketch the argument of \cite{GP13} for maintaining a $(1+\eps)$-MCM in fully dynamic graphs.
(As Lemma \ref{lazylemma2} applies to any approximation guarantee $\beta \ge 1+\eps$, it is readily verified that the same argument carries over to any approximation guarantee.)

One can compute a $(1+\eps/4)$-MCM $\cM_t$ at a certain update step $t$, and then re-use the same matching $\cM^{(i)}_t$ throughout all update steps $i = t,t+1,\ldots, t' =  t+ \lfloor \eps/4\cdot |\cM_t| \rfloor$ (after removing from it all edges that got deleted from the graph between steps $t$ and $i$).
By Lemma \ref{lazylemma2}, assuming $\eps \le 1/2$, $\cM^{(i)}_t$ provides a $(1+\eps)$-MCM for all graphs $G_i$.
%throughout these updates the original matching $\cM_i$ (minus edges that might be deleted from it during this time)
%continues to provide a $(1+O(\eps))$-MCM, with a larger constant hiding in the $O$-notation.
Next compute a fresh $(1+\eps/4)$-MCM $\cM_{t'}$ following update step $t'$ and re-use it throughout all update steps $t',t'+1,\ldots,t'+ \lfloor \eps/4 \cdot |\cM_{t'}|\rfloor$, and repeat.
In this way the static time complexity of computing a $(1+\eps)$-MCM $\cM$
%\cite{HK73,MV80,Vaz12},
is \emph{amortized} over $1 + \lfloor \eps/4\cdot |\cM| \rfloor  = \Omega(\eps \cdot |\cM|)$ update steps.
As explained in Appendix \ref{amort}, the static computation time of an approximate matching is $O(|\cM| \cdot \alpha/\eps^2)$,
where $\alpha$ is the arboricity bound.  % of the graph at the corresponding update step.
(This bound on the static computation time was established in \cite{PS16}; it reduces to $O(|\cM| \cdot \sqrt{m}/\eps^2)$
and $O(|\cM| \cdot {\Delta}/\eps^2)$ for general graphs and graphs of degree bounded by $\Delta$, respectively, which are the bounds provided by \cite{GP13}.)
%\vspace{5pt}
%\noindent
%\vspace{6pt}
%\\
%\noindent
\noindent \vspace{10pt} \\
%\subsection
{\bf 2.2~ A Worst-Case Update time.~} %\label{wc}
In the amortization scheme of \cite{GP13} described above, a $(1+\eps/4)$-MCM $\cM$ is computed \emph{from scratch}, and then being re-used throughout $\lfloor \eps/4 \cdot |\cM| \rfloor$ additional update steps. The worst-case update time is thus the static computation time of an approximate matching, namely, $O(|\cM| \cdot \alpha/\eps^2)$.
To improve the worst-case guarantee, the tweak used in \cite{GP13} is to simulate the static approximate matching computation within
a ``time window'' of $1+\lfloor \eps/4 \cdot |\cM| \rfloor$  consecutive update steps, so that following each update step the algorithm simulates only
$O(|\cM| \cdot \alpha/\eps^2) / (1+\lfloor \eps/4 \cdot |\cM| \rfloor = O(\alpha \cdot \eps^{-3})$
%$O(\min\{m,|\cM|^2\}/\eps)/(1+\lfloor \eps/4 \cdot |\cM| \rfloor) = O(\sqrt{m} \cdot \eps^{-2})$
steps of the static computation.
During this time  window the gradually-computed matching, denoted by $\cM'$, is useless, so the previously-computed matching $\cM$ is re-used as the output matching.
This means that each matching is re-used throughout a time window of twice
as many update steps, hence the approximation guarantee increases from $1+\eps$ to $1+2\eps$, but we can reduce it back to $1+\eps$ by a straightforward scaling argument.
%(We allow the approximation guarantee to grow from $1+\eps$ to $1+O(\eps)$, since we can reduce it back to $1+\eps$ by scaling.)
%{[[S: A remark that may confuse the reader:
Note that the gradually-computed matching does not include edges that got deleted from the graph during the time window.
\noindent \vspace{10pt} \\
{\bf 2.3~ Recourse bounds.~}
%\subsection{
%{\bf An amortized recourse bound.~}
Consider an arbitrary time window used in the amortization scheme of \cite{GP13},
and note that the same matching is being re-used throughout the entire window.
 Hence there are no changes to the matching in the ``interior'' of the window except for those triggered by adversarial deletions,
which may trigger at most one change to the matching per update step.
On the other hand, at the start of any time window (except for the first), the output matching is switched from the old matching $\cM$ to the new one $\cM'$,
%once the computation of $\cM'$ is finished, the output matching is switched from the old matching $\cM$ to the new one  $\cM'$
which may require $|\cM| + |\cM'|$ replacements to the output matching at that time.
%This inherent difficulty is not addressed by the tweak of \cite{GP13} for bounding the worst-case update time;
%the only change due to this tweak is that the time windows are twice longer.
Note that the amortized number of replacements per update step is quite low, being upper bounded by $(|\cM| + |\cM'|) / (1+\lfloor \eps/4 \cdot |\cM| \rfloor)$.
In the regime of approximation guarantee $\beta = O(1)$, we have $|\cM| = O(|\cM'|)$, hence the amortized recourse bound is bounded by $O(1/\eps)$.
For a general approximation guarantee $\beta$, the naive amortized recourse bound is $O(\beta/\eps)$.

%, but a more careful reasoning leads to the same recourse bound of $O(1/\eps)$.)
On the negative side, the worst-case recourse bound may still be as high as $|\cM| + |\cM'|$, even after performing the above tweak.  % suggested in Section 2.2.  %\ref{wc}.
Indeed, that tweak only causes the time windows to be twice longer, and it does not change the fact that
%\vspace{5pt}
%\noindent
%{\bf A worst-case recourse bound.~}
%Although this tweak guarantees a worst-case update time of $O(\alpha \cdot \eps^{-3})$, the number of replacements may be huge in the worst-case.
once the computation of $\cM'$ finishes, the output matching is switched from the old matching $\cM$ to the new one $\cM'$ \emph{instantaneously}, which may require $|\cM| + |\cM'|$ replacements to the output matching at that time. %, thus the worst-case recourse bound is huge. 

\section{Proofs of Theorems \ref{main} and \ref{main2}} \label{proofmain}
This section is mostly devoted (see Sections 3.1 and 3.2) to the proof of Theorem~\ref{main}.
At the end of this section (Section 3.3) we sketch the adjustments needed for deriving the  result of Theorem \ref{main2},
%The generalized result of Theorem \ref{main2}
whose proof follows along similar lines to those of Theorem \ref{main}.
%as a corollary of Theorem~\ref{th:main}...
\vspace{8pt}\\
\noindent{\bf 3.1~ A simple transformation in static graphs.~}
This section is devoted to the proof of Theorem \ref{th:MCM}, which provides the first step in the proof of Theorem \ref{main}. We remark that this theorem can be viewed as a ``warm up'' to Theorem \ref{th:main} for MWM, which is deferred to Section~\ref{sec:wema}, and is considerably more technically involved.

%\label{static}
Let $\cM$ and $\cM'$ be two matchings for the same graph $G$.
Our goal is to gradually transform $\cM$ into (a possibly superset of) $\cM'$ via a sequence of constant-time operations to be described next, each making at most 3 changes to the matching,
 such that the matching obtained at any point throughout this   transformation process is a valid matching for $G$ of size at least $\min\{|\cM|,|\cM'|-1\}$.
It is technically convenient to denote by $\cM^*$ the \emph{transformed} matching,
which is initialized as $\cM$ at the outset, and being gradually transformed into $\cM'$; we refer to $\cM$ and $\cM'$ as the \emph{source} and \emph{target} matchings, respectively.
Each   operation starts by adding a single edge of $\cM' \setminus \cM^*$ to $\cM^*$ and then removing from $\cM^*$ the at most two edges incident on the newly added edge; thus at most 3 changes to the matching are made per operation.
It is instructive to assume that $|\cM'| > |\cM|$, as the   motivation for applying this transformation, which will become clear in Section 3.2, is to increase the matching size;
in this case the size $|\cM^*|$ of the transformed matching $\cM^*$ never goes below the size $|\cM|$ of the source matching $\cM$.
%and finally updating the data structure.
%``local improvements''.
%
%Let $t$ and $t' = t + W$ denote the update steps in which this transformation process starts and ends, respectively.
%We refer to the matching that we output throughout this process as the \emph{source matching}, and denote it by $\cM^*_{i}$, for $i = t, t+1, \ldots, t'$;
%the source matching coincides with $\cM$ initially, i.e., $\cM^*_t = \cM$,
% and is gradually transformed into the \emph{target matching} $\cM'_{i}$,
% until coinciding with it at (or before) step $t'$, i.e., $\cM^*_{t'} = \cM'_{t'}$.
% The target matching $\cM'_i$ coincides with $\cM'$ initially, i.e., $\cM'_t = \cM'$,
% and for a general $i = t, t+1, \ldots, t'$, $\cM'_i$ is obtained from $\cM'$ by removing from it all edges that got deleted from the graph between steps $t$ and $i$.
%The source matching $\cM^*_i$ is different than the \emph{old matching} $\cM_i$, i.e., the matching obtained from $\cM$ by removing from it all edges that got deleted from the %graph between steps $t$ and $i$, hence the need for this additional notation.

We say that an edge of $\cM' \setminus \cM^*$ that is incident on at most one edge of $\cM^*$ is \emph{good}, otherwise it is \emph{bad}, being incident on two edges of $\cM^*$.
Since $\cM^*$  has to be a valid matching throughout the transformation process, adding a bad edge to $\cM^*$ must trigger
%Adding a good edge to $\cM_i$ triggers the deletion of at most one edge of $\cM_i$.
%The first thing to note is that we cannot add arbitrary edges of $\cM'_{i} \setminus \cM_i$ to $\cM_i$.
the removal of two edges from $\cM^*$.
Thus if we keep adding bad edges to $\cM^*$, the size of $\cM^*$ may halve throughout the transformation process.
%Although we sometimes have to add bad edges for the transformation process to proceed,
The following lemma shows that if all edges of $\cM' \setminus \cM^*$ are bad, the transformed matching $\cM^*$ is just as large as the target matching $\cM'$.
%, in which case we can afford to slightly hurt the approximation guarantee by adding a bad edge and deleting two edges as a result.
\begin{lemma} \label{badedge}
If all edges of $\cM' \setminus \cM^*$ are bad, then $|\cM^*| \ge |\cM'|$.
\end{lemma}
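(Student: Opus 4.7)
The plan is to analyze the symmetric difference $\cM^* \triangle \cM'$ in the usual way and observe how the badness hypothesis rules out certain components.

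First I would set up notation: let $A = \cM^* \cap \cM'$, $B = \cM^* \setminus \cM'$, and $C = \cM' \setminus \cM^*$, so that $|\cM^*| = |A| + |B|$ and $|\cM'| = |A| + |C|$. The lemma then reduces to proving $|B| \ge |C|$. Consider the graph $H$ whose edge set is $B \cup C$. Since $B$ and $C$ are each contained in a matching, every vertex has degree at most $2$ in $H$, so $H$ decomposes into vertex-disjoint paths and cycles whose edges alternate between $B$ and $C$.

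Next I would translate the badness assumption into a structural statement about $H$. Fix any edge $e = (u,v) \in C$. By hypothesis, both $u$ and $v$ are matched in $\cM^*$ to some edges other than $e$. Neither of those $\cM^*$-edges can lie in $A$: the only $\cM'$-edge touching $u$ (respectively $v$) is $e$ itself, since $\cM'$ is a matching, so an $A$-edge at $u$ or $v$ would have to coincide with $e$, contradicting $e \notin \cM^*$. Therefore the $\cM^*$-edges at $u$ and $v$ belong to $B$, so both endpoints of $e$ have degree $2$ in $H$. Consequently, no edge of $C$ can be the first or last edge of a path component of $H$.

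Now I would tally. A cycle of $H$ alternates between $B$ and $C$ edges and thus contributes equally to $|B|$ and $|C|$. A path component containing at least one $C$-edge must, by the previous paragraph, start and end with $B$-edges, and therefore contributes exactly one more $B$-edge than $C$-edges. The remaining path components contain no $C$-edges and so only contribute to $|B|$. Summing across components gives $|B| \ge |C|$, and hence
\[
|\cM^*| = |A| + |B| \;\ge\; |A| + |C| = |\cM'|,
\]
as desired. I do not expect any real obstacle here; the only subtlety is making sure that the $\cM^*$-edges witnessing that $u$ and $v$ are matched really lie in $B$ and not in $A$, which is exactly what the short argument above establishes.
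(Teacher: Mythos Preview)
Your proof is correct. The key observation---that the two $\cM^*$-edges adjacent to a bad edge of $\cM'\setminus\cM^*$ must lie in $\cM^*\setminus\cM'$ rather than in $\cM^*\cap\cM'$---is exactly the observation the paper makes, and your justification of it is clean.

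The packaging of the final count is slightly different. The paper builds an auxiliary bipartite graph with one side $L$ indexed by edges of $\cM'\setminus\cM^*$ and the other side $R$ indexed by edges of $\cM^*\setminus\cM'$, joining two nodes when the corresponding edges share a vertex; the key observation gives $\deg_L\equiv 2$ and (since $\cM'$ is a matching) $\deg_R\le 2$, so a handshake count yields $|R|\ge|L|$ immediately. You instead look at the symmetric difference $B\cup C$ inside $G$, decompose it into alternating paths and cycles, and use the key observation to force every path containing a $C$-edge to begin and end in $B$. These are two standard and essentially interchangeable ways to compare the sizes of two matchings; the paper's double-counting is a line shorter, while your path/cycle decomposition makes the structure of the symmetric difference more explicit. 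Either is perfectly adequate here.
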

\begin{proof}
%We assume all edges in $\cM'_{i} \setminus \cM_i$ are bad, and show that $|\cM_i| \ge |\cM'_i|$.
Consider a bipartite graph $L \cup R$, where each vertex in $L$ corresponds to an edge of $\cM' \setminus \cM^*$
and each vertex in $R$ corresponds to an edge of $\cM^* \setminus \cM'$, and there is an edge between a vertex in $L$ and a vertex in $R$ iff the corresponding matched edges share a common vertex in the original graph.
% and show an injection from $S' = \cM'_{i} \setminus \cM_i$ to $S = \cM_{i} \setminus \cM'_i$.
If all edges of $\cM' \setminus \cM^*$ are bad, then any edge of $\cM' \setminus \cM^*$ is incident on two edges of $\cM^*$, and since $\cM'$ is a valid matching,
those two edges cannot be in $\cM'$. In other words, the degree of each vertex in $L$ is exactly 2.
Also, the degree of each vertex in $R$ is at most 2, as $\cM'$ is a valid matching.
 It follows that $|R| \ge |L|$, or in other words $|\cM^* \setminus \cM'| \ge |\cM' \setminus \cM^*|$, yielding $|\cM^*| \ge |\cM'|$.
\QED
\end{proof}

%[[S: unil there]]
The transformation process is carried out as follows.
At the outset we initialize $\cM^* = \cM$ and compute the sets $\cG$ and $\cB$ of good and bad edges in $\cM' \setminus \cM^* = \cM' \setminus \cM$ within time $O(|\cM| + |\cM'|)$ in the obvious way, and store them in doubly-linked lists. We keep mutual pointers between each edge of $\cM^*$ and its at most two incident edges
in the corresponding linked lists $\cG$ and $\cB$.
Then we perform a sequence of operations, where each operation starts by adding an edge of $\cM' \setminus \cM^*$ to $\cM^*$,
giving precedence to good edges (i.e., adding a bad edge to $\cM^*$ only when there are no good edges to add),
and then removing from $\cM^*$ the at most two edges incident on the newly added edge.
Following each such operation, we update the lists $\cG$ and $\cB$ of good and bad edges in $\cM' \setminus \cM^*$ within constant time in the obvious way.
This process is repeated until $\cM' \setminus \cM^* = \emptyset$,
at which stage we have $\cM^* \supseteq \cM'$.  %; we can then make sure to  must coincide with $\cM$.
Note that the number of operations performed before emptying $\cM' \setminus \cM^*$ is bounded by $|\cM'|$, since each operation removes at least one edge from $\cM' \setminus \cM^*$.
It follows that the total runtime of the transformation process is bounded by $O(|\cM| + |\cM'|)$.

It is immediate that $\cM^*$ remains a valid matching throughout the transformation process,
as we pro-actively remove from it edges that share a common vertex with new edges added to it.
To complete the proof of Theorem  \ref{th:MCM} it remains to prove the following lemma.
%We first argue that the output matching $\cM^*_i$ is a valid matching. Indeed,
%we may henceforth consider just the second window.
%
%The following lemma
%To conclude the argument we employ the following lemma.
%Denote by $\cM_i$ the matching obtained from $\cM$ after removing from it all edges that got deleted from the graph between steps $t$ and $i$, for $i = t, t+1, \ldots, t'$.
%\begin{lemma}
%$\cM^*_i$ is a valid $(1+O(\eps))$-approximate matching, for any $i = t + W/2, t+W/2 + 1, \ldots,t'$.
%\end{lemma}
%\begin{proof}
%It is immediate that the output matching $\cM^*_i$ is a valid matching, for any $i = t + W/2, t+W/2 + 1, \ldots,t'$,
%as we proactively delete from it edges that share a common vertex with new edges added to it.
%guarantee that no two edges of the output matching share a common vertex by deleting edges
%following every edge addition to the output matching throughout the en second window.

%Fix any index $i = t + W/2, t+W/2 + 1, \ldots,t'$. We next analyze the approximation guarantee of $\cM^*_i$.
\begin{lemma} \label{sizebound}
At any moment in time  we have $|\cM^*| \ge \min\{|\cM|,|\cM'|-1\}$.
\end{lemma}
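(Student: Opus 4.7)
The plan is to prove the bound by induction on the number of operations performed so far, tracking how $|\cM^*|$ can change after each operation. The base case is immediate: initially $\cM^* = \cM$, so $|\cM^*| = |\cM| \ge \min\{|\cM|, |\cM'|-1\}$. For the inductive step I would split by operation type. A \emph{good}-edge operation adds one edge of $\cM' \setminus \cM^*$ to $\cM^*$ and removes at most one edge, so $|\cM^*|$ is non-decreasing and the invariant is preserved trivially.

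The interesting case is a \emph{bad}-edge operation, which adds one edge and removes two, thus dropping $|\cM^*|$ by exactly one. Here I would lean on the greedy priority rule: a bad edge is added \emph{only} when no good edge exists in $\cM' \setminus \cM^*$. Hence, at the moment just before such an operation, every remaining edge of $\cM' \setminus \cM^*$ is bad, and Lemma~\ref{badedge} applies, yielding $|\cM^*| \ge |\cM'|$ at that instant. After the operation we therefore have $|\cM^*| \ge |\cM'| - 1 \ge \min\{|\cM|,|\cM'|-1\}$, completing this case.

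To finish, I would explicitly check both regimes of the min: if $|\cM'| - 1 \le |\cM|$, then $|\cM^*| \ge |\cM'| - 1 = \min\{|\cM|,|\cM'|-1\}$; if $|\cM'| - 1 > |\cM|$, then at the time of any bad-edge operation we have $|\cM^*| \ge |\cM'| > |\cM| + 1$, so after the decrement $|\cM^*| \ge |\cM| = \min\{|\cM|,|\cM'|-1\}$. This handles all cases uniformly.

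The one subtlety I expect to be worth flagging — and the only real conceptual step beyond mechanical bookkeeping — is that bad-edge operations may themselves \emph{create} new good edges (the two removed edges of $\cM^*$ were incident on the newly added bad edge, which may free up incidences of other edges of $\cM' \setminus \cM^*$). The induction handles this seamlessly because the invariant is re-established after each operation regardless of how the sets $\cG, \cB$ re-shuffle; the crucial point is that the greedy rule guarantees the hypothesis of Lemma~\ref{badedge} at every bad-edge step, not globally.
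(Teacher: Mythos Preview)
Your proposal is correct and follows essentially the same approach as the paper: both arguments hinge on the fact that good-edge operations never decrease $|\cM^*|$, while at any bad-edge operation the greedy rule guarantees the hypothesis of Lemma~\ref{badedge}, so $|\cM^*| \ge |\cM'|$ just before and $|\cM^*| \ge |\cM'|-1$ just after. The only difference is cosmetic---the paper phrases this as a case split on whether any bad edge is ever added (and looks at the last such operation), whereas you frame it as an induction over operations; your framing is arguably cleaner since it handles all intermediate states uniformly without singling out the ``last'' bad-edge step.
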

\begin{proof}
Suppose for contradiction that the lemma does not hold, and consider the first time step $t^*$ throughout the transformation process in which $|\cM^*| < \min\{|\cM|,|\cM'|-1\}$.
Since initially $|\cM^*| = |\cM|$ and
%Suppose first that we only add good edges to $\cM^*$ throughout this process.
%In this case every edge addition to $\cM^*$ triggers at most one   edge removal from it,
%hence the size of $\cM^*$ never goes down, and so it is always at least as large as that of the source matching $\cM$.
as every addition of a good edge to $\cM^*$ triggers at most one edge removal from it,
 time step $t^*$ must occur after an addition of a bad edge.
%the size of $\cM^*$ can decrease below $|\cM|$ only as a result of adding a bad edge.
%until the first addition of a bad edge to $\cM^*$ (if any) occurs,  
%hence it is at least as large as that of the source matching $\cM$ until the first such addition.
%If no such addition occurs, we are done.
%We henceforth assume that at least one bad edge is added to $\cM^*$. % and consider the last moment when such an addition occurs.
Recall that a bad edge is added to $\cM^*$ only when there are no good edges to add.
%and consider any time step throughout the transformation process in which a bad edge addition occurs.
Just before this addition we have $|\cM^*| \ge |\cM'|$ by Lemma \ref{badedge},
thus we have $|\cM^*| \ge |\cM'| - 1$ after adding that edge to $\cM^*$ and removing the two edges incident on it from there,
yielding a contradiction.
%At any subsequent moment in time any insertion of a good edge does not reduce the size of $\cM^*$ and if there is 
%only good edges are added to $\cM^*$, each of which may trigger at most one edge removal from $\cM^*$,
%so the size of $\cM^*$ cannot decrease from that moment onwards.  %, so it is always no smaller than that of $\cM'$ by more than one.
%Summarizing, it always holds that $|\cM^*| \ge \min\{|\cM|,|\cM'| -1\}$.
\QED
\end{proof}

\begin {remark}
\label{re:opt}
When $|\cM| < |\cM'|$, it is possible to gradually transform $\cM$ to $\cM'$ without ever being in deficit compared to the initial value of $\cM$, i.e., $|\cM^*| \ge |\cM|$ throughout the transformation process.
However, if $|\cM'| \le |\cM|$, this no longer holds true; refer to App.\ \ref{tightnessun} for more details.
%To see this, consider $\cM$ and $\cM'$ so that $H = \cM \oplus \cM'$ is a simple alternating cycle of any positive length. %It is easy to verify that throughout any transformation process and until treating the last edge of the cycle, it must be% %that $|\cM^*| < |\cM|$.
\end {remark}
%\clearpage
\ignore{
as follows. The sliding window of length $W = \Theta(\eps \cdot |\cM|)$ is halved into two smaller windows.
The sets of good and bad edges in $\cM'_t \setminus \cM^*_t$ can be computed within time $O(|\cM| + |\cM'|)$ in the obvious way,
and we can thus gradually simulate this static computation over the $W/2$ update steps of the first window, performing $O(\alpha \cdot \eps^{-3})$ computational steps following each update.
Moreover, following each edge deletion from the graph, we can update the sets of good and bad edges in constant time.
We may henceforth assume that the sets of good and bad edges %in $\cM'_{t+W/2} \setminus \cM^*_{t+W/2}$
are fully up-to-date at the start of the second window, and we continue to maintain them throughout this window.
(Note also that at the start of the second window, the source matching coincides with the old matching, i.e., $\cM^*_{t+W/2} = \cM_{t+W/2}$, since we did not make any changes to it throughout the first window.)
At every update step of the second window $i = t + W/2, t+W/2 + 1, \ldots,t'$, we add $O(\alpha \cdot \eps^{-3})$ edges of $\cM'_i \setminus \cM^*_i$ (if any) to $\cM^*_i$, by giving precedence to good edges, i.e., adding bad edges to $\cM^*_i$ only when there are no good edges to add. Following every such addition,
we delete the edges in $\cM^*_i$ incident on the newly added edge, and update the sets of good and bad edges in constant time.
(Note that $\cM^*_i$ is being changed during this process. Although the matching that we output as $\cM^*_i$ is the one resulting at the end of this process,
we may use $\cM^*_i$ to refer to any of the matchings obtained during this process.)

This transformation process naturally guarantees that the number of replacements made to the output matching is bounded by $O(\alpha \cdot \eps^{-3})$
in the worst-case.

At the end of the process, the output matching coincides with the target matching, and then we repeat.

To conclude the argument, we argue that the output matching $\cM^*_i$ is a valid $(1+O(\eps))$-MCM, for any $i, t \le i \le t'$.
%We first argue that the output matching $\cM^*_i$ is a valid matching. Indeed,
In the first window of length $W/2$ steps the output matching coincides with the old matching $\cM_i$,
which is a $(1+O(\eps))$-MCM by Lemma \ref{lazylemma}. It is left to prove the following lemma.

%we may henceforth consider just the second window.
%
%The following lemma
%To conclude the argument we employ the following lemma.
%Denote by $\cM_i$ the matching obtained from $\cM$ after removing from it all edges that got deleted from the graph between steps $t$ and $i$, for $i = t, t+1, \ldots, t'$.
\begin{lemma}
$\cM^*_i$ is a valid $(1+O(\eps))$-MCM, for any $i = t + W/2, t+W/2 + 1, \ldots,t'$.
\end{lemma}
\begin{proof}
It is immediate that the output matching $\cM^*_i$ is a valid matching, for any $i = t + W/2, t+W/2 + 1, \ldots,t'$,
as we proactively delete from it edges that share a common vertex with new edges added to it.
%guarantee that no two edges of the output matching share a common vertex by deleting edges
%following every edge addition to the output matching throughout the en second window.

Fix any index $i = t + W/2, t+W/2 + 1, \ldots,t'$. We next analyze the approximation guarantee of $\cM^*_i$.

Suppose first that we add only good edges to the output matching between update steps $t+W/2$ and $i$.
%In this case every newly added edge to the output matching triggers the deletion of at most one edge from it,
Recalling that $\cM^*_{t+W/2} = \cM_{t+W/2}$, the size of the output matching cannot be smaller than that of the old matching by more than $W/2$, which upper bounds the number of edges deleted from the graph in the entire second window, and in particular until step $i$, thus we have $|\cM^*_i| \ge |\cM_i| - W/2$.

We henceforth assume that at least one bad edge is added to the output matching between steps $t+W/2$ and $i$, and let $j$ be the last step when such an addition occurs.
Just before this addition, we have $|\cM^*_j| \ge |\cM'_j|$ by Lemma \ref{badedge}, thus we have
$|\cM^*_j| \ge |\cM'_j| - 1$ after adding that edges to $\cM^*_j$ and deleting the two edges incident on it from there.
At any subsequent moment in time until step $i$, only good edges are added to the output matching,
hence its size cannot be smaller than that of the target matching by more than $W/2 + 1$, thus we have $|\cM^*_i| \ge |\cM'_i| - W/2 - 1$.

We have shown that $|\cM^*_i| \ge \min\{|\cM_i|,|\cM'_i|\} - W/2 - 1$.
By Lemma \ref{lazylemma}, both $\cM_i$ and $\cM'_i$ are $(1+O(\eps))$-MCM.
It follows that $ |\cM| \le (1 + O(\eps)) \cdot \min\{|\cM_i|, |\cM'_i|\}$, hence $W = \Theta(\eps \cdot |\cM|) = O(\eps \cdot \min\{|\cM_i|,|\cM'_i|\})$,
which completes the proof of the lemma.
\QED
\end{proof}
}
\noindent \vspace{-2pt} \\
{\bf 3.2~ The Fully Dynamic Setting.~}
In this section we provide the second step in the proof of Theorem \ref{main}, showing that the simple transformation process described in Section 3.1 for static graphs can be
generalized for the fully dynamic setting, thus completing the proof of Theorem \ref{main}.
%both under edge updates and under vertex updates.
 %while incurring a negligible loss to the size of the transformed matching, and moreover, to its approximation guarantee.

% \label{fd}
Consider an arbitrary dynamic algorithm, Algorithm $\cA$, for maintaining a $\beta$-MCM with an update time of $T$, for any $\beta  \ge 1$ and $T$. The matching maintained by Algorithm $\cA$, denoted by $\cM^A_i$, for $i = 1,2,\ldots$, may change significantly following a single update step.
All that is guaranteed by Algorithm $\cA$ is that it can update the matching following every update step within a time bound of $T$, either in the worst-case sense or in the amortized sense, following which queries regarding the matching can be answered in (nearly) constant time.
%For a queried vertex $v$, the algorithm reports the only matched edge incident on $v$, or $\NULL$ if $v$ is free,
%while for a queried edge $e$ the algorithm reports whether edge $e$ is matched or not.
%All queries made following the same update step $i$ should be answered \emph{consistently} with respect to the same matching,
%but queries made in the next update step $i+1$ may be answered with respect to a completely different matching.
Recall also that we assume that,
%We   assume that at
for any update step $i$, %and for any integer $\ell = 1,2,\ldots,|\cM^A_i|$, $\ell$ arbitrary edges of the 
% can be output within time (nearly) linear in $\ell$;
the matching $\cM^A_i$ provided by Algorithm $\cA$ at step $i$  can be output within time (nearly) linear in the matching size.
%This assumption is natural, and indeed, all   known algorithms for maintaining approximate matchings satisfy it.
%Moreover, this assumption trivially holds when the matching size is (nearly) linear in $n$, which is the case, e.g., in  near-regular graphs and sufficiently dense graphs.

Our goal is to output a matching $\tilde \cM = \tilde \cM_i$, for $i = 1,2,\ldots$, possibly very different from $\cM^A = \cM^A_i$, which changes very slightly from one update step to the next.
To this end, the basic idea is to use the matching $\cM^A$  provided by Algorithm $\cA$ at a certain update step,
and then re-use it (gradually removing from it edges that get deleted from the graph) throughout a sufficiently long window of $\Theta(\eps \cdot |\cM^A|)$ consecutive update steps,
while gradually transforming it into a larger matching, provided again by Algorithm $\cA$ at some later step.

The \emph{gradual transformation process} is obtained by adapting the process described in Section 3.1 for static graphs to the fully dynamic setting.
Next, we describe this adaptation.
%Consider a pair of matchings $\cM$ and $\cM'$
%To prove Theorem \ref{main}, we show that for any two matchings $\cM$ and $\cM'$ (of the same graph) such that $\cM'$ is sufficiently larger than $\cM$,
%we can transform $\cM$ to $\cM'$ by a sequence of local improvements (replacements), so that the size of the matching throughout the process only goes up.
%Since the underlying graph changes dynamically, we also need to consider changes to the matching caused by edge insertions and deletions throughout the process.
%To bound the number of replacements in the worst-case,
%This step is insufficient by itself, and so we shall add an additional window of $W = \Theta(\eps \cdot |\cM|)$ updates,
%\ignore{
%More specifically,
We   assume that $\beta = O(1)$;
%Under this assumption,  %both $\cM$ and $\cM'$ provide $O(1)$-approximate maximum matchings, and in particular
the case of a general $\beta$ is addressed in Section 3.2.1.   % \ref{superconstant}.

Consider the beginning of a new time window, at some update step $t$.  %, which is also the end of the old time window.
Denote the matching provided by Algorithm $\cA$ at that stage by $\cM' = \cM^A_t$ and the matching output by our algorithm by $\cM = \tilde{\cM}_t$.
Recall that the entire matching $\cM' = \cM^A_t$ can be output in time (nearly) linear in its size, and we henceforth assume that $\cM'$ is given as a list of edges.
(For concreteness, we assume that the time needed for storing the edges of $\cM'$ in an appropriate list is $O(|\cM'|$.)
While $\cM'$ is guaranteed to provide a $\beta$-MCM at any update step, including $t$,
the approximation guarantee of $\cM$ may be worse. Nevertheless, we will show (Lemma \ref{complete}) that $\cM$ provides a
$(\beta(1+2\eps'))$-MCM for $G_t$. Under the assumption that $\beta = O(1)$, we thus have $|\cM| = O(|\cM'|)$.
%, and our goal is to guarantee that it is not much worse.
The length of the time window is $W = \Theta(\eps \cdot |\cM|)$, i.e., it starts at update step $t$ and ends at update step $t' = t + W-1$.
%since we have been re-using this matching
% same matching throughout the time window that just ended. %Whenever the approximation guarantee of $\cM_i$ reaches
During this time window, we gradually transform $\cM$ into (a possibly superset of) $\cM'$, using the transformation described in Section 3.1 for static graphs;
recall that the matching output throughout this transformation process is denoted by $\cM^*$.
We may assume that $|\cM|, |\cM'| = \Omega(1/\eps)$, where the constant hiding in the $\Omega$-notation is sufficiently large;
indeed, otherwise $|\cM| + |\cM'|  = O(1/\eps)$ and there is no need to apply the transformation process,
as the trivial worst-case recourse bound is $O(1/\eps)$.

We will show (Lemma \ref{complete}) that the output matching $\tilde \cM_i$ provides a $(\beta(1 + O(\eps))$-MCM at any update step $i$.
Two simple adjustments are needed for adapting the transformed matching $\cM^*$ of the static setting to the fully dynamic setting:
\begin{itemize}
\item
To achieve a low worst-case recourse bound and guarantee that the overhead in the update time (with respect to the original update time) is low in the worst-case,
we cannot carry out the entire computation at once (i.e. following a single update step), but should rather \emph{simulate it gradually} over the entire time window of the transformation process.
Specifically, recall that the transformation process for static graphs consists of two phases,
a preprocessing phase in which the matching $\cM' = \cM^A_t$ and the sets $\cG$ and $\cB$ of good and bad edges
 in $\cM' \setminus \cM$ are computed, and the actual transformation phase that transforms $\cM^*$, which is initialized as $\cM$, into (a possibly superset of) $\cM'$.
 Each of these phases requires time $O(|\cM| + |\cM'|) = O(|\cM|)$. The first phase does not make any replacements to $\cM^*$,
 whereas the second phase consists of a sequence of at most $|\cM'|$ constant-time operations, each of which may trigger a constant number of replacements to $\cM^*$.
The computation of the first phase is simulated
 % . This computation is simulated
 in the first $W/2$ update steps of the window, performing $O(|\cM| + |\cM'|) / (W/2) = O(1/\eps)$ computation steps and zero replacements to $\cM^*$ following every update step.
The computation of the second phase is simulated in the second $W/2$ update steps of the window,
 %performing $O(|\cM| + |\cM'|) / (W/2)$ computation steps following every update step.
%Then a sequence of operations is performed, where each operation starts by adding an edge of $\cM' \setminus \cM^*$ to $\cM^*$,
%giving precedence to good edges (i.e., adding a bad edge to $\cM^*$ only when there are no good edges to add),
%and then deleting from $\cM^*$ the at most two edges incident on the newly added edge.
%Each such operation triggers a constant-time update of the lists $\cG$ and $\cB$ of good and bad edges from $\cM' \setminus \cM^*$.
%The total number of such operations is bounded by $|\cM| + |\cM'|$; this computation is simulated in the second $W/2$ update steps of the window,
performing $O(|\cM| + |\cM'|) / (W/2) = O(1/\eps)$ computation steps and replacements to $\cM^*$ following every update step.
\item
Denote by $\cM^*_i$ the matching output at the $i$th update step by the resulting gradual transformation process, which simulates $O(1/\eps)$ computation steps and replacements to the output matching following every update step.
%The resulting transformation process,  is carried out with respect to the two matchings $\cM$ and $\cM'$ of the (static) graph $G_t$
%at the beginning of the time window. %We make just two adjustments In order to use the transformation for static graphs
While $\cM^*_i$ is a valid matching for the (static) graph $G_t$ at the beginning of the time window, %it is not necessarily a valid matching for
%subsequent graphs $G_i$ of the window, for $i = t+1, t+2, \ldots, t'$, due to edges that are
some of its edges may get deleted from the graph in subsequent update steps $i= t+1, t+2, \ldots, t'$.
Consequently, the matching that we shall output for graph $G_i$, denoted by $\tilde {\cM}_i$, is
%except that we make sure to remove from the output matching $\cM^*$ the edges that have been deleted from the graph during this process.
%That is, the matching output at update step $i$ for $i = t, t+1, \ldots, t'$, denoted by $\cM^*_i$, is
the one obtained from $\cM^*_i$ by removing from it all edges that got deleted from the graph between steps $t$ and $i$.
\end{itemize}

Once the current time window terminates, a new time window starts, and the same transformation process is repeated, with $\tilde \cM_{t'}$ serving as $\cM$
and $\cM^A_{t'}$ serving as $\cM'$.
Since all time windows are handled in the same way,
it suffices to analyze the output matching of the current time window,
% as the same analysis holds for any time window,
and this analysis would carry over to the entire update sequence.
%by a straightforward induction.

It is immediate that the output matching $\tilde \cM_i$ is a valid matching for any $i = t, t+1, \ldots, t'$.
Moreover, since we make sure to simulate $O(1/\eps)$ computation steps and replacements following every update step,
the worst-case recourse bound of the resulting algorithm is bounded by $O(1/\eps)$ and the update time is bounded by $T + O(1/\eps)$,
where this time bound is worst-case/amortized if the time bound $T$ of Algorithm $\cA$ is worst-case/amortized.

It is left to bound the approximation guarantee of the output matching $\tilde \cM_i$.  % is always in check.
Recall that $W = \Theta(\eps \cdot |\cM|)$, and write $W = \eps' \cdot |\cM|$, with $\eps' = \Theta(\eps)$.
(We assume that $\eps$ is sufficiently small so that $\eps' \le 1/2$. We need this restriction on $\eps'$ to apply Lemma \ref{lazylemma2}.)
%It remains to prove the following lemma.
\begin{lemma} \label{complete}
$\tilde \cM_t$ and $\tilde \cM_{t'}$ provide a $(\beta(1+2\eps'))$-MCM for $G_t$ and $G_{t'}$, respectively.  % to the maximum matching.
Moreover, $\tilde \cM_i$ provides  a $(\beta((1+2\eps')^2))$-MCM for $G_i$, for any $i = t, t+1, \ldots,t'$.
\end{lemma}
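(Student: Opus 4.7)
The plan is to prove both claims---the tight bound at the endpoints $t$ and $t'$ and the slightly weaker bound at intermediate steps---by combining the static transformation size guarantee (Lemma~\ref{sizebound}) with the stability lemma (Lemma~\ref{lazylemma2}). I will treat the endpoint at $t'$ first (the endpoint at $t$ is then the inductive hypothesis for the next window, with the base case being the trivial empty window), and then analyze the intermediate values $t<i<t'$ by splitting the window into its two halves.

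For the endpoint at $t'$: since the transformation is completely simulated within the window, at step $t'$ we have $\cM^*_{t'} \supseteq \cM'$, so $\tilde \cM_{t'}$ contains $\cM'$ with at most $t'-t < W$ edges removed (only those of $\cM'$ deleted from $G$ during $[t+1,t']$). Because $\cM' = \cM^A_t$ is a $\beta$-MCM for $G_t$, Lemma~\ref{lazylemma2} applied to $\cM'$ at time $t$, with parameter $\gamma = (t'-t)/|\cM'|$, yields that $\cM' \cap G_{t'}$ is a $(\beta(1+2\gamma))$-MCM for $G_{t'}$; verifying $\gamma \le \eps'$ (see obstacle below) gives the desired $(\beta(1+2\eps'))$ bound for $\tilde \cM_{t'}$.

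For the intermediate analysis, the cases are:
\textbf{First half} ($t \le i \le t+W/2$): Here $\cM^*_i=\cM$ since the preprocessing phase does not modify $\cM^*$, so $\tilde \cM_i$ is just $\tilde \cM_t$ with edges deleted in $[t+1,i]$ removed. Since $\tilde \cM_t$ is a $(\beta(1+2\eps'))$-MCM for $G_t$ (by the endpoint claim applied at the start of the window) and $i-t \le W/2 \le \eps'|\cM|$, Lemma~\ref{lazylemma2} applied with base matching $\tilde \cM_t$ immediately gives that $\tilde \cM_i$ is a $(\beta(1+2\eps')(1+2\eps'))=(\beta(1+2\eps')^2)$-MCM for $G_i$.
\textbf{Second half} ($t+W/2 < i \le t'$): Lemma~\ref{sizebound} guarantees $|\cM^*_i| \ge \min\{|\cM|,|\cM'|-1\}$, and since at most $W-1$ deletions occur in $[t+1,i]$, we have $|\tilde \cM_i| \ge \min\{|\cM|,|\cM'|-1\} - W$. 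Combining this with $\mathrm{opt}(G_t) \le \beta(1+2\eps')|\cM|$ (from the endpoint hypothesis at $t$), $\mathrm{opt}(G_t) \le \beta|\cM'|$ (since $\cM'$ is a $\beta$-MCM for $G_t$), and $|\mathrm{opt}(G_i)-\mathrm{opt}(G_t)| \le W$, a short algebraic manipulation yields $\mathrm{opt}(G_i)/|\tilde \cM_i| \le \beta(1+2\eps')^2$.

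The main obstacle lies in making the constants work in the regime $|\cM| > |\cM'|$: then $W = \eps'|\cM|$ may exceed $\eps'|\cM'|$, and Lemma~\ref{lazylemma2} cannot be invoked with $\cM'$ as the base using parameter $\eps'$. The resolution is to exploit the fact that $|\cM| \le \mathrm{opt}(G_t) \le \beta|\cM'|$, so $W \le \eps'\beta|\cM'|$; by choosing the hidden constant in $W = \Theta(\eps \cdot |\cM|)$ small enough (effectively replacing $\eps'$ by $\eps'/\beta$ when calling Lemma~\ref{lazylemma2} on $\cM'$) the condition $\gamma \le \eps'$ holds and the bound follows. Equivalently, one splits on whether $|\cM| \le |\cM'|$ (apply Lemma~\ref{lazylemma2} to $\cM'$ directly) or $|\cM| > |\cM'|$ (in which case $\min\{|\cM|,|\cM'|-1\} = |\cM'|-1$ is large enough compared to $\mathrm{opt}(G_t)/\beta$ to absorb the $W$ loss). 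Once this is handled, the argument for the endpoint at $t'$ and the second-half intermediate bound fall into place.
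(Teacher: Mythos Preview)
Your proposal is essentially correct and follows the same overall strategy as the paper: establish the endpoint bound at $t'$ via $\cM^*_{t'}\supseteq\cM'$ plus Lemma~\ref{lazylemma2}, get the bound at $t$ by induction on windows, and for intermediate $i$ combine Lemma~\ref{sizebound} with the stability lemma. There are two minor differences worth noting.

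First, the paper handles the intermediate steps more uniformly than you do. Rather than splitting the window into its two halves, the paper argues directly that $\cM^*_i$ (the static-transformation matching, before removing deleted edges) is itself a $(\beta(1+2\eps'))$-MCM \emph{for $G_t$}: if $|\cM^*_i|\ge|\cM|$ this follows from the endpoint guarantee on $\cM$, and if $|\cM^*_i|\ge|\cM'|-1$ it follows from the $\beta$-MCM guarantee on $\cM'$ together with $|\cM'|=\Omega(1/\eps')$. Then a single further application of Lemma~\ref{lazylemma2} to $\cM^*_i$ (viewed as a matching of $G_t$) yields the $(\beta(1+2\eps')^2)$-bound for $\tilde\cM_i$ in $G_i$. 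This is a bit cleaner than your first-half/second-half split and your direct algebraic manipulation, and it produces the stated constant $(1+2\eps')^2$ on the nose rather than something of the form $1+O(\eps')$.

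Second, the ``obstacle'' you raise (the possibility that $|\cM|>|\cM'|$, so that $W=\eps'|\cM|$ may exceed $\eps'|\cM'|$) is a genuine subtlety, and your resolution via $|\cM|\le\beta|\cM'|$ is correct. The paper does not flag this explicitly; it relies on the standing assumption $\beta=O(1)$ (so $|\cM|=O(|\cM'|)$) together with the freedom in choosing $\eps'=\Theta(\eps)$, which absorbs the constant. So your extra care here is justified but not strictly necessary under the paper's setup.
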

\begin{proof}
First, we bound the approximation guarantee of the matching $\tilde \cM_{t'}$,
which is obtained from $\cM^*_{t'}$ by removing from it all edges that got deleted from the graph throughout the time window.
By the description of the transformation process, $\cM^*_{t'}$ is a superset of $\cM'$,
hence $\tilde \cM_{t'}$ is a superset of the matching obtained from $\cM'$ by removing from it all edges that got deleted throughout the time window.
Since $\cM'$ is a $\beta$-MCM for $G_t$,
Lemma \ref{lazylemma2} implies that $\tilde \cM_{t'}$ is a $(\beta(1+2\eps'))$-MCM for $G_{t'}$.
More generally, this argument shows that the matching obtained at the end of any time window is a $(\beta(1+2\eps'))$-MCM for the graph at that step.

Next, we argue that the matching obtained at the start of any time window (as described above) is a $(\beta(1+2\eps'))$-MCM for
the graph at that step. This assertion is trivially true for the first time window, where both the matching and the graph are empty.
For any subsequent time window, this assertion follows from the fact that the matching at the start of a new time window is the one obtained at the end of the old time window, for which we have already shown that the required approximation guarantee holds.
It follows that $\tilde \cM_t = \cM$ is a $(\beta(1+2\eps'))$-MCM for $G_t$.
%For the induction step, we assume that
%and bound the approximation guarantee of $\tilde \cM_{t'}$ with respect to $G_{t'}$.
%This completes the proof of the induction step.

Finally, we bound the approximation guarantee of the output matching $\tilde \cM_i$ in the entire time window. (It suffices to consider the interior of the window.)
Lemma \ref{sizebound} implies that $|\cM^*_i| \ge \min\{|\cM|,|\cM'|-1\}$, for any $i = t, t+1, \ldots,t'$.
We argue that $\cM^*_i$ is a $(\beta(1+2\eps'))$-MCM for $G_t$.
If $|\cM^*_i| \ge |\cM|$, then this assertion follows from the fact that $\cM$ provides such an approximation guarantee.
We henceforth assume that $|\cM^*_i| \ge |\cM'|-1$. Recall that $|\cM'| = \Omega(1/\eps) = \Omega(1/\eps')$, where the constants hiding in the $\Omega$-notation are sufficiently large,
hence removing a single edge from $\cM'$ cannot hurt the approximation guarantee by more than an additive factor of, say $\eps'$, i.e., less than $\beta (2\eps')$.
%as otherwise $|\cM| + |\cM'| = O(|\cM'|) = O(1/\eps)$ and things are trivial
%(the time window is 1 and the worst-case recourse bound is $O(1/\eps)$).
Since $\cM'$ provides a $\beta$-MCM for $G_t$, it follows that $\cM^*_i$ is indeed a $(\beta(1+2\eps'))$-MCM for $G_t$,
which completes the proof of the above assertion.
%it holds that $|\cM^{OPT}| \le \beta \cdot |\cM'|$, where $\cM^{OPT}$ is a maximum matching for $G_t$.
Consequently, Lemma \ref{lazylemma2} implies that $\tilde \cM_{i}$, which is obtained from $\cM^*_i$
by removing from it all edges that got deleted from the graph between steps $t$ and $i$,
is a $(\beta((1+2\eps')^2))$-MCM for $G_{i}$.
\QED
\end{proof}
%At the end of this new window, although the new matchings coincide, it does not hold that they provide a $\beta$-approximate maximum matching for the new graph,
%since the graph has changed during this time window; indeed, this is exactly the reason we argued that the approximation guarantee of  $\cM = \cM_t$ may be worse than $\beta$
%}
%Recall that we assumed that $\beta = O(1)$, which yields $|\cM'| = O(|\cM|)$.
%In particular, the total runtime of the transformation process and the number of replacements to the output matching are bounded by $O(|\cM| + |\cM'|)$, which reduces to
%$O(|\cM|)$ under this assumption. Consequently, the number of computation steps and replacements performed per update step, namely, $O(|\cM| + |\cM'|) / W$,
%is reduced to $O(1/\eps)$.
\vspace{7pt} \noindent 
%\subsubsection
{\bf 3.2.1~ A general approximation guarantee.~}
%\label{superconstant}
In this section we consider the case of a general approximation parameter $\beta \ge 1$.  %, however, $\cM'$ may be much larger than $\cM$.
The bound on the approximation guarantee of the output matching provided by Lemma \ref{complete}, namely $(\beta((1+2\eps')^2))$, remains unchanged.
Recalling that $\eps' \le 1/2$, it follows that the size of $\cM'$ cannot be larger than that of $\cM$ by more than a factor of $(\beta((1+2\eps')^2)) \le 2\beta$.
Consequently, the number of computation steps and replacements performed per update step, namely, $O(|\cM| + |\cM'|) / (W/2)$, is no longer bounded by $O(1/\eps)$,
but rather by $O(\beta/\eps)$.
To achieve a bound of $O(1/\eps)$ for a general $\beta$, we shall use a matching $\cM''$ different from $\cM'$, which includes a possibly small fraction of the edges of $\cM'$.
Recall that we can output $\ell$ arbitrary edges of the matching $\cM' = \cM^A_t$ in time (nearly) linear in $\ell$, for any integer $\ell = 1,2,\ldots,|\cM'|$.
Let $\cM''$ be a matching that consists of (up to) $2|\cM|$ arbitrary edges of $\cM'$;
that is, if $|\cM'| > 2 |\cM|$, $\cM''$ consists of $2|\cM|$ arbitrary edges of $\cM'$, otherwise $\cM'' = \cM'$.
We argue that $\cM''$ is a $\beta$-MCM for $G_t$.
Indeed, if $|\cM'| > 2 |\cM|$ the approximation guarantee follows from the approximation guarantee of $\cM$ and the fact that $\cM''$ is twice larger than $\cM$,
whereas in the complementary case the approximation guarantee follows from that of $\cM'$.
%of $\cM''$ is at least twice better than the approximation guarantee of $\cM$, i.e.,
%it provides a $\beta$-approximate maximum matching for $G_t$, whereas in the latter case it provides an approximation guarantee of $\beta$ by definition.
In any case it is immediate that  $|\cM''| = O(|\cM|)$.
(For concreteness, we assume that the time needed for storing the edges of $\cM''$ in an appropriate list is $O(|\cM''|) = O(|\cM|)$.)
%Recall that We   assume that at any update step $i$ and for any integer $\ell = 1,2,\ldots,|\cM^A_i|$, $\ell$ arbitrary edges of the matching $\cM^A_i$
%provided by Algorithm $\cA$ at step $i$ can be output within time (nearly) linear in $\ell$;
We may henceforth carry out the entire transformation process with $\cM''$ taking the role of $\cM'$, and in this way guarantee that
the number of computation steps and replacements to the output matching performed per update step is reduced from $O(\beta/\eps)$ to $O(1/\eps)$.
%(For concreteness, we assume that the time needed for storing the edges of $\cM'$ in an appropriate list is $O(|\cM'|$.)
\noindent \vspace{8pt} \\
{\bf 3.3~ Proof of Theorem~\ref{main2}.~}
The proof of Theorem~\ref{main2} is very similar to the one of Theorem~\ref{main}. Specifically, we derive  Theorem~\ref{main2} by making a couple of simple adjustments to the proof of  Theorem~\ref{main} given above, which we sketch next.
First, instead of using the transformation of Theorem~\ref{th:MCM}, we use the one of Theorem~\ref{th:main}, whose proof appears in Section \ref{sec:wema}.
Second, the \emph{stability} property of unweighted matchings used in the proof of Theorem~\ref{main} is that   the maximum matching size changes by at most 1 following each update step.
This stability property enables us in the proof of Theorem \ref{main} to consider a time window of $W = \Theta(\eps \cdot |\cM|)$ update steps, so that any $\beta$-MCM computed at the beginning of the window will  provide (after removing from it all the edges that get deleted from the graph) a $(\beta(1+\eps))$-MCM throughout the entire window, for any $\beta \ge 1$.
It is easy to see that this stability property   generalizes for weighted matchings, where the maximum matching weight may change by an additive factor of at most $\psi$. (Recall that the aspect ratio of the dynamic graph is always bounded by $\psi$.)
In order to obtain a $(\beta(1+\eps))$-MWM throughout the entire time window, it suffices to consider a time window of $W' = W'_\psi = W / \psi = \Theta(\eps \cdot |\cM| / \psi)$, i.e.,
 a time window  shorter than that used for unweighted matchings by a factor of $\psi$,
and as a result the update time of the resulting algorithm will grow from $T + O(1/\eps)$ to $T + O(\psi/\eps)$ and the worst-case recourse bound will grow from $O(1/\eps)$ to $O(\psi/\eps)$.
%hence the update time and recourse bounds grow proportionally, namely, they by a factor of $\psi$.
%Assuming the minimum edge weight is 1, then the maximum edge weight is $\psi$,

\ignore{
%Next, we describe the gradual transformation process.
\clearpage
Let $t$ and $t' = t + W$ denote the update steps in which this transformation process starts and ends, respectively, where $W = \Theta(\eps \cdot |\cM^*_t|)$.
The matching $\cM^*_{i}$ that we output during this time window, for $i = t, t+1, \ldots, t'$,  hereafter the \emph{output matching},
is gradually transformed into $\cM^A_{i}$, hereafter the \emph{target matching}, until coinciding with it at (or before) step $t'$, i.e., $\cM^*_{t'} = \cM^A_{t'}$.
The target matching $\cM^A_i$, for a general $i = t, t+1, \ldots, t'$, $\cM^A_i$ is obtained from $\cM^A_t$ by removing from it all edges that got deleted from the graph between steps $t$ and $i$.
The output matching $\cM^*_t$ at the beginning of the new window is also the output matching at the end of the old window;
the approximation guarantee of the matching output at the end of a time window coincides with the target matching of the beginning of that window,
hence its approximation guarantee is no greater than $\beta + 4\eps$ by Lemma \ref{lazylemma}.
%Denote by $\cM_i$ the matching obtained from $\cM$ and
%Let $\cM
%The output matching $\cM^*_i$ is different than the \emph{old matching} $\cM_i$, i.e., the matching obtained from $\cM_t$ by removing from it all edges that got deleted from the %graph between steps $t$ and $i$, hence the need for this additional notation.
%Similarly to $\cM_i$, the target matching $\cM'_{i}$ may change over time due to edges that are deleted from the graph during the transformation process,
%but no other changes are made to $\cM'_{i}$,
%i.e., $\cM'_i$ is obtained from $\cM'$ after removing from it all edges that got deleted until step $i$.  % during the transformation process.
}

\section{Optimality of the Recourse Bound} \label{recourse}
In this section we show that an approximation guarantee of $(1+\eps)$ requires a recourse bound of $\Omega(1/\eps)$,
even in the amortized sense and even in the incremental (insertion only) and decremental (deletion only) settings.
We only consider edge updates, but the argument extends seamlessly to vertex updates.
%even in the incremental (insertion only) setting and even in the amortized sense.
This lower bound of $\Omega(1/\eps)$ on the recourse bound does not depend on the update time of the algorithm in any way.
Let us fix $\eps$ to be any parameter satisfying $\eps = \Omega(1/n), \eps \ll 1$, where $n$ is the (fixed) number of vertices.

%??Fix an arbitrary parameter $\eps \ll 1$.??

Consider a simple path $P_\ell = (v_1,v_2,\ldots,v_{2\ell})$ of length $2\ell-1$, for an  integer $\ell = c(1/\eps)$ such that $\ell \ge 1$ and $c$ is a sufficiently small constant. (Thus $P_\ell$ spans at least two but no more than $n$ vertices.) There is a single maximum matching $\cM^{OPT}_\ell$ for $P_\ell$, of size $\ell$, which is also the only $(1+\eps)$-MCM for $P_\ell$.
After adding the two edges $(v_0,v_1)$ and $(v_{2\ell},v_{2\ell+1})$ to $P_\ell$, the maximum matching $\cM^{OPT}_\ell$ for the old path $P_\ell$
does not provide a $(1+\eps)$-MCM for the new path, $(v_0,v_1,\ldots,v_{2\ell+1})$, which we may rewrite as $P_{\ell + 1} = (v_1,v_2,\ldots, v_{2(\ell+1)})$.
The only way to restore a $(1+\eps)$-approximation guarantee is by removing all $\ell$ edges of $\cM^{OPT}_\ell$ and adding the remaining $\ell + 1$ edges instead,
which yields $\cM^{OPT}_{\ell+1}$.
One may carry out this argument repeatedly
until the length of the path reaches, say, $4\ell -1$.  %, for $\ell' = c'(1/\eps)$, where $c'$ is say $2c$.
%over and over until the length of the path increases by a constant factor;
The amortized number of replacements to the matching per update step throughout this process is $\Omega(1/\eps)$.
Moreover, the same amortized bound, up to a small constant factor, holds if we start from an empty path instead of a path of length $2\ell-1$.  % and repeat until its length reaches $\ell'$.
We then delete all $4\ell - 1$ edges of the final path and start again from scratch, which may reduce the amortized bound by another small constant.
In this way we get an amortized recourse bound of $\Omega(1/\eps)$ for the fully dynamic setting.

To adapt this lower bound to the incremental setting, we construct $n' = \Theta(\eps \cdot n)$ vertex-disjoint copies $P^1, P^2,\ldots,P^{n'}$
of the aforementioned incremental path, one after another, in the following way.
Consider the $i$th copy $P^i$, from the moment its length becomes $2\ell - 1$ and until it reaches $4\ell - 1$.
If at any moment  during this gradual construction of $P^i$, the matching restricted to $P^i$ is not the (only) maximum matching for $P^i$,
we \emph{halt} the construction of $P^i$ and move on to constructing the $(i+1)$th copy $P^{i+1}$, and then subsequent copies, in the same way.
A copy whose construction started but was halted is called \emph{incomplete}; otherwise it is \emph{complete}.
(There are also \emph{empty} copies, whose construction has not started yet.)
For any  incomplete copy $P^j$, the matching restricted to it
%, denoted by $\cM^j$,
is not the   maximum matching for $P^j$,
hence its approximation guarantee is worse than $1+\eps$; more precisely, the approximation guarantee provided by any matching other than the maximum matching for $P^j$
is at least $1 + c' \cdot \eps$, for a constant $c'$ that can be made as large as we want by decreasing the aforementioned constant $c$, or equivalently, $\ell$.
(Recall that $\ell = c(1/\eps)$.)
%if the matching $\cM^j$ is updated to provide
If the matching restricted to $P^j$ is changed to the maximum matching for $P^j$
at some later moment in time,  % (which may only happen if it is a maximum matching for it),
we return to that incomplete copy and resume its construction from where we left off, thereby \emph{temporarily suspending} the construction of some other copy $P_{j'}$.
The construction of $P^j$ may get halted again, in which case we return to handling the temporarily suspended copy $P_{j'}$, otherwise we return to handling
$P_{j'}$ only after the construction of $P^j$ is complete, and so forth.
In this way we maintain the invariant that the approximation guarantee of the matching restricted to any incomplete copy (whose construction is not temporarily suspended) is at least $1 + c' \cdot \eps$, for a sufficiently large constant $c'$.
While incomplete copies may get completed later on, a complete copy remains complete throughout the entire update sequence.
At the end of the update sequence no copy is empty or temporarily suspended, i.e., any copy at the end of the update sequence is either incomplete or complete.
%At the end of the update sequence no copy is empty, i.e., all copies are either incomplete or complete.
The above argument implies that any complete copy has an amortized recourse bound of $\Omega(1/\eps)$, over the update steps restricted to that copy.
%otherwise it is called
Observe also that at least a constant fraction of the $n'$ copies must be complete at the end of the update sequence, otherwise the entire matching cannot provide a $(1+\eps)$-MCM for the entire graph, i.e., the graph obtained from the union of these $n'$ copies.
It follows that the amortized recourse bound over the entire update sequence is $\Omega(1/\eps)$.
%  have a maximum matching for $P^i$ or the approximation guarantee for that matching is worse than $1+\eps$.
%either the amortized number of replacements is $\Omega(1/\eps)$ or the matching restricted to this path provides an approximation guarantee worse than $1+\eps$.
%A a constant fraction of these copies must
%the amortized number of replacements will be $\Omega(1/\eps)$.
%Adding then the two edges $(v_0,v_1)$ and $(v_{\ell},v_{\ell+1}

The lower bound for the incremental setting can be extended to the decremental setting using a symmetric argument to the one given above.

%%
%% Bibliography
%%
%\subparagraph*{Acknowledgements.}
%The authors thank Thatchaphol Saranurak for fruitful discussions.

\clearpage
\bibliographystyle{latex8}
\bibliography{randomMMbibfile}

\begin{thebibliography}{10}\setlength{\itemsep}{-1ex}\small

\bibitem{ADKKP16}
I.~Abraham, D.~Durfee, I.~Koutis, S.~Krinninger, and R.~Peng.
\newblock On fully dynamic graph sparsifiers.
\newblock In {\em Proc. of 57th FOCS}, pages 335--344, 2016.

\bibitem{ADJ18}
S.~Angelopoulos, C.~D{\"{u}}rr, and S.~Jin.
\newblock Online maximum matching with recourse.
\newblock In {\em 43rd International Symposium on Mathematical Foundations of
  Computer Science, {MFCS} 2018, August 27-31, 2018, Liverpool, {UK}}, pages
  8:1--8:15, 2018.

\bibitem{ACCSW18}
M.~Arar, S.~Chechik, S.~Cohen, C.~Stein, and D.~Wajc.
\newblock Dynamic matching: Reducing integral algorithms to
  approximately-maximal fractional algorithms.
\newblock In {\em Proc. 45th ICALP}, pages 7:1--7:16, 2018.

\bibitem{ABBMS17}
S.~Assadi, M.~Bateni, A.~Bernstein, V.~S. Mirrokni, and C.~Stein.
\newblock Coresets meet {EDCS:} algorithms for matching and vertex cover on
  massive graphs.
\newblock {\em CoRR}, abs/1711.03076, 2017.

\bibitem{AK17}
S.~Assadi and S.~Khanna.
\newblock Randomized composable coresets for matching and vertex cover.
\newblock In {\em Proceedings of the 29th {ACM} Symposium on Parallelism in
  Algorithms and Architectures, {SPAA} 2017, Washington DC, USA, July 24-26,
  2017}, pages 3--12, 2017.

\bibitem{AOSS18}
S.~Assadi, K.~Onak, B.~Schieber, and S.~Solomon.
\newblock Fully dynamic maximal independent set with sublinear update time.
\newblock In {\em Proc. 50th STOC}, 2018 (to appear).

\bibitem{ausiello2011complexity}
G.~Ausiello, V.~Bonifaci, and B.~Escoffier.
\newblock Complexity and approximation in reoptimization.
\newblock In {\em Computability in Context: Computation and Logic in the Real
  World}, pages 101--129. World Scientific, 2011.

\bibitem{AusielloEMP09}
G.~Ausiello, B.~Escoffier, J.~Monnot, and V.~T. Paschos.
\newblock Reoptimization of minimum and maximum traveling salesman's tours.
\newblock {\em J. Discrete Algorithms}, 7(4):453--463, 2009.

\bibitem{BGKPSS15}
N.~Bansal, A.~Gupta, R.~Krishnaswamy, K.~Pruhs, K.~Schewior, and C.~Stein.
\newblock A 2-competitive algorithm for online convex optimization with
  switching costs.
\newblock In {\em Proc. of APPROX-RANDOM}, pages 96--109, 2015.

\bibitem{BGS11}
S.~Baswana, M.~Gupta, and S.~Sen.
\newblock Fully dynamic maximal matching in ${O}(\log n)$ update time.
\newblock In {\em Proc. of 52nd FOCS}, pages 383--392, 2011 (see also {\em
  SICOMP'15} version, and subsequent erratum).

\bibitem{bender2015reallocation}
M.~A. Bender, M.~Farach-Colton, S.~P. Fekete, J.~T. Fineman, and S.~Gilbert.
\newblock Reallocation problems in scheduling.
\newblock {\em Algorithmica}, 73(2):389--409, 2015.

\bibitem{bender2017cost}
M.~A. Bender, M.~Farach-Colton, S.~P. Fekete, J.~T. Fineman, and S.~Gilbert.
\newblock Cost-oblivious storage reallocation.
\newblock {\em ACM Transactions on Algorithms (TALG)}, 13(3):38, 2017.

\bibitem{BFH19}
A.~Bernstein, S.~Forster, and M.~Henzinger.
\newblock A deamortization approach for dynamic spanner and dynamic maximal
  matching.
\newblock In {\em Proceedings of the Thirtieth Annual {ACM-SIAM} Symposium on
  Discrete Algorithms, {SODA} 2019, San Diego, California, USA, January 6-9,
  2019}, pages 1899--1918, 2019.

\bibitem{bernstein2019deamortization}
A.~Bernstein, S.~Forster, and M.~Henzinger.
\newblock A deamortization approach for dynamic spanner and dynamic maximal
  matching.
\newblock In {\em Proceedings of the Thirtieth Annual ACM-SIAM Symposium on
  Discrete Algorithms}, pages 1899--1918. SIAM, 2019.

\bibitem{BHR17}
A.~Bernstein, J.~Holm, and E.~Rotenberg.
\newblock Online bipartite matching with amortized
  {\textdollar}o({\textbackslash}log{\^{}}2 n){\textdollar} replacements.
\newblock In {\em Proc. of 28th SODA}, pages 692--711, 2018.

\bibitem{BKPPS17}
A.~Bernstein, T.~Kopelowitz, S.~Pettie, E.~Porat, and C.~Stein.
\newblock Simultaneously load balancing for every p-norm, with reassignments.
\newblock In {\em Proc. 8th ITCS}, pages 51:1--51:14, 2017.

\bibitem{BS15}
A.~Bernstein and C.~Stein.
\newblock Fully dynamic matching in bipartite graphs.
\newblock In {\em Proc. 42nd ICALP}, pages 167--179, 2015.

\bibitem{BS16}
A.~Bernstein and C.~Stein.
\newblock Faster fully dynamic matchings with small approximation ratios.
\newblock In {\em Proc. of 26th SODA}, pages 692--711, 2016.

\bibitem{BHN16}
S.~Bhattacharya, M.~Henzinger, and D.~Nanongkai.
\newblock New deterministic approximation algorithms for fully dynamic
  matching.
\newblock In {\em Proc. 48th STOC}, pages 398--411, 2016.

\bibitem{BHN17}
S.~Bhattacharya, M.~Henzinger, and D.~Nanongkai.
\newblock Fully dynamic maximum matching and vertex cover in $o(\log^3 n)$
  worst case update time.
\newblock In {\em Proc. of 28th SODA}, pages 470--489, 2017.

\bibitem{Bilo18}
D.~Bil{\`{o}}.
\newblock New algorithms for steiner tree reoptimization.
\newblock In {\em 45th International Colloquium on Automata, Languages, and
  Programming, {ICALP} 2018, July 9-13, 2018, Prague, Czech Republic}, pages
  19:1--19:14, 2018.

\bibitem{BiloBKKMSZ11}
D.~Bil{\`{o}}, H.~B{\"{o}}ckenhauer, D.~Komm, R.~Kr{\'{a}}lovic,
  T.~M{\"{o}}mke, S.~Seibert, and A.~Zych.
\newblock Reoptimization of the shortest common superstring problem.
\newblock {\em Algorithmica}, 61(2):227--251, 2011.

\bibitem{DBLP:conf/mfcs/BonamyBHIKMMW19}
M.~Bonamy, N.~Bousquet, M.~Heinrich, T.~Ito, Y.~Kobayashi, A.~Mary,
  M.~M{\"{u}}hlenthaler, and K.~Wasa.
\newblock The perfect matching reconfiguration problem.
\newblock In P.~Rossmanith, P.~Heggernes, and J.~Katoen, editors, {\em 44th
  International Symposium on Mathematical Foundations of Computer Science,
  {MFCS} 2019, August 26-30, 2019, Aachen, Germany}, volume 138 of {\em
  LIPIcs}, pages 80:1--80:14. Schloss Dagstuhl - Leibniz-Zentrum f{\"{u}}r
  Informatik, 2019.

\bibitem{bonsma2009finding}
P.~Bonsma and L.~Cereceda.
\newblock Finding paths between graph colourings: Pspace-completeness and
  superpolynomial distances.
\newblock {\em Theoretical Computer Science}, 410(50):5215--5226, 2009 (See
  also Proc. of MFCS'07).

\bibitem{BoriaP10}
N.~Boria and V.~T. Paschos.
\newblock Fast reoptimization for the minimum spanning tree problem.
\newblock {\em J. Discrete Algorithms}, 8(3):296--310, 2010.

\bibitem{BLSZ14}
B.~Bosek, D.~Leniowski, P.~Sankowski, and A.~Zych.
\newblock Online bipartite matching in offline time.
\newblock In {\em Proc. 55th FOCS}, pages 384--393, 2014.

\bibitem{BLSZ15}
B.~Bosek, D.~Leniowski, P.~Sankowski, and A.~Zych.
\newblock Shortest augmenting paths for online matchings on trees.
\newblock In {\em Proc. of 13th WAOA}, pages 59--71, 2015.

\bibitem{BLZS17}
B.~Bosek, D.~Leniowski, P.~Sankowski, and A.~Zych-Pawlewicz.
\newblock A tight bound for shortest augmenting paths on trees.
\newblock In {\em Proc. 13th LATIN}, pages 201--216, 2018.

\bibitem{DBLP:conf/wg/BousquetHIM19}
N.~Bousquet, T.~Hatanaka, T.~Ito, and M.~M{\"{u}}hlenthaler.
\newblock Shortest reconfiguration of matchings.
\newblock In I.~Sau and D.~M. Thilikos, editors, {\em Graph-Theoretic Concepts
  in Computer Science - 45th International Workshop, {WG} 2019, Vall de
  N{\'{u}}ria, Spain, June 19-21, 2019, Revised Papers}, volume 11789 of {\em
  Lecture Notes in Computer Science}, pages 162--174. Springer, 2019.

\bibitem{CHK16}
K.~Censor{-}Hillel, E.~Haramaty, and Z.~S. Karnin.
\newblock Optimal dynamic distributed {MIS}.
\newblock In {\em Proceedings of the 2016 {ACM} Symposium on Principles of
  Distributed Computing, {PODC} 2016, Chicago, IL, USA, July 25-28, 2016},
  pages 217--226, 2016.

\bibitem{CS18}
M.~Charikar and S.~Solomon.
\newblock Fully dynamic almost-maximal matching: Breaking the polynomial
  worst-case time barrier.
\newblock In {\em Proc. 45th ICALP}, pages 33:1--33:14, 2018.

\bibitem{CDKL09}
K.~Chaudhuri, C.~Daskalakis, R.~D. Kleinberg, and H.~Lin.
\newblock Online bipartite perfect matching with augmentations.
\newblock In {\em Proc. of 28th INFOCOM}, pages 1044--1052, 2009.

\bibitem{gopalan2009connectivity}
P.~Gopalan, P.~G. Kolaitis, E.~Maneva, and C.~H. Papadimitriou.
\newblock The connectivity of boolean satisfiability: computational and
  structural dichotomies.
\newblock {\em SIAM Journal on Computing}, 38(6):2330--2355, 2009 (See also
  proc.\ of ICALP'06).

\bibitem{GLSSS19}
F.~Grandoni, S.~Leonardi, P.~Sankowski, C.~Schwiegelshohn, and S.~Solomon.
\newblock {(1} + {\(\epsilon\)})-approximate incremental matching in constant
  deterministic amortized time.
\newblock In {\em Proceedings of the Thirtieth Annual {ACM-SIAM} Symposium on
  Discrete Algorithms, {SODA} 2019, San Diego, California, USA, January 6-9,
  2019}, pages 1886--1898, 2019.

\bibitem{GKKV95}
E.~F. Grove, M.~Kao, P.~Krishnan, and J.~S. Vitter.
\newblock Online perfect matching and mobile computing.
\newblock In {\em Proc. of 45th Wads}, pages 194--205, 1995.

\bibitem{GGK13}
A.~Gu, A.~Gupta, and A.~Kumar.
\newblock The power of deferral: maintaining a constant-competitive steiner
  tree online.
\newblock In {\em Symposium on Theory of Computing Conference, STOC'13, Palo
  Alto, CA, USA, June 1-4, 2013}, pages 525--534, 2013.

\bibitem{GKKP17}
A.~Gupta, R.~Krishnaswamy, A.~Kumar, and D.~Panigrahi.
\newblock Online and dynamic algorithms for set cover.
\newblock In {\em Proc. 49th STOC}, pages 537--550, 2017.

\bibitem{GK14}
A.~Gupta and A.~Kumar.
\newblock Online steiner tree with deletions.
\newblock In {\em Proceedings of the Twenty-Fifth Annual {ACM-SIAM} Symposium
  on Discrete Algorithms, {SODA} 2014, Portland, Oregon, USA, January 5-7,
  2014}, pages 455--467, 2014.

\bibitem{GKS14}
A.~Gupta, A.~Kumar, and C.~Stein.
\newblock Maintaining assignments online: Matching, scheduling, and flows.
\newblock In {\em Proc. 25th SODA}, pages 468--479, 2014.

\bibitem{gupta2019complexity}
M.~Gupta, H.~Kumar, and N.~Misra.
\newblock On the complexity of optimal matching reconfiguration.
\newblock In {\em International Conference on Current Trends in Theory and
  Practice of Informatics}, pages 221--233. Springer, 2019.

\bibitem{GP13}
M.~Gupta and R.~Peng.
\newblock Fully dynamic $(1+\epsilon)$-approximate matchings.
\newblock In {\em 54th FOCS}, pages 548--557, 2013.

\bibitem{DBLP:conf/icalp/HearnD02}
R.~A. Hearn and E.~D. Demaine.
\newblock The nondeterministic constraint logic model of computation:
  Reductions and applications.
\newblock In P.~Widmayer, F.~T. Ruiz, R.~M. Bueno, M.~Hennessy, S.~J.
  Eidenbenz, and R.~Conejo, editors, {\em Automata, Languages and Programming,
  29th International Colloquium, {ICALP} 2002, Malaga, Spain, July 8-13, 2002,
  Proceedings}, volume 2380 of {\em Lecture Notes in Computer Science}, pages
  401--413. Springer, 2002.

\bibitem{HK73}
J.~E. Hopcroft and R.~M. Karp.
\newblock An n$^{\mbox{5/2}}$ algorithm for maximum matchings in bipartite
  graphs.
\newblock {\em SIAM J. Comput.}, 2(4):225--231, 1973.

\bibitem{DBLP:journals/tcs/ItoDHPSUU11}
T.~Ito, E.~D. Demaine, N.~J.~A. Harvey, C.~H. Papadimitriou, M.~Sideri,
  R.~Uehara, and Y.~Uno.
\newblock On the complexity of reconfiguration problems.
\newblock {\em Theor. Comput. Sci.}, 412(12-14):1054--1065, 2011 (See also
  proc.\ of ISAAC'08).

\bibitem{DBLP:conf/esa/ItoKK0O19}
T.~Ito, N.~Kakimura, N.~Kamiyama, Y.~Kobayashi, and Y.~Okamoto.
\newblock Shortest reconfiguration of perfect matchings via alternating cycles.
\newblock In M.~A. Bender, O.~Svensson, and G.~Herman, editors, {\em 27th
  Annual European Symposium on Algorithms, {ESA} 2019, September 9-11, 2019,
  Munich/Garching, Germany}, volume 144 of {\em LIPIcs}, pages 61:1--61:15.
  Schloss Dagstuhl - Leibniz-Zentrum f{\"{u}}r Informatik, 2019.

\bibitem{DBLP:journals/jco/ItoKKKO19}
T.~Ito, N.~Kakimura, N.~Kamiyama, Y.~Kobayashi, and Y.~Okamoto.
\newblock Reconfiguration of maximum-weight b-matchings in a graph.
\newblock {\em J. Comb. Optim.}, 37(2):454--464, 2019 (Also in Proc. of
  COCOON'17).

\bibitem{DBLP:journals/tcs/KaminskiMM12}
M.~Kaminski, P.~Medvedev, and M.~Milanic.
\newblock Complexity of independent set reconfigurability problems.
\newblock {\em Theor. Comput. Sci.}, 439:9--15, 2012.

\bibitem{KKM13}
B.~M. Kapron, V.~King, and B.~Mountjoy.
\newblock Dynamic graph connectivity in polylogarithmic worst case time.
\newblock In {\em Proc. of 24th SODA}, pages 1131--1142, 2013.

\bibitem{KR08}
S.~Khot and O.~Regev.
\newblock Vertex cover might be hard to approximate to within 2-epsilon.
\newblock {\em J. Comput. Syst. Sci.}, 74(3):335--349, 2008.

\bibitem{MSV18}
J.~Matuschke, U.~Schmidt{-}Kraepelin, and J.~Verschae.
\newblock Maintaining perfect matchings at low cost.
\newblock {\em CoRR}, abs/1811.10580, 2018.

\bibitem{MSVW16}
N.~Megow, M.~Skutella, J.~Verschae, and A.~Wiese.
\newblock The power of recourse for online {MST} and {TSP}.
\newblock {\em {SIAM} J. Comput.}, 45(3):859--880, 2016.

\bibitem{MV80}
S.~Micali and V.~V. Vazirani.
\newblock An ${O}(\sqrt{|{V}|} |{E}|)$ algorithm for finding maximum matching
  in general graphs.
\newblock In {\em Proc. 21st FOCS}, pages 17--27, 1980.

\bibitem{MZ15}
V.~S. Mirrokni and M.~Zadimoghaddam.
\newblock Randomized composable core-sets for distributed submodular
  maximization.
\newblock In {\em Proceedings of the Forty-Seventh Annual {ACM} on Symposium on
  Theory of Computing, {STOC} 2015, Portland, OR, USA, June 14-17, 2015}, pages
  153--162, 2015.

\bibitem{NS13}
O.~Neiman and S.~Solomon.
\newblock Simple deterministic algorithms for fully dynamic maximal matching.
\newblock In {\em Proc. 45th STOC}, pages 745--754, 2013.

\bibitem{DBLP:journals/algorithms/Nishimura18}
N.~Nishimura.
\newblock Introduction to reconfiguration.
\newblock {\em Algorithms}, 11(4):52, 2018.

\bibitem{OR10}
K.~Onak and R.~Rubinfeld.
\newblock Maintaining a large matching and a small vertex cover.
\newblock In {\em Proc. of 42nd STOC}, pages 457--464, 2010.

\bibitem{PS16}
D.~Peleg and S.~Solomon.
\newblock Dynamic $(1 + \epsilon)$-approximate matchings: A density-sensitive
  approach.
\newblock In {\em Proc. of 26th SODA (to appear)}, 2016.

\bibitem{SSTT18}
B.~Schieber, H.~Shachnai, G.~Tamir, and T.~Tamir.
\newblock A theory and algorithms for combinatorial reoptimization.
\newblock {\em Algorithmica}, 80(2):576--607, 2018.

\bibitem{Sol16}
S.~Solomon.
\newblock Fully dynamic maximal matching in constant update time.
\newblock In {\em Proc. 57th FOCS}, pages 325--334, 2016.

\bibitem{thiongane2006}
B.~Thiongane, A.~Nagih, and G.~Plateau.
\newblock Lagrangean heuristics combined with reoptimization for the 0--1
  bidimensional knapsack problem.
\newblock {\em Discrete applied mathematics}, 154(15):2200--2211, 2006.

\bibitem{DBLP:books/cu/p/Heuvel13}
J.~van~den Heuvel.
\newblock The complexity of change.
\newblock In S.~R. Blackburn, S.~Gerke, and M.~Wildon, editors, {\em Surveys in
  Combinatorics 2013}, volume 409 of {\em London Mathematical Society Lecture
  Note Series}, pages 127--160. Cambridge University Press, 2013.

\bibitem{Vaz12}
V.~V. Vazirani.
\newblock An improved definition of blossoms and a simpler proof of the {MV}
  matching algorithm.
\newblock {\em CoRR}, abs/1210.4594, 2012.

\bibitem{DBLP:journals/corr/abs-1911-05545}
D.~Wajc.
\newblock Rounding dynamic matchings against an adaptive adversary.
\newblock {\em CoRR}, abs/1911.05545, 2019.

\end{thebibliography}

\clearpage
%$\\$
%$\\$
%\pagenumbering{roman}
\appendix
\centerline{\LARGE\bf Appendix}

\section{Scenarios with high recourse bounds} \label{12}
We briefly discuss some scenarios where %dynamic matching algorithms admit
high recourse bounds may naturally arise. In all such scenarios our reductions (Theorems \ref{main} and \ref{main2}) can come into play to achieve low worst-case recourse bounds;
for clarity we focus in this discussion, sometimes implicitly, on large (unweighted) matching, but the entire discussion carries over with very minor changes to
the generalized setting of weighted matchings.
% the reduction provided by Theorem \ref{main}, but the same arguments apply to Theorem \ref{main2} as well.

Appendix A.3 demonstrates that, although we \emph{may not care at all about recourse bounds},
maintaining a large (weight) matching with a low update time requires in some cases the use of a dynamic matching algorithm with a low recourse bound;
this is another situation where our reductions can come into play, but more than that, we believe that it provides an additional strong motivation for our reductions.
%(even in the amortized sense)
\noindent \vspace{10pt} \\
%\subsubsection
{\bf A.1~ Randomized algorithms} \label{121}
\noindent \vspace{6pt} \\
{\bf Multiple matchings.~}
Given a randomized algorithm for maintaining a large matching in a dynamic graph, it may be advantageous to run multiple instances of the algorithm (say $\polylog(n)$),
since this may increase the chances that at least one of those instances provides a large matching with high probability (w.h.p.) at any point in time.
Notice, however, that it is not the same matching that is guaranteed to be large throughout the entire update sequence, hence
the ultimate algorithm (or data structure), which outputs the largest among the $\polylog(n)$ matchings,
may need to switch between a pool of possibly very different matchings when going from one update step to the next.
Thus even if the recourse bound of the given randomized algorithm is low, and so each of the maintained matchings changes gradually over time,
we do not get any nontrivial recourse bound for the ultimate algorithm.
  %, even in the amortized sense.
% may be as high as $\Omega(n)$.
\noindent \vspace{6pt} \\
{\bf Large matchings.~}
Sometimes the approximation guarantee of the given randomized algorithm holds w.h.p.\ only when the matching is sufficiently large.
This is the case with the algorithm of \cite{CS18} that achieves $\polylog(n)$ worst-case update time,
where the approximation guarantee of $2+\eps$ holds w.h.p.\ only when the size of the matching is $\Omega(\log^5 n / \eps^4)$.
To perform efficiently, \cite{CS18} also maintains a matching that is guaranteed to be maximal (and thus provide a 2-MCM)
when the maximum matching size is smaller than $\delta = O(\log^5 n / \eps^4)$, via a deterministic procedure with a worst-case update time of $O(\delta)$.
 %, where the constant hiding in the $O$-notation is sufficiently large.
The ultimate algorithm of \cite{CS18} switches between the matching given by the randomized algorithm and that by the deterministic procedure, taking the larger of the two.
%one of larger value among the two.
Thus even if the recourse bounds of both the randomized algorithm and the deterministic procedure are low,
the worst-case recourse bound of the ultimate algorithm, which might be of the order of the ``large matching'' threshold, could be very high.
(The large matching threshold is the threshold on the matching size above which a high probability bound on the approximation guarantee holds.)
%Then it alternates between two different matchings, outputs the one of larger value among these two, the first is given by a randomized algorithm and the second is given by the %deterministic procedure.
%Note that
%The recourse bound obtained in this way is $\delta$.
In   \cite{CS18} the large matching threshold is $\delta =  O(\log^5 n / \eps^4)$, so the recourse bound is reasonably low.
%, yet it is still significantly higher than $O(1/\eps)$.
(This is not the bottleneck for the recourse bound of \cite{CS18}, as discussed next.)
In general, however, %the recourse bound obtained in this way, which is the matching size that ensures a high probability bound on the approximation guarantee,
the large matching threshold may be significantly higher than $\polylog(n)$.
%much higher than $\delta = O(\log^5 n / \eps^4)$.
\noindent \vspace{6pt} \\
{\bf Long update sequences.~}
For the probabilistic guarantees of a randomized dynamic algorithm to hold w.h.p., the update sequence must be of bounded length.
In particular, polylogarithmic guarantees on the update time usually require that the length of the update sequence will be polynomially bounded.
This is the case with numerous dynamic graph algorithms also outside the scope of graph matchings (cf.\ \cite{KKM13,ADKKP16}),
and the basic idea is to partition the update sequence into sub-sequences of polynomial length each and to run a fresh instance of the dynamic algorithm
%of \cite{CS17}
in each sub-sequence.
In the context of matchings, the algorithm of \cite{CS18} uses this approach. % to achieve a worst-case update time of $\poly(\log n, 1/\eps)$ .
%(A similar partitioning approach was used earlier in other papers, cf.\ \cite{ADKKP16}.)
Notice, however, that an arbitrary sub-sequence (other than the first) does not start from an empty graph.
Hence, for the ultimate algorithm of \cite{CS18} to provide a low worst-case update time, it has to gradually construct the graph at the beginning of each sub-sequence from scratch and maintain for it a new gradually growing matching, while re-using the ``old'' matching used for the previous sub-sequence throughout this gradual process. Once the gradually constructed graph coincides with the true graph,
the ultimate algorithm switches from the old matching to the new one. (See \cite{CS18} for further details.)
While this approach guarantees that the worst-case update time of the algorithm is in check, it does not provide any nontrivial worst-case recourse bound.
\noindent \vspace{10pt} \\
{\bf A.2~~ From amortized to worst-case.~}
There are techniques %in the area of dynamic graph algorithms
for transforming algorithms with low amortized bounds into algorithms with similar worst-case bounds.
For approximate matchings, such a technique was first presented in \cite{GP13}.
Alas, the transformed algorithms do not achieve any nontrivial worst-case recourse bound;
%The translation technique of \cite{GP13} is described in
see Section \ref{changes} for details.  % for a   % on the recourse bounds in the translated algorithms of \cite{GP13}.
%while the translated algorithms may have low worst-case update time, their recourse bound
\noindent \vspace{10pt} \\
{\bf A.3~ When low update time requires low recourse bound.~} \label{123}
When a dynamic matching algorithm is used as a black-box subroutine inside a larger data structure or algorithm,
a low recourse bound of the algorithm used as a subroutine is needed for achieving a low update time for the larger algorithm.
We next consider a natural question motivating this need; one may refer to \cite{BS16,ADKKP16} for additional motivation.
%\subsubsection{Constraints arising in real-life scenarios}
\begin{Question} \label{q}
Given $k$ dynamic matchings of a dynamic graph $G$, whose union is guaranteed to contain a large matching for $G$
at any  time, for an arbitrary parameter $k$, can we combine those $k$ matchings into a large dynamic matching for $G$ \emph{efficiently}?
\end{Question}

This question may arise when there are physical limitations, such as memory constraints, e.g., as captured by  MapReduce-style computation,
where the edges of the graph are partitioned into $k$ parties.
%Notice, however, that it is not the same matching that is guaranteed to be large throughout the entire update sequence, hence
%the ultimate algorithm (or data structure)  may need to switch between a pool of possibly very different matchings when going from one update step to the next.
%Thus even if the recourse bound of the given algorithm is low, this does not provide any nontrivial recourse bound of the ultimate algorithm, which outputs the best matching among %the $\polylog(n)$ matchings.
%Taking this one step further, in the case that multiple matchings are maintained
%In our context this may imply that we run multiple instances (rather than one) of the approximate dynamic matching algorithm at hand, and then combine those matchings together %into a single matching for the entire graph.
%The easiest way of combining those matchings together is to take the largest among them, but in some cases this isn't good enough.
 %is perhaps the most eff natural way of dealing with noisy data is
More specifically, consider a fully dynamic graph $G$ of huge scale, for which we want to maintain a large matching with low update time.
%Due to physical limitations, such as those arising in MapReduce-style computation,
The edges of the graph are dynamically partitioned into $k$ parties due to memory constraints, each capable of maintaining a large matching for the graph induced by its own edges with low update time, and the only guarantee on those $k$ dynamically changing matchings is the following global one: The \emph{union} of the $k$ matchings at any point in time \emph{contains} a large matching for the entire dynamic graph $G$.
%Here too we would like to combine the $k$ matchings into a large dynamic matching for $G$.
(E.g., if we maintain at each update step the invariant that the edges of $G$ are partitioned across the $k$ parties uniformly at random, such a global guarantee can be provided via the framework of \emph{composable randomized coresets}   \cite{MZ15,AK17,ABBMS17}.)

This question may also arise when the input data set is noisy.  %under \emph{noisy input}.  % or \emph{physical limitations}.
%We next
Coping with noisy input usually requires   \emph{randomization}, which may lead to high recourse bounds as discussed in Appendix A.1. %\ref{rand}.
Let us revisit the scenario where we run multiple instances of a randomized dynamic algorithm with low update time;  %, which is guaranteed to have low update time but possibly high recourse bound;
denote the number of such instances by $k$.  %, with $K \ll n$.
If the input is noisy, we may not be able to guarantee that at least one of the $k$ maintained matchings
is large w.h.p.\ at any point in time, as suggested in Appendix A.1. A weaker, more reasonable assumption is that the {union} of those $k$ matchings contains a large matching.
%which brings us to Question \ref{q}.

The key observation is that it is insufficient to maintain each of the $k$ matchings with low update time, even in the worst-case, as each such matching may change significantly following a single update step, thereby changing significantly the union of those matchings. ``Feeding'' this union to \emph{any} dynamic matching algorithm
would result with poor update time bounds, even in the amortized sense. Consequently, to resolve Question \ref{q}, each   of the $k$ maintained matchings must change \emph{gradually} over time, or in other words, the underlying algorithm(s) needed for maintaining those matchings  %, and for solving Question \ref{q},
should guarantee a low recourse bound. A low amortized/worst-case recourse bound of the underlying algorithm(s) translates into a low amortized/worst-case update time of the ultimate algorithm, provided of course that the underlying algorithm(s) for maintaining those $k$ matchings, as well as the dynamic matching algorithm to which their union is fed,
all achieve a low amortized/worst-case update time.
%Clearly, the fact that the given randomized algorithm has a low update time, even in the worst-case, is insufficient in providing any nontrivial update time bound (even amortized) %for the ultimate algorithm, which maintains a matching over the union of the $k$ dynamically changing matchings.
%Here we need to essentially copy paste and say that we actually plug here Gupta-Peng on top of the union, and that actually the output can also be made in a recourse fashion
%the graph at that stage is not e
%At the start of a new sub-sequence, we cannot simply
% is handled separately. and two instances of the algorithm of \cite{CS17} are run
\ignore{
There are two main open questions in this area.
The first is whether one can maintain a \emph{better-than-2} approximate matching in \emph{amortized} polylogarithmic update time.
The second is the following:
\begin{question}
Can one maintain a ``good'' (close to 2) approximate matching and/or vertex cover with \emph{worst-case} polylogarithmic update time?
\end{question}
%In a recent breakthrough, Bhattacharya, Henzinger and Nanongkai showed how to maintain a \emph{constant} approximation to the minimum vertex cover, and thus also a constant-factor %estimate of the maximum matching size, with polylogarithmic worst-case update time.
In a   recent breakthrough, Bhattacharya, Henzinger and Nanongkai
devised a deterministic algorithm that maintains a \emph{constant} approximation to the minimum vertex cover,
and thus also a constant-factor estimate of the maximum matching size, with polylogarithmic worst-case update time.
While this result makes significant progress towards Question 1, this fundamental question remained open.\footnote{Later (in SODA'17 Proc.\  \cite{BHN17}) Bhattacharya et al.\ significantly improved the approximation factor all the way to $2+\eps$;
moreover, it seems that their result can be used to maintain a $(3+\eps)$-MCM, as sketched in App.\ \ref{3+eps}.
However, our result  was done independently to \cite{BHN17}.
Moreover, even if one considers the improved result of \cite{BHN17},  it solves Question 1 in the affirmative only for vertex cover, leaving the question on matching open;
in particular, no algorithm for maintaining a matching with sub-polynomial worst-case update time was known for any approximation guarantee better than $3+\eps$.}
In particular, no algorithm for maintaining a matching with sub-polynomial worst-case update time was known,
even if a polylogarithmic approximation guarantee on the matching size is allowed! %was unknown for sub-polynomial worst-case update time!
%. Disregarding the SODA'17 result of \cite{BHN17}, no algorithm for maintaining a matching with
%

%This result \cite{BHN17} gives an algorithm that maintains a $(2+\eps)$-approximation to the minimum vertex cover
%with polylogarithmic update time, and can also be used to derive an algorithm that maintains a $(3+\eps)$-approximation (rather than $2+\eps$) to the maximum matching.
In this paper we devise a randomized algorithm that maintains an \emph{almost-maximal matching (AMM)} with a polylogarithmic update time.
We say that a matching for $G$ is \emph{almost-maximal} w.r.t.\ some slack parameter $\eps$, or \emph{$(1-\eps)$-maximal} in short,
if it is maximal w.r.t.\ any graph obtained from $G$ after removing $\eps \cdot |\cM^*|$ arbitrary vertices, where $\cM^*$ is a maximum matching for $G$.
Just as a maximal matching provides a 2-approximation for the maximum matching and minimum vertex cover, an AMM provides a $(2+\eps)$-approximation.
We show that for any $\eps > 0$,
%\footnote{We assume for simplicity that $\eps$ is constant, but our result can be extended to sub-constant $\eps$.
%Optimizing the dependencies on $\eps$ in the tradeoff between the update time and the $(1-\eps)$-maximality guarantee is deferred to the full version.}
one can maintain an AMM with worst-case update time $O(\poly(\log n,\eps^{-1}))$, where the $(1-\eps)$-maximality guarantee holds w.h.p.
Specifically, our update time is  $O(\max\{\log^7 n /\eps,\log^5 n / \eps^4\})$; although reducing this upper bound towards constant is an important goal,
this goal lies outside the scope of the current paper (see Section \ref{discuss} for some details).
%(Alternatively, if one wants the $(1-\eps)$-maximality guarantee to hold deterministically, then the update time guarantee will hold in expectation and with high probability.)

%+++++++++++++++++++++++++++++++++++++++
%+++++++++++++++++++

Our result resolves Question 1 in the affirmative, up to the $\eps$ dependency.
In particular, under the unique games conjecture, it is essentially the best result possible for the dynamic vertex cover problem.\footnote{Since our result (which started to circulate in  Nov.\ 2016) was done independently of the $(2+\eps)$-approximate vertex cover result of \cite{BHN17}, it provides the first $(2+\eps)$-approximation
 for both vertex cover (together with \cite{BHN17}) and (integral) matching.}

On the way to this result, we devise a \emph{deterministic} algorithm that maintains an almost-maximal matching with a polylogarithmic update time
in a natural \emph{offline model} that is described next.
This deterministic algorithm is of independent interest, as it is likely to be applicable in various real-life scenarios.
Our randomized algorithm for the   oblivious adversarial model is derived from this deterministic algorithm in the offline model,
and this approach is likely to be useful in other dynamic graph problems.
}
%\vspace{9pt}
%\\
%\vspace{-8pt}
%\noindent \vspace{10pt} \\

\section{Optimality of our Transformations} \label{tightness}

\subsection{Unweighted matchings} \label{tightnessun}
In the unweighted case, when $|\cM| < |\cM'|$, Theorem \ref{th:MCM} states that $\cM$ can gradually transform into $\cM'$ without ever being in deficit compared to the initial value of $\cM$, i.e., $|\cM^*| \ge |\cM|$ throughout the entire transformation process.
If $|\cM'| \le |\cM|$, however, this no longer holds; in this case the theorem states that we'll reach a deficit of at most 1 unit.
To see that this bound is tight,
%in this case we can gradually transform to $\cM'$
%while reaching a deficit of at most 2 units if $\Delta \le 2$ or at most 1 unit if $2 < \Delta < 2|\cM|$. 
consider the case when $|\cM| = |\cM'|$ and $H = \cM \oplus \cM'$ is a simple alternating cycle that consists of all edges in $\cM$ and $\cM'$, and thus  of length $2|\cM|$. 
Throughout any transformation process and until handling the last edge of the cycle, 
it must be that $|\cM^*| < |\cM|$ if $\Delta < 2|\cM|$. 
\vspace{5pt}
\\
\noindent{\bf Remark.} In fact, the same situation will occur if $\Delta = 2$.
In the particular case of $\Delta = 1$,   we'll be in deficit of up to 2 throughout the  process--- adding the first edge of $\cM'$ requires us to delete its two incident edges in $\cM$, which already leads to a deficit of 2 units.
%, we must be in deficit of one unit, and this is tight. 

\subsection{Weighted matchings} \label{tightnesswe}
In the weighted case, quantifying this deficit throughout the process is more subtle, but the worst-case scenario remains essentially the same:
$|\cM|= |\cM'|$, all edges have weight $W$ and $H = \cM \oplus \cM'$ is a simple alternating cycle that consists of all edges in $\cM$ and $\cM'$. 
Throughout any transformation process and until handling the last edge of the cycle, 
it must be that $w(\cM^*) \le w(\cM) - W$ if $\Delta < 2|\cM|$. 
In general, the deficit to the weight of the matching is inverse linear in $\Delta$, hence taking $\Delta$ to be $\Theta(1/\eps)$ ensures that the weight of the matching throughout the process never goes below $(1-\eps)w(\cM)$.  
%a constant number of edges, then the deficit is a constant fraction of the weigh
Interestingly, here a similar situation occurs also when $w(\cM') > w(\cM)$. 
Specifically, consider the same example as above but add $\eta$ to each edge weight of $\cM'$, i.e.,  
 assume that the edge weights of $\cM$ and $\cM'$ are now $W$ and $W'  = W + \eta$, respectively.  
%yet in every gradual transformation of $\cM$ to $\cM'$, we have $w(\cM^*) < w(\cM)$ throughout the transformation. 
%As an example, for parameters $k>0,\delta, A > 0$, consider $\cM$ and $\cM'$ so that $H = \cM \oplus \cM'$ is a simple %alternating cycle of length $k$, so that the edges in $\cM$ (blue edges in the above terminology) all have the same weight% $A$ and the edges in $\cM'$ (red edges) all have the same weight $A+\Delta$. It is easy to verify that 
Then 
%for any transformation from $\cM$ to $\cM'$, we are always in a
the deficit is no longer $W$ as before (for $1 < \Delta < 2|\cM|$), 
but rather $W - \lfloor \Delta/2 \rfloor  \cdot \eta$ (for $1 < \Delta < 2|\cM|$).
Indeed, adding the first edge of $\cM'$ requires the deletion of its two incident edges in $\cM$, at which stage the deficit is $W-\eta$; from that moment onwards, a single edge of $\cM$ is deleted so that another edge of $\cM'$ can be added, which reduces $\eta$ from the deficit  each time.
% after the deletion of the third edge of $\cM$ and insertion of the second edge of $\cM'$, the deficit is $W - 2\eta$, and so on.
%of at least $A-k\Delta$ compared to the initial value of $w(\cM)$ throughout the transformation. 
Therefore, if $\Delta \cdot \eta = W$, the deficit is always at least $W/2$, 
while $w(\cM') > w(\cM) + W/2$.  
%by choosing $k, \delta$ and $A$ appropriately\footnote{Choose any values of $k, \delta$ and $A$ so that $k\delta \ll A$, %and $1/\e \ll k$ will do,}, the deficit over the initial value $w(\cM)$ is at least $w(\cM')/2|M|$ throughout the entire% %transformation process, even when allowing to partition the gradual transformation into phases of size $1/\e$.
This scenario shows that the bound of Theorem \ref{th:main} is asymptotically tight.
\vspace{5pt}
\\
\noindent 
{\bf Remark.}
If $\Delta = 1$, we'll be in deficit of $2W$ (rather than $W$) throughout the process, similarly to the unweighted case.
In the degenerate case that $\cM$ and $\cM'$ consist of a single edge each and $\Delta = 1$, the weight after the first edge deletion reduces to 0.
\section{Prior work on Dynamic Matching Algorithms} \label{111}
In this appendix we provide a concise literature survey on dynamic approximate matching algorithms.  % with low update time bounds.
(For a more detailed account, see \cite{OR10,BGS11,PS16, Sol16,BHN17,GLSSS19}, and the references therein.)

There is a large body of work on   algorithms for maintaining large matchings with low \emph{amortized} update time,
%To the best of our knowledge
but none of the papers in this context provides a low \emph{worst-case} recourse bound.
For example, in FOCS'14 Bosek et al.\ \cite{BLSZ14} showed that one can maintain a $(1+\eps)$-MCM in the incremental vertex update   setting
with a total of $O(m / \eps)$ time and $O(n / \eps)$ matching replacements, where $m$ and $n$ denote the number of edges and vertices of the final graph, respectively;
moreover, in SODA'19 \cite{GLSSS19} this result was extended to the incremental edge update setting, achieving a constant (depending exponentially on $\eps$) amortized update time.
While the algorithms of \cite{BLSZ14,GLSSS19}
yield constant (depending on $\eps$) amortized recourse  bounds, no nontrivial (i.e., $o(n)$) worst-case recourse bound is known for the incremental vertex or edge update settings.
 %is the trivial $O(n)$ bound.
As another example, in STOC'16 Bhattacharya et al.\ \cite{BHN16} presented an algorithm for maintaining a $(2+\eps)$-MCM in the fully dynamic edge update setting with an amortized update time $\poly(\log n, 1/\eps)$. While the amortized recourse bound of the algorithm of \cite{BHN16} is dominated by its amortized update time,
$\poly(\log n, 1/\eps)$,  no algorithm for maintaining $(2+\eps)$-MCM with similar  amortized update time and nontrivial worst-case recourse bound is known.

We next focus on algorithms with low \emph{worst-case} update time.

In STOC'13 \cite{NS13} Neiman and Solomon presented an algorithm for maintaining a maximal matching with a worst-case update time $O(\sqrt{m})$,
where $m$ is the dynamic number of edges in the graph. A maximal matching provides a 2-MCM.
\cite{NS13} also provides a 3/2-MCM algorithm with the same update time.
The algorithms of \cite{NS13} provide a constant worst-case recourse bound.
Remarkably, all other dynamic matching algorithms for general and bipartite graphs (described next) do not provide any nontrivial worst-case recourse bound.
%where a recourse bound that is linear in the maximum matching size is considered trivial.
%All the known algorithms for maintaining a better-than-2 approximate matching or vertex cover require polynomial update time.\footnote{This statement is true for general graphs.
%For low arboricity graphs, significantly better results are known; see \cite{KKPS14,HTZ14,PS16}.}

In FOCS'13, Gupta and Peng \cite{GP13} presented a scheme for maintaining approximate maximum matchings in fully dynamic graphs,
yielding algorithms for maintaining $(1+\eps)$-MCM with worst-case update times of $O(\sqrt{m} / \eps^2)$ and $O(\Delta / \eps^2)$ in general graphs
and in graphs with degree bounded by $\Delta$, respectively.
The scheme of \cite{GP13} was refined in SODA'16 by Peleg and Solomon \cite{PS16} to provide a worst-case update time of $O(\alpha / \eps^2)$ for graphs with {arboricity} bounded by $\alpha$. (A graph $G=(V,E)$ has \emph{arboricity} $\alpha$ if $\alpha=\max_{U\subseteq V}\left\lceil\frac{|E(U)|}{|U|-1}\right\rceil$, where $E(U)=\left\{(u,v)\in E\mid u,v\in U\right\}$.) Since the arboricity of any $m$-edge graph is $O(\sqrt{m})$ and at most twice the maximum degree $\Delta$, the result of \cite{PS16} generalizes \cite{GP13}.
In ICALP'15 Bernstein and Stein \cite{BS15} gave an algorithm for maintaining a $(3/2+\eps)$-MCM for bipartite graphs with a worst-case update time of $O(m^{1/4}/\eps^{2.5})$; to achieve this result, they employ the algorithm of \cite{GP13} for graphs with degree bounded by $\Delta$, for $\Delta = O(m^{1/4} \sqrt{\eps})$.
% was employed as a black-box to obtain
%the result of \cite{BS15}.
%algorithm for maintaining $(3/2 + \eps)$-approximate matching with a worst-case update time of $O(m^{1/4}/\eps^{2.5})$ in bipartite graphs \cite{BS15}.
 %and then in general graphs \cite{BS16}.

%Bernstein and Stein \cite{BS15} maintained $(3/2+\eps)$-approximate matching for bipartite graphs with a worst-case update time $O(m^{1/4}/\eps^{2.5})$;
%the algorithm of \cite{BS15} uses the algorithm of \cite{GP13} for bounded degree graphs as a black-box.
%Building on and generalizing \cite{GP13}, Peleg and Solomon \cite{PS16} presented an algorithm for maintaining $(1+\eps)$-approximate maximum matching with a worst-case update %time
%of $O(\alpha / \eps^2)$, for graphs with arboricity bounded by $\alpha$.

The algorithms of \cite{NS13,GP13,BS15,PS16} are deterministic. 
Charikar and Solomon \cite{CS18} and Arar et al.\ \cite{ACCSW18}
independently presented randomized algorithms for maintaining a $(2+\eps)$-MCM with a worst-case update time of $\poly(\log n, 1/\eps)$, assuming an oblivious adversary. Recently Wajc \cite{DBLP:journals/corr/abs-1911-05545} strengthened the results of \cite{CS18,ACCSW18} by achieving similar bounds against an adaptive adversary.
%The algorithms of \cite{CS17} and \cite{ACCSW17} are inherently different.
The algorithms of \cite{ACCSW18,DBLP:journals/corr/abs-1911-05545} employs the algorithm of \cite{GP13} for graphs with degree bounded by $\Delta$, for $\Delta = O(\log n / \eps^2)$.
%the algorithm  of \cite{CS17} does not use \cite{GP13} in any way, yet it does not provide any nontrivial worst-case recourse bound.

The drawback of the scheme of \cite{GP13} is that the worst-case recourse bounds of algorithms that follow it may be linear in the matching size.
%which may be significantly larger than the required worst-case update time.
The algorithms of \cite{GP13,BS15,PS16,ACCSW18,DBLP:journals/corr/abs-1911-05545} either follow the scheme of \cite{GP13} or employ one of the aforementioned algorithms provided in \cite{GP13} as a black-box,
hence they all suffer from the same drawback.
Although the algorithm of  \cite{CS18} does not use \cite{GP13} in any way, it also suffers from this drawback, as discussed in Appendix \ref{121}.
%\noindent \vspace{6pt} \\
%\vspace{6pt}
%\noindent

\section{Proof of Theorem~\ref{th:main}}
\label{sec:wema}
The setup is as follows.
Let $\cM$ and $\cM'$ be two matchings for the same weighted graph $G = (V,E,w)$. We denote $$w(\cM) = \sum_{e\in \cM} w(e),$$ and assume in what follows that $\cM'$ is an improvement over $\cM$, i.e., that $w(\cM') > w(\cM)$.\footnote{The assumption that $w(\cM') > w(\cM)$ is merely for simplicity of presentation, since in case $w(\cM') \le w(\cM)$, one can gradually transform $w(\cM')$ into $w(\cM)$, and finally reverse the transformation; the weight throughout the process would be at least $\max\{w(\cM')-W',(1-\eps)w(\cM')\}$ with $W' = \max_{e \in \cM'} w(e)$.}
Our goal is to gradually transform $\cM$ into (a possibly superset of) $\cM'$ via a sequence of constant-time operations to be described next, such that the matching obtained at any point throughout this transformation process is a valid matching for $G$ of weight at least $w(\cM)-W$, where $W = \max_{e \in \cM} w(e)$, and also at least $(1-\e)w(\cM)$. It is technically convenient to denote by $\cM^*$ the transformed matching, which is initialized as $\cM$ at the outset, and being gradually transformed into $\cM'$; we refer to $\cM$ and $\cM'$ as the source and target matchings, respectively. %We assume throughout that $w(\cM') \ge w(\cM)$.

We achieve this goal in two steps. In the first step (Theorem \ref{th:app}) we show that the weight of the transformed matching never goes below $w(\cM)-W$, and in the second step (Theorem \ref{th:main}) we show that the weight never goes below $\max\{w(\cM)-W,(1-\e)w(\cM)\}$.

Though the proof of Theorem~\ref{th:app} is technically involved, the idea behind it is quite intuitive, and we believe it is instructive to give a high-level overview before getting into the technical details of the proof.
The first observation is that $\cM \cup \cM'$ is a relatively easy-to-analyze graph. Specifically, it is a union of vertex-disjoint {\em $\cM-\cM'$ alternating} simple paths and cycles, except for isolated vertices that we may ignore;
%do not affect the gradual transformation;
a path in $G$ is called {\em $\cM-\cM'$ alternating} if it consists of an edge in $\cM$ followed by an edge in $\cM'$ and so on, or vice versa.
As $w(\cM')$ is assumed to be greater than $w(\cM)$, we can then efficiently find an ``improving path'' $\Pi$ in $\cM \cup \cM'$, in the sense that the total weight of the edges in $\cM' \cap \Pi$ is greater than the total weight of the edges in $\cM \cap \Pi$. We then show that one can efficiently find a ``minimum vertex'' on $\Pi$ with the following property. We can walk in one direction from that minimum vertex in a cyclic order (even if $\Pi$ is not a cycle) along $\Pi$, deleting the edges of $\cM$ along $\Pi$ and adding the edges of $\cM'$ along $\Pi$ essentially ``one-by-one'' (the formal details appear below).
%\footnote{This process of deleting the edges of $\cM$ along $\Pi$ and adding the edges of $\cM'$ along $\Pi$ essentially %one-by-one will be explained in detail in the proof.}. 
This will only increase the weight of $\cM^*$, except for a possible \emph{small} loss that we refer to as the \emph{deficit}, which is bounded above by $W = \max_{e \in \cM} w(e)$. We are now ready to state and prove Theorem~\ref{th:app}. 
%the proof will proceed in several steps.

\begin {theorem}
\label{th:app}
One can gradually transform $\cM$ into (a possibly superset of) $\cM'$ via a sequence of phases, each running in constant time (and thus making at most a constant number of changes to the matching), such that the matching obtained at the end of each phase throughout this transformation process is a valid matching for $G$ of weight at least $w(\cM)-W$, where $W = \max_{e \in \cM} w(e)$.
The runtime of this transformation procedure is $O(|\cM| + |\cM'|)$.
\end {theorem}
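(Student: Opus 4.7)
The plan is to decompose $H := \cM \oplus \cM'$ into its connected components---each being an alternating simple path or cycle---and to process them one at a time using a \emph{minimum-vertex} procedure, choosing the processing order so as to control the overall deficit. For each component $\Pi$, write $W(\Pi) := \max_{e \in \cM \cap \Pi} w(e) \leq W$ and $\Delta(\Pi) := w(\cM' \cap \Pi) - w(\cM \cap \Pi)$, so that $\sum_\Pi \Delta(\Pi) = w(\cM') - w(\cM) > 0$ by hypothesis.

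The first main step is a per-component claim: for each $\Pi$, one can transform $\cM^* \cap \Pi$ from $\cM \cap \Pi$ to $\cM' \cap \Pi$ via a sequence of constant-time phases so that the local deficit $w(\cM \cap \Pi) - w(\cM^* \cap \Pi)$ never exceeds $W(\Pi) + \max\{0, -\Delta(\Pi)\}$. To prove this, I would define a potential $\phi : V(\Pi) \to \mathbb{R}$ by walking along $\Pi$ from a fixed reference vertex, accumulating $+w(e)$ per $\cM$-edge and $-w(e)$ per $\cM'$-edge; then $\phi(v)$ is precisely the local deficit after processing the prefix of $\Pi$ up to $v$. Take the \emph{minimum vertex} $v^*$ to be one maximizing $\phi$ among the vertices whose outgoing edge in the chosen walk direction is in $\cM$; start by removing that edge, then alternate per phase between adding the next $\cM'$-edge and removing the subsequent $\cM$-edge, walking once around $\Pi$ (for a cycle) or from $v^*$ to one endpoint and then to the other (for a path). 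Since $\phi$ jumps by at most $W(\Pi)$ at any $\cM$-edge, every vertex immediately following an $\cM$-edge has $\phi$-value at most $\phi(v^*) + W(\Pi)$; this yields the $W(\Pi)$ summand of the bound, while the $\max\{0, -\Delta(\Pi)\}$ summand captures the cumulative shift $-\Delta(\Pi)$ that $\phi$ incurs whenever the walk wraps past $v^*$ in a losing component.

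The second main step is a global ordering argument: process improving components ($\Delta(\Pi) \geq 0$) first, in any order, then losing ones, again in any order. Since $\sum_\Pi \Delta(\Pi) > 0$, after all improving components the accumulated surplus $c_0 = \sum_{\Delta(\Pi) \geq 0} \Delta(\Pi)$ strictly exceeds $\sum_{\Delta(\Pi) < 0} |\Delta(\Pi)|$, so at the start of processing the $i$th losing component $\Pi_i$ the residual surplus $c_0 - \sum_{j<i,\,\text{losing}} |\Delta(\Pi_j)|$ still exceeds $|\Delta(\Pi_i)|$. Combined with the per-component bound, the global deficit during any losing component is at most $(W(\Pi) + |\Delta(\Pi)|) - |\Delta(\Pi)| = W(\Pi) \leq W$, while during an improving component the surplus is non-negative and the local deficit is at most $W(\Pi) \leq W$, so the global deficit stays at most $W$ throughout.

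The main obstacle is the per-component argument for paths, since cycles enjoy a natural cyclic symmetry on which the choice of $v^*$ and the walk direction rely, while paths do not. I would handle this via a two-half-walk analysis from $v^*$: process the path first from $v^*$ to one endpoint (bounding the deficit as in the cycle case, using that $v^*$ maximizes $\phi$ over the reachable parity class), and then from $v^*$ to the other endpoint (bounding the additional deficit via the same $\phi$-argument applied to the reversed half-path). A cleaner but less general alternative is to append a single zero-weight virtual edge between the two path endpoints (of the type continuing the alternation) and reduce verbatim to the cycle case; the virtual edge contributes nothing to $W(\Pi)$, $\Delta(\Pi)$, or any actual phase, so the cycle bound transfers unchanged.
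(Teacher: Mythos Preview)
Your overall framework---decompose $\cM\oplus\cM'$ into alternating paths/cycles, process each from a ``minimum vertex'' $v^*$, and handle components in positive-$\Delta$-first order carrying the surplus---is exactly the paper's approach. The potential $\phi$ you define is precisely $-c(\Pi_i)$ in the paper's notation, and your $v^*$ is the paper's $i_{\min}$.

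However, your per-component argument for \emph{paths} has a genuine gap. In the two-half-walk you propose processing the prefix half \emph{backward} from $v^*$; this does not yield the deficit bound $W(\Pi)$. Concretely, take $\Pi=b_1r_1\cdots b_7r_7$ with blue weights $(1,21,21,21,21,22,1)$ and red weights $(101,1,1,1,1,1,3)$. Then $c(\Pi_i)=(0,100,80,60,40,20,-1,1)$, so $i_{\min}=6$, $W(\Pi)=22$, and $c(\Pi)=1>0$. Processing $\Pi_{\mathrm{suf}}=b_7r_7$ gains $2$, but walking $\Pi_{\mathrm{pref}}$ backward from $v^*$ (remove $b_6$, add $r_6$, remove $b_5,\dots$) drives the total deficit up to $100$, far exceeding $W(\Pi)=22$. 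The point is that the accumulated surplus $c(\Pi_{\mathrm{suf}})$ need not dominate $c(\Pi_j)-c(\Pi_{i_{\min}})$ for $j<i_{\min}$. Your virtual-edge alternative also fails here: a path of the canonical form $b_1r_1\cdots b_kr_k$ (starting blue, ending red) has endpoints where one is $\cM$-matched and the other $\cM'$-matched, so no single zero-weight edge of either color can close it into an \emph{alternating} cycle.

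The paper's resolution is subtly different from both of your suggestions: it processes the prefix half $\Pi_{\mathrm{pref}}$ in the \emph{forward} direction (from the path's start toward $v^*$), not backward from $v^*$. Forward processing of $\Pi_{\mathrm{pref}}$ gives, after iteration $i$, a change of $c(\Pi_i)-w(b_{i+1})\ge c(\Pi_{i_{\min}})-W(\Pi)>-c(\Pi_{\mathrm{suf}})-W(\Pi)$, so together with the already-banked surplus $c(\Pi_{\mathrm{suf}})$ the total deficit stays below $W(\Pi)$. In the example above this gives max deficit $0$. Once you replace ``from $v^*$ to the other endpoint'' by ``from the other endpoint toward $v^*$,'' your argument goes through and matches the paper.
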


\begin {proof}
Recall that the transformed matching, denoted by $\cM^*$, is initialized to the source matching $\cM$.
An edge in $\cM' \setminus \cM^*$ is \emph{good} if the sum of the weights of the edges in $\cM^*$ that are adjacent to it is no greater than its own weight; otherwise it is \emph{bad}.
(These notions generalize the respective notions from Section \ref{proofmain} for unweighted matchings.)
Handling good edges is easy: For any good edge $e$, move it from $\cM' \setminus \cM^*$ to $\cM^*$, and then delete from $\cM^*$ the at most two edges adjacent to $e$;
thus the weight of $\cM^*$ cannot decrease as a result of handling a good edge.
Since every edge in $\cM$ is deleted at most once from $\cM^*$, and every edge in $\cM'$ is added at most once to $\cM^*$, the total time of handling the good edges throughout the algorithm is $O(|\cM|+ |\cM'|) = O(n)$.

Next we describe an algorithm that proves Theorem \ref{th:app}. During the execution of this algorithm, some edges are moved from $\cM' \setminus \cM^*$ to $\cM^*$,
which triggers the removal of edges from $\cM^*$, and as a result some bad edges become good. Similarly to the treatment of Section \ref{proofmain} for the unweighted case, we can use a data structure for maintaining the good edges in $\cM' \setminus \cM^*$,
so that testing if there is a good edge in $\cM^*$ and returning an arbitrary one if exists can  be done in $O(1)$ time.

%Every step of handling a good edge $e$ consists of moving $e$ from $\cM' \setminus \cM^*$ to $\cM^*$, and removing from $\cM$
%the (at most) two edges adjacent to this edge in $\cM^*$

%Note that %(i) contrary to the unweighted case, it is possible that all the edges in $\cM' \setminus \cM^*$ are bad, and (ii) no harm is done when treating the good edges as they %are bad.
Just as in the unweighted case, here too we always try to handle good edges as described above, as long as one exists.
The difference between the unweighted case and the weighted one is in how bad edges are handled: In the unweighted case bad edges are handled greedily (in an obvious way), whereas the weighted case calls for a much more intricate treatment, as described next.

%[[S: I changed the following para a bit]] If there are edges in $\cM^* \cap \cM'$, then delete from $G$ as well as from $\cM^*$ and $\cM'$ these edges in time $O|\cM^*| \cap |\cM'|) = O(|\cM| +|\cM'|)$, and add (the %original) $\cM \cap \cM'$ to the output $\cM^*$ of the algorithm we describe below. We will assume henceforth that$\cM \cap \cM' = \emptyset$.

We believe it is instructive to refer to the edges of $\cM'$ as \emph{red edges} and to those of $\cM$ as \emph{blue edges}, and our transformation will delete blue edges from $\cM^*$ (recall that $\cM^*$ is initialized to $\cM$) and copy red edges from $\cM'$ to $\cM^*$ so that the invariant in Theorem~\ref{th:app} is always maintained.

We denote the symmetric difference of two sets $A$ and $B$ by $A \oplus B$, i.e., $A \oplus B = (A \cup B) \setminus (A \cap B)$. We use the following well-known observation of Hopcroft and Karp~\cite{HK73}.

\begin {lemma}
\label{le:h}
%The subgraph $H:=(V, \cM \oplus \cM')$ is composed of isolated vertices, alternating blue-red paths and alternating blue-red cycles with respect to both $\cM$ and $\cM'$, and the %blue-red coloring of $H$ is a proper coloring (i.e., edges in $H$ sharing a vertex have different colors).
$H := \cM \oplus \cM'$ is a union of vertex-disjoint alternating blue-red (simple) paths and alternating blue-red (simple) cycles.
\end {lemma}

%\begin {proof}
%Since both $\cM'$ and $\cM^*$ are proper matchings, two edges in $H$ sharing a %vertex must have different colors. For every vertex $v\in G$, the induced degrees %$d_{\cM'}(v), d_{\cM^*}(v)$ in both $\cM'$ and $\cM^*$ are at most $1$. Therefore, the %degree of every vertex in $H$ is at most $2$, thus completing the proof.
%\end {proof}

%By Lemma~\ref{le:h}, after deleting the isolated vertices from $H$ in linear time, $H$ is a union of alternating blue-red paths and alternating blue-red cycles with respect to both %$\cM$ and $\cM'$.

The \emph{colored weight} of a subgraph $G' \subset G$, denoted by $c(G')$, is defined as the difference between the sum of weights of the red edges in $G'$ and the sum of weights of the blue edges in $G'$.

Since $w(\cM') > w(\cM)$, we have $c(H) > 0$, hence the sum of the colored weights of the alternating blue-red paths and cycles in $H$ is positive.
%We order the blue-red  paths and cycles in $H$ in non-increasing order of colored weight, from largest (necessarily positive) to smallest (possibly negative).
%We proceed with each simple alternating blue-red path or cycle $\Pi$, ordered by their colored weight order in $H$ from largest to smallest. In fact,
The following lemma shows a  reduction from a general $H$ to the case where
%\begin {equation}
%\label{eq:tot}
%H \text{
$H$ is a simple blue-red path or cycle of positive colored weight.
%\end {equation}
\begin {lemma}
\label {le:red}
If $w(\cM') > w(\cM)$, we may assume that $H$ is an alternating blue-red path or cycle of positive colored weight.
\end {lemma}

\begin {proof}
Let $\Pi_1,\ldots, \Pi_r$ denote the simple alternating blue-red paths or cycles in $H$ ordered so that the paths and cycles of positive colored weight appear first,  and only then the paths or cycles of non-positive colored weight. In other words, the paths and cycles of positive colored weight have smaller indices than those of non-positive colored weight.
%, ordered by their colored weight from largest to smallest.
As $\sum_{j=1}^r c(\Pi_j) = w(\cM') - w(\cM^*) > 0$ and $c(\Pi_j)$ are ordered positive first, it easily follows that
\begin {equation}
\label{eq: aug}
\sum_{j=1}^i c(\Pi_j) > 0 \quad \text{ for every } 1 \le i \le r .
\end {equation}
Therefore, after treating $\Pi_1, \ldots, \Pi_i$, the weight of $\cM^*$ has increased by $\sum_{j=1}^i c(\Pi_j)$, which is a positive value. When treating the next path $\Pi_{i+1}$ in $H$, we add the non-negative value $\sum_{j=1}^i c(\Pi_j)$ to the weight of the first red edge in $\Pi_{i+1}$, which allows us to view $\Pi_{i+1}$ as %an alternating red-blue path or cycle of
 having a positive colored weight.
 To complete the proof of this lemma,
  %to the case where $H$ is an alternating red-blue path or cycle of positive colored weight.
  we note that the total transformation is obtained by concatenating all the transformations of $\Pi_1,\ldots, \Pi_r$ in the order in which they appear; and, in particular, if each of these transformations is carried out in time linear in the length of the path/cycle,
	the total runtime of the transformation will be $O(|\cM| + |\cM'|)$.
\QED
\end {proof}

By Lemma~\ref{le:red}, we may assume that $H$ is a path $\Pi$ consisting of $k$ pairs of blue-red edges $b_1 \circ r_1 \circ \cdots \circ b_k \circ r_k$, where $b_i \in \cM^*, r_i \in \cM'$, for $i=1,\ldots, k$, and we allow for $b_1$ or $r_k$ (or both) to be empty; we will make this assumption henceforth.

%Our algorithm below will partition the $i$th transformation, for $i=1,\ldots, r$, into constant-sized batches such that after executing any number of batches in transformation $i$, %the weight of $\cM^*$ is always at least $w(\cM) - \max_{e \in \Pi} w(e)$. This defines a partition into batches of the total transformation  such that after executing any number %of batches in the total transformation the weight of $\cM^*$ is always at least $w(\cM) - W$, thus completing the proof of the lemma.

The algorithm
%adding
%the red edges and deleting the blue edges along $\Pi$ (or along sub-paths thereof),
iteratively changes $\cM^*$ by deleting from $\cM^*$, in iteration $i$, the blue edges $b_{i+1}$ and $b_i$ (if not previously deleted from $\cM^*$), for $i=1,\ldots, k$, thus allowing for an addition of the red edge $r_i \in \cM'$ to $\cM^*$; thus at most 3 changes to $\cM^*$ are made in each iteration. 
 As this basic procedure is used repeatedly below, we include its pseudo-code, and note
that implementing it can be easily done in time $O(|\Pi|)$.  We also note that Procedure $ReplaceBlueRed(\cM^*, \cM', \Pi)$ changes the auxiliary matching $\cM^*$, and by keeping all the intermediate values of $\cM^*$ throughout the process, over all the paths $\Pi$ in $H$, we obtain the entire transformation of $\cM$ into $\cM'$. \\
%\begin{tbox}
\begin{mdframed}[backgroundcolor=lightgray!40,topline=false,rightline=false,leftline=false,bottomline=false,innertopmargin=2pt]
Procedure $ReplaceBlueRed(\cM^*, \cM', \Pi)$: \vspace{5pt}\\
%\begin{enumerate}
%\item Initialize $i=1$.
{\bf For} $i = 1$ to $k$:
\begin{enumerate}
\item Delete the blue edge $b_{i+1}$ (if exists) and the blue edge $b_{i}$ (if exists), from $\cM^*$.
\item Add the edge $r_{i+1}$ to $\cM^*$.
%\item $i= i+1$
\end{enumerate}
%\end{enumerate}
\end{mdframed}
%\end{tbox}

For a path $\Pi = b_1 \circ r_1 \circ \cdots \circ b_k \circ r_k$, we denote by $\Pi_i =  b_1 \circ r_1 \circ \cdots \circ b_i \circ r_i$ the \emph{alternating blue-red $i$-prefix} of $\Pi$, for any $1\le i \le k$, which is the subpath of $\Pi$ that consists of the first $i$ pairs of blue-red edges.
The colored weight of the alternating blue-red $i$-prefix of $\Pi$ is given by
$$c(\Pi_i) = c\left(b_1 \circ r_1 \circ \cdots \circ b_i \circ r_i \right) = \sum_{j=1}^i w(r_j)-w(b_j),$$ for $0 \le i \le k$, with the convention that $\Pi_0$ is the empty path and $c(\Pi_0)=0$.
Let
\begin {equation}
\label{eq:imin}
i_{min} = \underset {i \in \{0,1,\ldots, k\}}{\arg\min c(\Pi_i)}.
\end {equation}
Clearly, $i_{min}$ and $c(\Pi_{i_{min}})$ can be computed in $O(|\Pi|)$ time.

\paragraph*{First case: $c(\Pi_{i_{min}}) \ge 0$.\\}
In this case we can add the red edges of $\Pi$ and delete the blue edges of $\Pi$ one after another by making a call to Procedure $ReplaceBlueRed(\cM^*, \cM', \Pi)$.
Concretely, after $i$ iterations of the {\bf for} loop in Procedure $ReplaceBlueRed(\cM^*, \cM', \Pi)$, for $1 \le i \le k$, the value of $\cM^*$ is changed by $$c(\Pi_i) - w(b_{i+1}) \ge c(\Pi_{i_{min}}) - w(b_{i+1}) \ge -W,$$
%Therefore, after each full iteration of (i), (ii) and (iii) during the execution of $ReplaceBlueRed(\cM^*, \cM', \Pi)$,
hence the value of $\cM^*$ never decreases by more than $W$. Moreover,
by our assumptions that $H = \Pi$ and $w(\cM') > w(\cM)$, at the end of the execution of the procedure the value of $\cM^*$ is changed by $c(\Pi) = w(\cM') - w(\cM') > 0$.
This shows that the invariant of Theorem~\ref{th:app} is always maintained.

\paragraph*{Second case: $c(\Pi_{i_{min}})  < 0$.\\}
Since $c(\Pi_k) = c(\Pi) > 0$, it follows that $0 < i_{min} < k$. Let $\Pi_{pref} := \Pi_{i_{min}}$ denote the
alternating blue-red $i_{min}$-prefix of $\Pi$, namely $b_1 \circ r_1 \circ \cdots \circ b_{i_{min}} \circ r_{i_{min}}$.
Similarly, we define the \emph{alternating blue-red $i$-suffix} of $\Pi$, for any $1\le i \le k-1$,
as the subpath of $\Pi$ that consists of the last $k - i$ pairs of blue-red edges,
and let $\Pi_{suf}$ denote the alternating blue-red $i_{min}$-suffix of $\Pi$,
namely, $b_{i_{min} + 1} \circ r_{i_{min} + 1} \circ \cdots \circ b_k \circ r_k$. Since
%\begin {equation}
%\label{eq:presuf}
$0 < c(\Pi) = c(\Pi_{pref}) + c(\Pi_{suf})$ and $c(\Pi_{pref}) = c(\Pi_{i_{min}})  < 0$,
%\end {equation}
we have $c(\Pi_{pref}) < 0 < c(\Pi_{suf}).$
Clearly, $\Pi_{pref}$ and  $\Pi_{suf}$  can be computed in $O(|\Pi|)$ time.
%Moreover,
%for every $0\le j \le k-i_{min}-1$,  we have
%\begin {equation}
%\label{eq:pos}
%c(\Pi_{suf, j+1}) \ge c(\Pi_{suf,j}) \ge 0.
%\end {equation}

We add to $\cM^*$ the red edges in $\Pi_{suf}$ and delete from $\cM^*$ the blue edges in $\Pi_{suf}$ by making a call to Procedure $ReplaceBlueRed(\cM^*, \cM', \Pi_{suf})$.
After this execution,
%\begin {equation}
%\label{eq:dis}
%\text
the weight of $\cM^*$ is increased by $c(\Pi_{suf})>0$.
%\end {equation}
Moreover, by the definition of $i_{min}$, after $i$ iterations of the {\bf for} loop of the procedure,
%$ReplaceBlueRed(\cM^*, \cM', \Pi_{suf})$
for  $1 \le i \le k-i_{min}$, the value of $\cM^*$ may decrease by at most $w(b_{i+1})\le W$, as compared to its value prior to the call to Procedure $ReplaceBlueRed(\cM^*, \cM', \Pi_{suf})$.

After handling the edges in $\Pi_{suf}$, the value of $\cM^*$ is changed by a value of $c(\Pi_{suf})>0$. Next, we add to $\cM^*$ the remaining red edges in $\Pi$ and delete from $\cM^*$ the remaining blue edges of $\Pi$, by making the call to Procedure $ReplaceBlueRed(\cM^*, \cM', \Pi_{pref})$.
%We next show that after performing $i$ iterations, for every $0 \le i \le i_{min}$, the decrease in value of $\cM^*$ is at most $W$.
After performing $i$ iterations of the {\bf for} loop of Procedure $ReplaceBlueRed(\cM^*, \cM', \Pi_{pref})$, for $1 \le i \le i_{min}$, the value of $\cM^*$ is changed by at least
%We note that this value might be negative. In summary, after performing $i$ iterations of $ReplaceBlueRed(\cM^*, \cM', \Pi_{pref})$, the value of $\cM^*$ is added at least
\begin {equation}
\begin {array}{ll}
c(\Pi_{i}) - w(b_{i+1}) ~\ge~ c(\Pi_{i}) ~-~ W  \ge c(\Pi_{i_{min}}) - W ~=~ c(\Pi_{pref}) - W ~>~ - c(\Pi_{suf}) - W,\end {array}
\end {equation}
as compared to its value \emph{after} the call to Procedure $ReplaceBlueRed(\cM^*, \cM', \Pi_{suf})$, where the last inequality follows as $c(\Pi_{pref}) + c(\Pi_{suf}) > 0$.
Since we made the call to Procedure $ReplaceBlueRed(\cM^*, \cM', \Pi_{pref})$ after the value of $\cM^*$ has already increased by $c(\Pi_{suf})>0$, it follows that during the execution of Procedure $ReplaceBlueRed(\cM^*, \cM', \Pi_{pref})$, the value of $w(\cM^*)$ may not decrease by more than $W$, as compared to the value of $w(\cM^*)$ \emph{prior} to handling the edges along $\Pi$.

To conclude, in both cases, the value of $w(\cM^*)$ never decreases by more than $W$ during the process of handling the edges along $\Pi$,
and at the end of this process the value grows by $c(\Pi)>0$.
%This shows that the invariant of Theorem~\ref{th:app} is always maintained.
Moreover, the runtime of this procedure is $O(|\Pi|)$.
%Finally, recall that we started by applying the reduction in Lemma~\ref{le:red}. Unwinding this reduction implies that we treat each of the alternating blue-red simple paths or %cycles $\Pi_1,\ldots, \Pi_k$ in $H$ according to their decreasing colored weight, where in step $i$, we treat path $\Pi_i$ as outlined above (where we assumed that $\Pi = \Pi_i$ , %and $\cM^*$ after has improved by $\sum_{j \le i} c(\Pi_j) > 0$, and the value of $\cM^*$ never goes down by more than $W$ compared to $\cM$.
This completes the proof of Theorem \ref{th:app}.
\QED
\end {proof}

Theorem~\ref{th:app} defines a gradual transformation of $\cM$ into $\cM'$, composed of a sequence of operations of insertions of edges from $\cM'$ into $\cM^*$ and deletions of edges of $\cM$ from $\cM^*$, where each operation makes at most 3 changes to $\cM^*$. By cleverly partitioning this transformation into phases 
consisting of $O(\frac 1 \e)$ operations each,
%of operations of length $O(\frac 1 \e)$ each, 
we can prove the strengthening of Theorem~\ref{th:app} given in Theorem \ref{th:main}, asserting that the weight of the transformed matching at the end of each phase is smaller than the original weight (of $\cM$) not only by at most $W$,
%the intermediate matching $\cM^*$ produced by gradually transforming $\cM$ to $\cM'$, is always a
but also by at most a factor of $1-\e$.
\begin {proof} (Proof of Theorem \ref{th:main}.)
If $\max\{w(\cM)-W,(1-\e)w(\cM)\} = w(\cM)-W$, then the theorem follows immediately from Theorem~\ref{th:app}.
It is henceforth assumed that $(1-\e)w(\cM) \ge w(\cM)-W$.

We consider the transformation used in the proof of Theorem~\ref{th:app}, and show how to partition it into phases consisting of $O(\frac 1 \e)$ operations each,
such that the condition in the theorem holds. We keep the same notation as in the proof of Theorem~\ref{th:app}, and use the same reduction of Lemma~\ref{le:red} to the case where $H = \cM^* \oplus \cM'$ is an alternating blue-red path or cycle, denoted by $\Pi$.
%In fact, what happens under the hood is that we treat the paths alternating blue-red paths of $\cM^* \oplus \cM'$ according to their decreasing colored weight, where in step $i$, ^we treated path $\Pi_i$, and $\cM^*$ after step $i$ is increased by a value of $\sum_{j \le i} c(\Pi^j) \ge 0$. Moreover, the value of $\cM^*$ never goes down by more than $W$ ^compared to $\cM^*$ before treating $\Pi_i$.
%We assume henceforth that $\cM^* \oplus \cM'$ is a simple alternating blue-red path or cycle which is denoted by $\Pi$.

By definition of $i_{min}$ (via~(\ref{eq:imin}), within the proof Theorem~\ref{th:app}), we have
\begin {equation}
\label{eq:mat}
c(\Pi_{i+1}) = c(\Pi_i) + w(r_{i+1}) - w(b_{i+1})  \ge c(\Pi_i), \text{  for } i \ge i_{min}.
\end {equation}
Therefore, $w(r_i) \ge w(b_i) \text{  for } i > i_{min}.$

% When Procedure $ReplaceBlueRed$ is applied on either $\Pi, \Pi_{suf}$ or $\Pi_{pref}$,
It is possible that many consecutive blue edges $b_{i}$ along $\Pi$ are ``heavy'', i.e., of weight $>\e \cM$,
which is why we partition the transformation into phases, where the specific partition  depends on whether the specific sub-path that we deal with
 (either $\Pi, \Pi_{pref}$, or $\Pi_{suf})$) contains many consecutive heavy blue edges. The following analysis applies to Procedure $ReplaceBlueRed(\cM^*. \cM', \Pi)$,
 but
  the exact same analysis applies verbatim to Procedure $ReplaceBlueRed(\cM^*. \cM', \Pi_{suf})$ or Procedure $ReplaceBlueRed(\cM^*. \cM', \Pi_{pref})$, with the only difference being that the value of $i_{min}$ is set as 0 for the calls of Procedure $ReplaceBlueRed(\cM^*. \cM', \Pi)$ and Procedure $ReplaceBlueRed(\cM^*. \cM', \Pi_{pref})$, and is defined via~(\ref{eq:imin}) for the call to Procedure $ReplaceBlueRed(\cM^*. \cM', \Pi_{suf})$.

We distinguish between two cases.
\paragraph*{First case: $\forall i_{min} < i \le i_{min} + \frac 1 \e:  \ w(b_i) \ge \e w(\cM)$.\\}
In this case the number of edges in $\Pi$ is $\le \frac 2 \e + 1$, as there can be at most $\frac 1 \e$ edges of $\cM$ of weight greater or equal to $\e \cM$. We can thus perform Procedure $ReplaceBlueRed(\cM^*, \cM', \Pi)$ in a single phase of $O(\frac 1 \e)$ operations, following which the value of $\cM^*$ grows by $c(\Pi)>0$.
Clearly, the runtime is $O(|\Pi|)$.

\paragraph*{Second case: $\exists i_{min} < i_0 \le i_{min} + \frac 1 \e: \ w(b_{i_0}) < \e w(\cM)$.\\}
In this case we perform the first $i_0 - i_{min}$ iterations of the {\bf for} loop in Procedure $ReplaceBlueRed(\cM^*, \cM', \Pi)$ in one phase of $O(\frac 1 \e)$ operations. This deletes the blue edges from $\cM^*$ and adds the red edges of $\cM'$ in the sub-path $b_{i_{min}} \circ r_{i_{min}} \circ \cdots \circ b_{i_0-1} \circ r_{i_0 - 1}$ of $\Pi$, within one phase. At this point, we have deleted from $\cM^*$ the edges $b_{i_{min}}, \ldots, b_{i_0}$, and added to $\cM^*$ the edges $r_{i_{min}}, \ldots, r_{i_0-1}$. (Indeed, in step (i) of iteration $i_0-1$ of Procedure $ReplaceBlueRed(\cM^*, \cM', \Pi)$, the edge $b_{i_0}$ is deleted.)

The value of $\cM^*$ has thus changed by $c(\Pi_{i_0})-c(\Pi_{i_{min}}) - w(b_{i_0}) \ge - w(b_{i_0}) $, where the inequality follows by~(\ref{eq:mat}). Therefore, the value of $\cM^*$ decreased by at most $w(b_{i_0})$, which is smaller than $\e \cM$ by definition of $i_0$, and thus $\cM^*$ has weight at least $(1-\e)w(\cM)\}$
at the end of this phase.

We repeat this process recursively, treating the remaining edges in $\Pi$ with the same bifurcation into two cases,
thus maintaining the invariant that at the end of each phase throughout this transformation process, $\cM^*$ is a valid matching with $w(\cM^*) \ge (1-\e)w(\cM) \ge \max\{w(\cM)-W, (1-\e)w(\cM)\}$.
Here too the runtime is $O(|\Pi|)$.
\ignore{
In the first case, we proceed to perform the $ReplaceBlueRed(\cM^*, \cM', \Pi_{rem})$ in one batch for $O(\frac 1 \e)$ steps, and increase the value of $\cM^*$ by $c(\Pi)>0$, without further decreasing $\cM^*$. In the second case, we find $i_0 < i_1 \le i_0 + \frac 1 \e$ with $w(b(i_1)) < \e w(\cM)$, and continue to perform the next $i_1 - i_0 + 1$ iterations of (i), (ii) and (iii) in $ReplaceBlueRed(\cM^*, \cM', \Pi)$ in one batch of $O(\frac 1 \e)$ steps. This will delete the blue edges from $\cM^*$ and add the red edges of $\cM^*$ in $b_{i_0} \circ r_{i_0} \circ \cdots \circ b_{i_1-1} \circ r_{i_1 - 1}$  according to their appearance in $\Pi$, within one batch. At this point, since the beginning of the execution of $ReplaceBlueRed(\cM^*, \cM', \Pi)$, we have deleted from $\cM^*$ the edges $b_{i_{min}}, \ldots, b_{i_1}$ and added to $\cM^*$ the edges $r_{i_{min}}, \ldots, r_{i_1-1}$. The value of $\cM^*$ has increased (since the beginning of treating $\Pi$) by $c(\Pi_{i_1})-c(\Pi_{i_{min}}) - w(b_{i_1}) \ge - w(b_{i_1})$, where the inequality again follows by~(\ref{eq:mat}). Therefore, the value of $\cM^*$ decreased by at most $w(b_{i_1}) < \min\{\e \cM, W\}$, and we can repeat this partition into batches of size $O(\frac 1 \e)$ of the transformation in $ReplaceBlueRed(\cM^*, \cM', \Pi)$.}

\paragraph* {Runtime analysis.}
%Nonetheless, to complete the runtime analysis, 
%we next provide some details on minor points that we ignored above. %, which can be handled in the obvious way.
We partition the edges of $H = \cM \cup \cM'$ into alternating blue-red simple paths and cycles, and order them so that the paths and cycles of positive colored weight appear first, which can be easily done in time $O(|\cM| +|\cM'|)$. The treatment of each alternating path or cycle $\Pi$ in $H$ is done by making a call to Procedure $ReplaceBlueRed(\cM^*, \cM', \Pi)$, where we partition the resulting transformation into batches of size $O(\frac 1 \eps)$ each;  summing over all paths in $H$, this can also be easily done in time $O(|\cM| +|\cM'|)$. 
The total time required for handling the good edges of $\cM'$ throughout the algorithm is $O(|\cM| +|\cM'|)$, hence the running time of the transformation   is $O(|\cM| +|\cM'|)$.

\vspace{6pt}
The proof of Theorem~\ref{th:main} follows.
\QED
\end {proof}

\begin {remark}
Recall that in the unweighted case, when $|\cM| < |\cM'|$, it was possible to gradually transform $\cM$ to $\cM'$ without ever being in deficit compared go the initial value of $\cM$, i.e., $|\cM^*| \ge |\cM|$ throughout the transformation process.
However, if $|\cM'| \le |\cM|$, we showed in App.\ \ref{tightnessun}
%in Remark~\ref{re:opt} 
that this is no longer the case and there are cases in which we are always at a deficit of at least one unit until the very end of the transformation process. 
%by taking $\cM$ and $\cM'$ such that $H = \cM \oplus \cM'$ is a simple alternating cycle of any positive length. Throughout any transformation process and until treating the last edge of the cycle, it must be that $|\cM^*| < |\cM|$, that is, there is always a deficit of at least one unit until the very end.
In the weighted case, there are examples where $w(\cM') > w(\cM)$, yet in every gradual transformation of $\cM$ to $\cM'$, we have $w(\cM^*) < w(\cM)$ throughout the transformation. 
The deficit 
%Quantifying this deficit 
throughout the process is more subtle to quantify, and this was
discussed in detail in App.\ \ref{tightnesswe}.
 %As an example, for parameters $k>0,\delta, A > 0$, consider $\cM$ and $\cM'$ so that $H = \cM \oplus \cM'$ is a simple %alternating cycle of length $k$, so that the edges in $\cM$ (blue edges in the above terminology) all have the same weight %$A$ and the edges in $\cM'$ (red edges) all have the same weight $A+\Delta$. It is easy to verify that for any %transformation from $\cM$ to $\cM'$, we are always in a deficit of at least $A-k\Delta$ compared to the initial value of %$w(\cM)$ throughout the transformation. Therefore, by choosing $k, \delta$ and $A$ appropriately\footnote{Choose any values %of $k, \delta$ and $A$ so that $k\delta \ll A$, and $1/\e \ll k$ will do,}, the deficit over the initial value $w(\cM)$ is %at least $w(\cM')/2|M|$ throughout the entire transformation process, even when allowing to partition the gradual %transformation into phases of size $1/\e$.
\end {remark}

\section{Proof of Lemma \ref{lazylemma2}} \label{app:lazy}
%\begin{proof}
Write $k = \lfloor \eps'\cdot |\cM_t| \rfloor$, and let $k_{ins}$ and $k_{del}$ denote the number of (edge or vertex) insertions and deletions that occur during the $k$ updates $t+1,\ldots,t+k$,
respectively, where $k = k_{ins} + k_{del}$. We have $|\cM^{OPT}_t| \le \beta \cdot |\cM_t|$, where $\cM^{OPT}_i$ is a maximum matching for $G_i$, for all $i$.
Since each  insertion may increase the size of the MCM by at most 1, we have
$$|\cM^{OPT}_i| ~\le~ |\cM^{OPT}_t| + k_{ins} ~\le~ \beta \cdot |\cM_t| + k ~\le~ \beta \cdot |\cM_t| (1 + \eps').$$
Also, each   deletion may remove at most one edge from $\cM_t$, hence
$$|\cM^{(i)}_t| ~\ge~ |\cM_t| - k_{del} ~\ge~ |\cM_t| - k ~\ge~ |\cM_t| - \eps' \cdot |\cM_t|  ~=~ |\cM_t| (1- \eps').$$
%Note also that $k_{del} \le k \le \eps' \cdot |\cM_i| \rfloor
It follows that $$\frac{|\cM^{OPT}_i|}{|\cM^{(i)}_t|} ~\le~
%\frac{\beta \cdot |\cM_i| + k_{ins}}{|\cM_i| - k_{del}}
%~=~ \frac{\beta \cdot |\cM_i| + k}{|\cM_i| - k}
\frac{\beta \cdot |\cM_t| (1 + \eps')}{|\cM_t| (1- \eps')} ~\le~ \beta(1 + 2\eps'), $$
%\\ &\le& \beta+\frac{\eps' \cdot |\cM_i|}{(1-\eps') \cdot |\cM_i|} ~\le~ \beta + 2\eps'.
where the last inequality holds for all $\eps' \le 1/2$. The lemma follows.
%\QED
%\end{proof}

\section{Further details on the scheme of \cite{GP13}} \label{amort}
The key insight behind the scheme of \cite{GP13} and of its generalization \cite{PS16} is not to compute the approximate matching on the entire graph, but rather on a \emph{matching sparsifier}, which is a sparse subgraph $\tilde G$ of the entire graph $G$ that preserves the maximum matching size to within a factor of $1+\eps$.
The matching sparsifier of \cite{GP13, PS16} is derived from a constant approximate minimum vertex cover that is maintained dynamically \emph{by other means}.
We will not describe here the manners in which a constant approximate minimum vertex cover is maintained and the sparsifier is computed on top of it; the interested reader can refer to \cite{GP13,PS16} for details. The bottom-line of \cite{GP13,PS16} is that for graphs with arboricity bounded by $\alpha$, for any $1 \le \alpha = O(\sqrt{m})$,
%: (1) a constant-approximate vertex cover can be maintained with a worst-case update time of $O(\alpha / \eps)$, and (2)
the matching sparsifier $\tilde G$ of \cite{PS16} has $O(|\cM| \cdot \alpha/\eps)$ edges, and it can be computed in time linear in its size.
(The scheme of \cite{PS16} generalizes that of \cite{GP13}, hence we might as well restrict attention to \cite{PS16}.)
An $(1+O(\eps))$-MCM can be computed for the sparsifier $\tilde G$ in time $O(|\tilde G| / \eps) = O(|\cM| \cdot \alpha/\eps^2)$ \cite{HK73,MV80,Vaz12},
and assuming the constant hiding in the $O$-notation is sufficiently small, it provides a $(1+\eps/4)$-MCM for the entire graph.
Since the cost $O(|\cM| \cdot \alpha/\eps^2)$ of this static computation is amortized over  $\Omega(\eps \cdot |\cM|)$ update steps,
the resulting amortized update time is $O(\alpha \cdot \eps^{-3})$.
As shown in \cite{PS16}, one can shave a $1/\eps$-factor from this update time bound, reducing it to $O(\alpha \cdot \eps^{-2})$,
but the details of this improvement lie outside the scope of this short overview of the scheme of \cite{GP13,PS16}.

\section{Discussion and Open Problems} \label{discuss}
This paper introduces a natural generalization of the MRP, and provides near-optimal transformations to the problems of
maximum cardinality matching and maximum weight matching.

One application of this meta-problem is to dynamic graph algorithms.
In particular, by building on our transformation for maximum cardinality matching we have shown that any algorithm for maintaining a $\beta$-MCM can be transformed into
an algorithm for maintaining a $\beta(1+\eps)$-MCM with essentially the same update time as that of the original algorithm and with a worst-case recourse bound of $O(1/\eps)$, for any $\beta \ge 1$ and $\eps>0$. This recourse bound is optimal for the regime $\beta = 1+\eps$.
We also extended this result for weighted matchings, but there is a linear dependence on the aspect-ratio of the graph
in the update time and recourse bounds. 
It would be interesting to improve this dependency to be polylogaritmic in the aspect-ratio.

It would be interesting to study additional basic graph problems under this generalized framework. Although our positive results may lead to the impression that
there exists an efficient gradual transformation process to any optimization graph problem, we conclude with a sketch of two trivial hardness results.
%Can our result be generalized for \emph{weighted} graphs?
%A trivial transformation process for weighted graphs ensures that the approximation guarantee of the transformed matching do not exceed
%the original approximation by more than a factor of 2; can one do better?
%This approach of local improvements is inadequate in some situations.

For the \emph{maximum independent set problem} any gradual transformation process cannot provide any nontrivial approximation guarantee,
regardless of the approximation guarantees of the source and target independent sets.  % is arbitrarily close to 1.
To see this, denote the source approximate maximum independent set (the one we start from) by $\cS$ and the target approximate maximum independent set (the one we gradually transform into) by $\cS'$,
and suppose there is a complete bipartite graph between $\cS$ and $\cS'$.
Since we cannot add even a single vertex of $\cS'$ to the output independent set $\cS^*$ (which is initialized as $\cS$) before removing from it all vertices of $\cS$
and assuming each step of the transformation process makes only $\Delta$ changes to $\cS^*$,
%the size of the output set must reduce to a constant at some moment throughout
the approximation guarantee of the output independent set
%while the size of the maximum independent set is larger than $|V_2$,
%i.e., the approximation guarantee
must reach $\Omega(|\cS'|/ \Delta)$ at some moment throughout the transformation process. In other words, the approximation guarantee may be arbitrarily large.

As another example, an analogous argument shows that for the \emph{minimum vertex cover problem}, any gradual transformation process cannot provide an approximation guarantee better than
$\frac{|\cC| + |\cC'|}{|\cC'|} > 2$, where $\cC$ and $\cC'$ are the source and target vertex covers, respectively.
%(We assume that $|\cC'| < |\cC|$, otherwise there is no point in performing the transformation process.)
%$2\beta(1+\Theta(\eps))$, where the number of steps of the transformation process is $\Theta(\eps \cdot |V_2|)$.
%Indeed, consider again a bipartite graph between the start and target covers $V_1$ and $V_2$,
On the other hand, one can easily see that the approximation guarantee throughout the entire transformation process does not exceed $\frac{|\cC| + |\cC'|}{|\cC^{OPT}|}$,
%\le \beta \cdot \left(\frac{|C| + |C'|}{|C'|}\right)$,
where $\cC^{OPT}$ is a minimum vertex cover for the graph,
%$2\beta(1+\Theta(\eps))$
by gradually adding all vertices of the target vertex cover $\cC'$ to the output vertex cover $\cC^*$ (which is initialized as $\cC$), and later gradually removing the vertices of $\cC$ from the output vertex cover $\cC^*$.
%In particular, in the regime $\beta = 1+\eps$, we obtain an approximation guarantee of $2+\eps$.
%(For the fully dynamic setting, we should make sure to add to the output vertex cover $\cC^*$ all vertices that are being updated by edge or vertex insertions
%throughout the time window of the transformation process, which should be of length $O(\eps \cdot |\cC^{OPT}|)$, for
%$\cC^{OPT}$ being a minimum vertex cover for the graph at the beginning of the time window; assuming we perform $O(\beta/ \eps)$ replacements to $\cC^*$ following every update %step, this will only add a multiplicative factor of $(1+\eps)$ to the approximation guarantee.)

These examples demonstrate a basic limitation of the our generalized framework, and suggest that further research of this framework is required. One interesting direction for further research is studying the maximum independent set and minimum vertex cover problems for bounded degree graphs; note that the trivial hardness results mentioned above do not apply directly to bounded degree graphs.
%It would be interesting to study thoroughly the inherent limitations of this approach,
More generally, studying additional combinatorial optimization problems under this framework may contribute to a deeper understanding of 
its inherent limitations and strengths, and in particular,
%for unveiling the inherent limitations of this framework and for 
to finding additional applications of this framework, possibly outside the area of dynamic matching algorithms.
%both to other combinatorial optimization problems and 

\end{document}